\titlespacing{\section}{0pt}{3ex}{2ex}
\titlespacing{\subsection}{0pt}{2ex}{1ex}
\titlespacing{\subsubsection}{0pt}{0.5ex}{0ex}
\newtheorem{theorem}{Theorem}[section]
\newskip\theorempreskipamount
\newskip\theorempostskipamount
\newenvironment{reminder}[1]{\smallskip
\noindent {\bf Reminder of #1  }\em}{}
\newenvironment{proof}{\noindent {\bf Proof.  }}{\hfill$\Box$}
\newenvironment{proofof}[1]{\smallskip
\noindent {\bf Proof of #1.  }}{\hfill$\Box$}
\newtheorem{definition}{Definition}[section]
\newtheorem{lemma}{Lemma}[section]
\def \Z {{\mathbb Z}}
\def \R {{\mathbb R}}
\def \AC {{\sf AC}}
\def \poly {{\text{poly}}}
\def \P {{\sf P}}
\def \MAJ {{\sf MAJ}}
\def \SYM {{\sf SYM}}
\def \XOR {{\sf XOR}}
\def \AND {{\sf AND}}
\def \OR {{\sf OR}}
\def \THR {{\sf THR}}
\def \ACC {{\sf ACC}}
\def \NP {{\sf NP}}
\def \N {{\mathbb N}}
\def \F {{\mathbb F}}
\def \NEXP {{\sf NEXP}}
\def \eps {{\varepsilon}}
\title{New algorithms and lower bounds for circuits\\ with linear threshold gates}
\author{Ryan Williams\thanks{Supported by an Alfred P. Sloan Fellowship, a Microsoft Research Faculty Fellowship, a David Morgenthaler II Faculty Fellowship, and NSF CCF-1212372. Any opinions, findings, and conclusions or recommendations
expressed in this material are those of the author(s) and do not necessarily reflect the views of the National Science Foundation.}\\Stanford University}
\begin{document}
\maketitle

\begin{abstract} Let $\ACC \circ \THR$ be the class of constant-depth circuits comprised of  AND, OR, and MOD$m$ gates (for some constant $m > 1$), with a bottom layer of gates computing arbitrary linear threshold functions. This  class of circuits can be seen as a ``midpoint'' between $\ACC$ (where we know nontrivial lower bounds) and depth-two linear threshold circuits (where nontrivial lower bounds remain open).

We give an algorithm for evaluating an arbitrary symmetric function of $2^{n^{o(1)}}$ $\ACC \circ \THR$ circuits of size $2^{n^{o(1)}}$, on all possible inputs, in $2^n \cdot \poly(n)$ time. Several consequences are derived:

\begin{itemize}

\item The number of satisfying assignments to an $\ACC\circ\THR$ circuit of subexponential size can be computed in $2^{n-n^{\eps}}$ time (where $\eps > 0$ depends on the depth and modulus of the circuit). 

\item $\NEXP$ does not have quasi-polynomial size $\ACC \circ \THR$ circuits, and $\NEXP$ does not have quasi-polynomial size $\ACC \circ \SYM$ circuits. Nontrivial size lower bounds were not known even for ${\sf AND} \circ {\sf OR} \circ \THR$ circuits.

\item Every 0-1 integer linear program with $n$ Boolean variables and $s$ linear constraints is solvable in $2^{n-\Omega(n/((\log M)(\log s)^{5}))}\cdot \poly(s,n,M)$ time with high probability, where $M$ upper bounds the bit complexity of the coefficients. (For example, 0-1 integer programs with weights in $[-2^{\poly(n)},2^{\poly(n)}]$ and $\poly(n)$ constraints can be solved in $2^{n-\Omega(n/\log^6 n)}$ time.) Impagliazzo, Paturi, and Schneider~\cite{IPS13} recently gave an algorithm for $\tilde{O}(n)$ constraints; ours is the first asymptotic improvement over exhaustive search for for up to subexponentially many constraints.

\end{itemize}

We also present an algorithm for evaluating depth-two linear threshold circuits (a.k.a., $\THR \circ \THR$) with exponential weights and $2^{n/24}$ size on all $2^n$ input assignments, running in $2^n \cdot \poly(n)$ time. This is evidence that non-uniform lower bounds for $\THR \circ \THR$ are within reach. 

\end{abstract}

\thispagestyle{empty}
\newpage
\setcounter{page}{1}

\section{Introduction}

Recall that in the non-uniform Boolean circuit model, one designs an infinite family of logical circuits $\{C_n\}$, one for each input length $n$, in order to recognize a given binary language $L \subseteq \{0,1\}^{\star}$. This model is notoriously powerful, even when the size of $C_n$ is bounded from above by a fixed polynomial in $n$, defining the complexity class $\P/\poly$. With polynomial size circuits, one can already ``compute'' some undecidable languages, such as $L'=\{1^n~|~\text{the $n$th Turing machine halts on blank tape}\}$. 
Nevertheless, it is strongly believed that $\NP \not\subset \P/\poly$, meaning that for even modestly-sized instances of $\NP$-complete problems, the sizes of \emph{computations} on such instances must be inevitably gigantic. However, knowledge of $\P/\poly$ is rather poor, due to the ``infinite'' nature of the model: it is open if the huge complexity class \emph{nondeterministic exponential time} ($\NEXP$) is contained in $\P/\poly$. This containment would imply that problems verifiable with exponentially-long witnesses could be efficiently ``solved'' with small circuits. It looks obviously absurd; how can we rule it out?

In recent years, it has been demonstrated that the existence of nontrivial circuit-analysis algorithms is closely linked to the $\NEXP$ versus $\P/\poly$ problem. For instance, Impagliazzo, Kabanets, and Wigderson~\cite{IKW} showed that $\NEXP \not\subset \P/\poly$ follows, if there is a $2^{n^{o(1)}}$ time algorithm that can approximate a given circuit's acceptance probability to within $1/10$. They also proved a partial converse, in that $\NEXP \not\subset \P/\poly$ implies a certain kind of derandomization. Subsequent work~\cite{Williams10} strengthened the algorithms-to-lower bounds implication, proving that a similar algorithm which (for every $k$) runs in $2^{n-\omega(\log n)}$ time on all $n$-input $n^k$-size circuits still implies $\NEXP \not\subset \P/\poly$. A variant of this implication (for circuit satisfiability algorithms) was combined with an satisfiability algorithm for a restricted circuit class called $\ACC$, implying that $\NEXP$ does not have polynomial-size $\ACC$ circuits~\cite{Williams11}. Recently, it was shown that $\NEXP \not\subset \P/\poly$ is equivalent to establishing a ``weak'' form of natural proofs~\cite{WilliamsSTOC13}, building on Impagliazzo et al.\footnote{In particular, $\NEXP \not\subset \P/\poly$ if and only if there is a ``constructive'' property of Boolean functions that is ``useful'' against $\P/\poly$. The natural proofs barrier~\cite{RazborovRudich97} states that if such a property is also ``large'' (true of a large fraction of functions) then strong cryptographic pseudorandom generators do not exist. Hence, assuming strong crypto, $\NEXP$ lower bounds must somehow confront the framework of natural proofs but sidestep the ``large'' condition.}

To continue progress on circuit lower bounds for $\NEXP$, it is imperative to understand algorithms for analyzing circuits, such as algorithms for circuit satisfiability, evaluating a circuit on all $2^n$ inputs, and approximating the acceptance probability of a circuit.\footnote{Recent surveys on these issues include~\cite{Williams11survey,Santhanam12,Cohen13,Oliveira13}.} In this paper, we make this sort of algorithmic progress for circuits with arbitrary \emph{linear threshold} gates: such a gate outputs $1$ if and only if a certain linear inequality $\sum_i w_i x_i \geq t$ is true, where $w_i, t \in \Z$ are \emph{weights} and $x_i \in \{0,1\}$ are inputs to the gate. Linear threshold functions have been studied for decades, coinciding with research on neural networks~\cite{Minsky-Papert69,Muroga71}. Low-depth linear threshold circuits are powerful: many basic functions in arithmetic, algebra, and cryptography are known to be implementable with only \emph{constant-depth} linear threshold circuits~\cite{Reif-Tate92,Siu-Bruck93,Siu-R94,Maciel-Therien99,Naor-Reingold04}. In terms of lower bounds for such circuits, very weak questions remain major open problems: for example, is all of $\NEXP$ solvable with polynomial-size depth-\emph{two} linear threshold circuits with exponential-size weights?\footnote{Note that for thresholds with polynomially-bounded weights, depth-two lower bounds are known; however depth-three lower bounds are still open. The survey of Razborov~\cite{Razborov92} is still relatively current on these points.} Depth-two circuits correspond to \emph{multilayer perceptrons} with only one hidden layer. Despite considerable study in neural networks and deep learning, we still lack understanding of the power of depth-two.

In this paper, we report some new progress on understanding the power of linear threshold gates.

\paragraph{Algorithms and lower bounds for ACC with threshold gates} Let $\ACC \circ \THR$ denote the class of circuits consisting of AND, OR, MOD$m$ gates for some constant $m$,\footnote{A MOD$m$ gate outputs $1$ if and only if the sum of its input bits is divisible by $m$.} and linear threshold gates, with unbounded fan-in and constant depth, such that the inputs of all linear threshold gates connect directly to the circuit's input variables. Let $\SYM \circ \ACC \circ \THR$ be the class of circuits where the output gate computes an arbitrary symmetric function, and its inputs connect to the outputs of $\ACC\circ \THR$ circuits. We show that such circuits can very efficiently evaluated on all $2^n$ inputs, even if they are of $2^{n^{o(1)}}$ size.

\begin{theorem} \label{evalaccthr} Given a $\SYM \circ \ACC \circ \THR$ circuit with $n$ inputs and $2^{n^{o(1)}}$ size, we can produce its outputs on all $2^n$ inputs in $2^n \cdot \poly(n)$ time.

More generally, such a circuit of size $s$ can be evaluated on all inputs in $2^n \cdot \poly(\log s, n) + 2^{O(\log s)^c}$ time, for some $c \geq 1$ depending on the depth of the circuit and the modulus $m$ of its MOD$m$ gates.
\end{theorem}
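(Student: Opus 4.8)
The plan is to follow the ``polynomial method plus fast rectangular matrix multiplication'' strategy behind the $\ACC$ satisfiability algorithm, extended to absorb the exponential-weight $\THR$ layer at the bottom. Split the $n$ inputs into two halves, $x=(y,z)$ with $|y|=|z|=n/2$, and let $N=2^{n/2}$. The goal is to rewrite the circuit's output as $C(y,z)=\Phi\!\big(\sum_{k=1}^{K}A[y,k]\cdot B[k,z]\big)$, where $\Phi\colon\Z\to\{0,1\}$ is some function, $A$ is an $N\times K$ integer matrix and $B$ a $K\times N$ integer matrix whose entries are each computable in $\poly(n,\log s)$ time from their indices, and the inner dimension is $K=2^{n^{o(1)}}$ (more generally $K=2^{O(\log s)^{c}}$). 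Given such a representation, split the sum into $\lceil K/N^{0.1}\rceil$ blocks, compute each block's $N\times N^{0.1}$ by $N^{0.1}\times N$ product in $N^{2}\cdot\poly(\log N)$ time by Coppersmith's rectangular matrix multiplication, add the results, and apply $\Phi$ entrywise; this outputs $C$ on all $2^{n}$ inputs in $2^{n}\cdot\poly(n,\log s)+\poly(K)$ time, matching the claimed bound.

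To construct the representation, first remove everything above the $\THR$ layer. Regard the at most $s$ bottom $\THR$ gate outputs as fresh Boolean variables $u_1,\dots,u_M$; the $\SYM\circ\ACC$ subcircuit acting on them is, by the Yao--Beigel--Tarui conversion of $\ACC$ into a symmetric function of a sum of short conjunctions, equal to $\Phi\!\big(\sum_{j=1}^{s'}a_j\cdot\bigwedge_{g\in S_j}\ell_{j,g}(u)\big)$ for an arbitrary $\Phi$ on a range of size $2^{(\log s)^{O(d)}}$, literals $\ell_{j,g}$ over the $u_g$, conjunction sizes $|S_j|\le D:=(\log s)^{O(d)}$, integer coefficients $a_j$ of magnitude $2^{(\log s)^{O(d)}}$, and $s'\le 2^{(\log s)^{O(d)}}$ (the outer $\SYM$ gate is folded into $\Phi$ in the same step). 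For $s=2^{n^{o(1)}}$ and constant $d$, both $D$ and $\log s'$ are $n^{o(1)}$. Putting the $\THR$ gates back, each $u_g$ is the value of a linear threshold function of $x$, so the task becomes: write $\sum_{j}a_j\big(\text{AND of at most }D\text{ linear-threshold constraints on }x\big)$ as a bilinear form of inner dimension $2^{n^{o(1)}}$.

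The $\THR$ layer itself is the crux, and it poses two difficulties: the weights can be as large as $2^{\Theta(n\log n)}$, and we face \emph{conjunctions} of thresholds, not single thresholds. The weights are tamed by writing each form as $L_g(x)=L_g^Y(y)+L_g^Z(z)$, precomputing all $N$ partial sums on each side for every gate (one-time cost $N\cdot s\cdot\poly(n)$), and sorting the $z$-side values, so that gate $g$'s output on $(y,z)$ is a comparison $[\,\mathrm{rank}_g(z)>r_g(y)\,]$ between integers in $\{0,\dots,N\}$; alternatively one may first replace each $\THR$ gate by the classical polynomial-size depth-two $\MAJ\circ\MAJ$ circuit computing it and then split each majority gate over the two halves, yielding a $\poly(n)$-term bilinear form in $(y,z)$-indicators whose value lies in a $\poly(n)$-size range. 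Either way, a single threshold is now cheap. The genuinely hard point is the AND: a conjunction of $D$ of these ``staircase'' comparison predicates corresponds to a Hadamard product of $D$ full-rank comparison matrices, so there is no short \emph{exact} real bilinear decomposition of an individual conjunction, and naive per-conjunction enumeration costs a factor of $s'=2^{n^{o(1)}}$ too much. The plan is to exploit that $D=n^{o(1)}$ is small: partition the $N$ sorted $z$-values of each gate into $\poly(n)$ consecutive blocks; branch over the block index of each of the $\le D$ comparisons (each comparison is then determined, unless both ranks fall in the same block, in which case recurse into sub-blocks to $O(\log N)$ depth); in each leaf the conjunction collapses to a single product $f(y)h(z)$ of a $y$-indicator and a $z$-indicator. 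This multiplies the number of product terms per conjunction by only $2^{n^{o(1)}}$, which is absorbed into $K=\sum_j K_j=2^{n^{o(1)}}$; Chinese remaindering (carrying all quantities modulo $\poly(n)$ small primes) keeps every intermediate integer of $\poly(n)$ bit length, and the cost of writing down $A$ and $B$ contributes the additive $2^{O(\log s)^{c}}$ term. The main obstacle is making this branching construction simultaneously fit the $K\le N^{0.1}$ budget demanded by Coppersmith's algorithm, keep $A$ and $B$ entrywise-efficient, and control integer sizes --- the remainder is assembly.
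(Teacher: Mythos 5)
Your overall architecture (split the inputs in half, reduce to a bilinear form $\Phi(\sum_k A[y,k]B[k,z])$ of inner dimension $2^{n^{o(1)}}$, then apply Coppersmith's rectangular matrix multiplication) is exactly the paper's, and your use of Beigel--Tarui on the $\SYM\circ\ACC$ part is also right. The gap is in the one step you yourself flag as the crux: decomposing the AND of $D$ threshold comparisons. Your recursive bucketing does not keep $K$ small. Count the leaves: with $P=\poly(n)$ blocks per level, each level contributes at least $P$ branches per comparison in which the two ranks collide and you must recurse, and the recursion must descend to depth $\Theta(\log N/\log P)=\Theta(n/\log n)$ before blocks become singletons. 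Already for $D=1$ this yields $T(N)\geq P\cdot T(N/P)$, i.e.\ $\Omega(N)=2^{\Omega(n)}$ product terms for a \emph{single} comparison --- which is unavoidable, since (as you note) a staircase comparison matrix has full rank and admits no short exact bilinear decomposition. The trick that saves the day in the paper's depth-two threshold algorithm (Theorem 3.1 there) is to handle same-bucket pairs by \emph{brute force} rather than recursion, which works only because the top gate is a \emph{linear} form in single comparisons; it does not extend to conjunctions of comparisons sitting inside a symmetric function. So as written, $K$ is $N^{\Omega(1)}$ rather than $2^{n^{o(1)}}$, and the claimed running time fails.

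The paper avoids this entirely by never decomposing a large-weight threshold across the two halves. It first replaces each $\THR$ gate (after reducing its weights to $O(n\log n)$ bits via Muroga--Toda--Takasu) by a polynomial-size $\AC^0\circ\SYM$ circuit using Maciel--Therien's circuits for iterated addition, folds the new $\AC^0$ layers into the $\ACC$ part, and only then applies Beigel--Tarui, obtaining $\SYM\circ\AND_t\circ\SYM$ where every bottom $\SYM$ gate has only $\poly(n)$ wires. The conjunction problem is then solved by Beigel's base-$B$ encoding: an $\AND_t$ of symmetric gates with $n^{O(1)}$ wires is a single symmetric gate with $n^{O(t)}$ wires, because the $t$ wire-counts can be packed into disjoint digit positions of one integer sum. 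The resulting $\SYM\circ\SYM$ circuit with $2^{(\log s)^c}$ wires has an immediate bilinear decomposition of inner dimension $O(t^3)$: for each bottom gate one enumerates the pair (contribution from the first half, contribution from the second half), both of which range over only $t+1$ values. Your parenthetical alternative of substituting $\MAJ\circ\MAJ$ circuits for the thresholds is in the right spirit, but you apply it after Beigel--Tarui rather than before, so you are still left with conjunctions of depth-two circuits and no collapse step. To repair the proof you should perform the weight reduction and the $\AND\circ\SYM\to\SYM$ collapse before any matrix decomposition, rather than trying to decompose exponential-weight comparisons directly.
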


The proof of Theorem~\ref{evalaccthr} also carries through for $\SYM \circ \ACC \circ \SYM$, where the bottom layer gates compute arbitrary symmetric functions (i.e., functions which only depend on the number of true inputs) of $2^{n^{o(1)}}$ wires. This algorithm can be used to \emph{count} the number of satisfying assignments to $\ACC\circ\THR$ circuits.

\begin{theorem}\label{COUNTaccthr} For every integer $m > 1$ and $d > 0$, there is an $\eps > 0$ such that counting satisfying assignments to $\ACC\circ\THR$ circuits of size $2^{n^{\eps}}$, depth $d$, and MOD$m$ gates can be done in $2^{n-n^{\eps}}$ time.
\end{theorem}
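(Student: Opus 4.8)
The plan is to derive Theorem~\ref{COUNTaccthr} from Theorem~\ref{evalaccthr} via the standard ``divide the input into two halves'' trick, combined with a counting gadget built from a symmetric output gate. Suppose we are given an $\ACC\circ\THR$ circuit $C$ of size $s = 2^{n^{\eps}}$, depth $d$, with MOD$m$ gates, and we wish to compute $|\{x \in \{0,1\}^n : C(x) = 1\}|$. Fix a parameter $t$ (to be chosen as a small polynomial in $n^{\eps}$) and split the variables as $x = (y,z)$ with $|y| = n - t$ and $|z| = t$. For each fixed assignment $a \in \{0,1\}^t$ to $z$, the circuit $C(\cdot, a)$ is an $\ACC\circ\THR$ circuit on $n-t$ inputs of the same size and depth (hardwiring $z=a$ only fixes some inputs to the bottom threshold gates, which remain threshold gates).

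The key idea is to build, for each fixed $a$, a single $\SYM \circ \ACC \circ \THR$ circuit $D_a$ on $n-t$ inputs whose output on $y$ equals $C(y,a)$, and then observe that what we really want is, for \emph{each} $y$, the count $\sum_{a \in \{0,1\}^t} C(y,a)$. To get this count in one shot, I would instead form a $\SYM \circ \ACC \circ \THR$ circuit $E$ on $n - t$ inputs that takes the $2^t$ subcircuits $C(\cdot, a)$ (one for each $a\in\{0,1\}^t$) as inputs to a symmetric ``summation'' gate $g \colon \{0,1\}^{2^t} \to \{0,1,\dots,2^t\}$ that simply outputs the number of true inputs. Here we need $2^t \le 2^{(n-t)^{o(1)}}$ so that Theorem~\ref{evalaccthr} applies; choosing $t = n^{\eps'}$ for a suitable $\eps' < 1$ depending on $\eps$ makes the combined circuit have size $2^{(n-t)^{o(1)}}$ and depth $d+1$. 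By Theorem~\ref{evalaccthr}, we can evaluate $E$ on all $2^{n-t}$ assignments to $y$ in $2^{n-t}\cdot\poly(n)$ time, obtaining for every $y$ the number $N(y) := \sum_{a} C(y,a)$. Summing $N(y)$ over all $y$ gives $\sum_{y,a} C(y,a) = |C^{-1}(1)|$, in total time $2^{n-t}\cdot\poly(n) + 2^{O((n-t)^{\eps})} = 2^{n - n^{\eps'}}\cdot\poly(n)$, which is $2^{n - n^{\delta}}$ for an appropriate $\delta > 0$ once we also fold in the $2^{O(\log s)^c} = 2^{O(n^{\eps c})}$ additive term from the ``more general'' bound in Theorem~\ref{evalaccthr} by taking $\eps$ small enough that $\eps c < \eps'$.

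The main technical point to be careful about — and the step I expect to require the most attention — is the bookkeeping of parameters: we must pick $\eps$ (the exponent in the circuit size $2^{n^\eps}$) and $t = n^{\eps'}$ so that simultaneously (i) $2^t \le 2^{(n-t)^{o(1)}}$, i.e. $\eps' < 1$, so the symmetric gate has subexponentially many inputs relative to the new input length $n-t$; (ii) the total size of $E$, which is roughly $2^t \cdot s = 2^{n^{\eps'}+n^{\eps}}$, is still $2^{(n-t)^{o(1)}}$, forcing $\max(\eps,\eps') < 1$; and (iii) the additive overhead $2^{O(\log(2^t s))^{c}} = 2^{O(n^{\eps' c})}$ from Theorem~\ref{evalaccthr} (with $c$ depending on $d+1$ and $m$) is dominated by the savings, i.e. $\eps' c$ is bounded below $1$ and also the runtime $2^{n-t}$ dominates it; this is where the dependence of the final $\eps$ in Theorem~\ref{COUNTaccthr} on $m$ and $d$ enters. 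Once these inequalities are arranged — which just amounts to choosing $\eps'$ small enough and then $\eps$ even smaller — the argument goes through. Everything else (hardwiring $z=a$, the symmetric summation gate, summing over $y$) is routine, and there is no real obstacle beyond this parameter balancing.
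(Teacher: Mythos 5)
Your approach is essentially the same as the paper's: fix the last $t$ variables in all $2^t$ ways, feed the resulting copies of $C$ into a symmetric gate that counts how many are satisfied, evaluate on all $2^{n-t}$ remaining assignments via Theorem~\ref{evalaccthr}, and sum; the parameter bookkeeping you describe matches the paper's choice $t = n^{\eps}$. The one imprecision is that your ``summation gate'' $g\colon\{0,1\}^{2^t}\to\{0,1,\dots,2^t\}$ is not a $\SYM$ gate as the paper uses it --- Theorem~\ref{evalaccthr} is for Boolean-output symmetric gates --- so it cannot be invoked on $E$ as written. The paper's fix is exactly what you need: replace $g$ by the $2\ell$ Boolean symmetric functions $\text{Bit}^{\ell}_i$ (the $i$th bit of the number of true inputs) and run the evaluation algorithm once per bit, which costs only a $\poly(n)$ factor and changes nothing in your parameter analysis.
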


By modifying prior arguments~\cite{Williams11}, we can conclude lower bounds for such circuits. The new argument shows that the ability to count SAT assignments entails non-uniform lower bounds for circuit classes with very weak closure properties.

\begin{theorem} \label{nexpaccthr} $\NEXP$ does not have non-uniform $\ACC \circ \THR$ circuits of quasi-polynomial size. 
\end{theorem}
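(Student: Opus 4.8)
The plan is to follow the now-standard ``algorithms imply lower bounds'' template of \cite{Williams10,Williams11}, instantiated with the counting algorithm of Theorem~\ref{COUNTaccthr} as the new ingredient. Suppose for contradiction that $\NEXP$ has $\ACC\circ\THR$ circuits of quasi-polynomial size. The skeleton is: (1) a collapse step, where the circuit hypothesis is bootstrapped into a statement about succinct witnesses, and (2) a ``too-good-to-be-true'' nondeterministic simulation of $\NEXP$ that, combined with the nondeterministic time hierarchy theorem, yields a contradiction. For step (1), I would invoke the Impagliazzo--Kabanets--Wigderson machinery \cite{IKW}: if $\NEXP \subseteq \P/\poly$ (here, in the weaker sense of quasi-polynomial $\ACC\circ\THR$ circuits, which certainly lies in $\P/\poly$), then every language in $\NEXP$ has witnesses that are themselves describable by small circuits from the class --- more precisely, by a Tradeoff/Easy-Witness lemma, satisfiable instances of any $\NEXP$ verifier have witness strings whose truth-tables are computed by quasi-polynomial-size $\ACC\circ\THR$ circuits. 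This is exactly the place where one needs the circuit class to be rich enough to express the witness circuits, and $\ACC\circ\THR$ contains $\AC^0[m]$ and hence is closed under the composition operations the easy-witness argument requires.

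For step (2), given an arbitrary language $L$ accepted by a nondeterministic machine in time $2^n$, I would build a faster nondeterministic algorithm for $L$ as follows. Reduce an input $x$ of length $n$ to a succinct-SAT-style instance: a circuit $C_x$ of size $2^{O(n)}$ (presented succinctly by a small circuit) that is satisfiable iff $x\in L$. By the easy-witness step, if $x\in L$ then $C_x$ has a satisfying assignment whose truth-table is a quasi-polynomial-size $\ACC\circ\THR$ circuit $D$; so the new algorithm guesses $D$ (of size $2^{n^{o(1)}}$, since the relevant input length to $D$ is $O(n)$ and quasipoly of that is still $2^{n^{o(1)}}$) and must verify that $D$ encodes a satisfying assignment of $C_x$. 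Verifying this amounts to checking that a certain derived circuit --- obtained by composing $C_x$ with copies of $D$ --- is unsatisfiable, equivalently that its number of satisfying assignments is zero. The derived circuit is of the form $\SYM\circ\ACC\circ\THR$ (or can be massaged into $\ACC\circ\THR$ after the standard trick of pushing the verifier's $\AC^0$ structure down and using a symmetric output gate), and crucially it has only $O(n)$ inputs, so its size $2^{n^{o(1)}}$ is subexponential in its input length. Now apply Theorem~\ref{COUNTaccthr} (or directly Theorem~\ref{evalaccthr}): we can count its satisfying assignments in $2^{O(n)}\cdot\poly = 2^{O(n)}/2^{n^{\eps}}$... more carefully, in $2^{m - m^{\eps}}$ time where $m=O(n)$ is the number of inputs, which is $2^{O(n)}\cdot 2^{-\Omega(n^{\eps})}$, i.e.\ a $2^{n^{\Omega(1)}}$ factor faster than brute force over the $2^{O(n)}$ assignments.

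Combining: the nondeterministic algorithm for $L$ guesses $D$ (costing $2^{n^{o(1)}}$ nondeterministic bits and time) and then deterministically runs the counting algorithm, for a total running time of $2^{O(n)}/2^{n^{\Omega(1)}} = 2^{O(n) - n^{\Omega(1)}}$. After re-scaling the input length (pad so that the ``$O(n)$'' becomes ``$n$'', which the standard argument handles by choosing the padding and then using that $2^{\eps} \cdot$ anything with a $2^{-n^{\eps'}}$ savings still beats the hierarchy) this contradicts the nondeterministic time hierarchy theorem, which says some language in $\mathsf{NTIME}[2^n]$ is not in $\mathsf{NTIME}[2^n/n]$ --- indeed not in $\mathsf{NTIME}[o(2^n)]$. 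Hence the circuit hypothesis is false.

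The main obstacle, and the step requiring the most care, is verifying that the easy-witness/collapse machinery goes through for the class $\ACC\circ\THR$ rather than for general $\P/\poly$, and in particular that the circuit one must feed to the counting algorithm genuinely lies in (a symmetric function of) $\ACC\circ\THR$ with subexponentially many inputs. The $\AC^0$ reduction from an arbitrary nondeterministic computation to circuit-SAT produces a verifier circuit of $\AC^0$ type; composing it with the guessed $\ACC\circ\THR$ witness circuit keeps the bottom layer as $\THR$ and the middle as $\ACC$, but one must ensure the depth and modulus stay constant (they do, since the verifier depth and the witness-circuit depth are both constants and moduli can be merged by taking $\mathrm{lcm}$), and one must absorb the arithmetic of counting into a single $\SYM$ output gate exactly as in \cite{Williams11}. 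The other delicate point is the quantitative bookkeeping: the ``$2^{n^{o(1)}}$'' in Theorem~\ref{evalaccthr} must comfortably cover quasi-polynomial circuit sizes at input length $O(n)$, which it does since $(\text{quasipoly}(O(n)))$ is $2^{(\log n)^{O(1)}} = 2^{n^{o(1)}}$, and the $\eps>0$ savings from Theorem~\ref{COUNTaccthr} depends only on the (constant) depth and modulus of this fixed derived circuit. Everything else is the routine instantiation of the framework; I would cite \cite{Williams10,Williams11} for the parts that transfer verbatim and only spell out the composition step in detail.
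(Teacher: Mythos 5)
There is a genuine gap, and it sits exactly at the step you flag as ``requiring the most care'' and then wave through. Your plan composes the Succinct-3SAT verifier (built from the arbitrary fan-in-two encoding circuit $C_x$ and a small checking circuit) with copies of the guessed witness circuit $D \in \ACC\circ\THR$, and asserts the result can be ``massaged into'' $\SYM\circ\ACC\circ\THR$ so that Theorem~\ref{COUNTaccthr} applies. It cannot, as far as anyone knows. In the class $\ACC\circ\THR$ as defined here, the threshold gates must read the circuit's \emph{primary inputs} directly; after composition, the $\THR$ gates of $D$ read the outputs of the decoder (which contains the arbitrary circuit $C_x$, or at best another $\ACC\circ\THR$ circuit guessed to be equivalent to it). This lands you in $\ACC\circ\THR\circ(\text{something})$, and the paper explicitly notes that even $\ACC\circ\THR\circ\ACC\circ\THR$ is not known to collapse to $\ACC\circ\THR$. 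The obstruction is not about merging depths or moduli, as you suggest; it is the \emph{position} of the $\THR$ layer. The evaluation algorithm of Theorem~\ref{evalaccthr} fundamentally needs the bottom symmetric/threshold gates to read the input variables, because the symmetric-rank decomposition splits the inputs into two halves and requires each bottom gate's contribution from each half to be computed independently. So the derived circuit you want to feed to the counting algorithm is simply not in the class the algorithm handles, and the standard ``typical class'' template (Theorem~\ref{typicalold}) does not apply.

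The paper's actual proof routes around this via a new general theorem (Theorem~\ref{weakAND}): a $\#$SAT algorithm plus the much weaker property of closure under the AND of \emph{two} circuits suffices for the lower bound. Concretely, instead of ever building the composed circuit, the nondeterministic simulation guesses a ${\cal C}$-circuit $E''(x,g)$ purporting to compute the value of every gate $g$ of the fan-in-two circuit $E$, guesses a companion circuit $E''_{not}$ for the negations, and verifies local gate-by-gate consistency using only equivalence checks of the form $\#SAT(G)=\#SAT(H)=\#SAT(G\wedge H)$, where $G\wedge H$ is an AND of two genuine $\ACC\circ\THR$ circuits and hence itself in the class. This is why the theorem is stated for \emph{counting} satisfying assignments rather than plain SAT, and it is the part of the argument your proposal is missing. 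If you repair your write-up by replacing the ``compose and count'' step with this guess-and-verify-via-$\#$SAT construction of an equivalent ${\cal C}$-circuit, the rest of your outline (easy witnesses, the quantitative bookkeeping, the contradiction with the nondeterministic time hierarchy) matches the paper.
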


As Theorem~\ref{evalaccthr} also holds for $\SYM \circ \ACC \circ \SYM$, it follows that $\NEXP$ doesn't have $\ACC \circ \SYM$ circuits of quasi-polynomial size. as well

Twenty years ago, Maciel and Therien~\cite{Maciel-Therien93} considered lower bounds for $\AC^0 \circ \MAJ$ circuits (which $\ACC \circ \THR$ subsumes), but nontrivial lower bounds have not been reported. Regan~\cite{Regan97} studied ${\sf MOD}_2 \circ \AND \circ \THR$ circuits and also noted the absence of lower bounds. Lower bounds have been open even for the much weaker class $\AND \circ \OR \circ \MAJ$~\cite{Hansen-Podolskii13}. 

Theorem~\ref{nexpaccthr} moves a little closer to an ``unconditional break'' of the natural proofs barrier~\cite{RazborovRudich97}. That is, it seems plausible that pseudorandom functions can be implemented with $\ACC \circ \THR$ circuits, in which case any lower bounds proved against such circuits must be non-naturalizing.\footnote{It is not completely settled whether the proof that $\NEXP \not\subset \ACC$ is ``truly'' non-naturalizing; it could be that the natural proofs barrier is irrelevant to the problem. (If pseudorandom functions cannot be implemented in $\ACC$, then natural proofs considerations don't apply to $\ACC$ anyway; if such functions can be implemented in $\ACC$, then the $\NEXP$ lower bound is indeed non-naturalizing.)} Plaku~\cite{Plaku02} observed that the Naor-Reingold family of pseudorandom functions~\cite{Naor-Reingold04} can be implemented with quasi-polynomial size $\OR \circ \THR \circ \AND$ circuits; it follows that the natural proofs barrier already applies to this circuit class. It is an interesting open problem if $\ACC\circ\THR$ can efficiently simulate such depth-three circuits.

Building on Theorem~\ref{evalaccthr}, we also give a new method for solving 0-1 integer linear programs. In FOCS'13, Impagliazzo, Paturi, and Schneider~\cite{IPS13} showed that for each $c > 1$, there is a $\delta < 1$ such that 0-1 integer LPs with $cn$ constraints can be solved in $2^{\delta n}$ time. We provide an improvement over exhaustive search for up to subexponentially many constraints:

\begin{theorem} \label{IP} Every 0-1 integer linear program with $n$ variables and $s$ constraints can be solved in time $2^{n-\Omega(n/((\log M)(\log s)^{5}))}\cdot \poly(s,n,M)$ with high probability, where $M \leq 2^{o(n)}$ upper bounds the bit complexity of the coefficients in the program.
\end{theorem}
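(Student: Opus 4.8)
The plan is to reduce the 0-1 integer linear program to the problem of evaluating a $\SYM \circ \ACC \circ \THR$ circuit on all inputs, and then invoke Theorem~\ref{evalaccthr}. First I would split the $n$ Boolean variables into two halves of $n/2$ variables each, writing a candidate solution as $(x,y) \in \{0,1\}^{n/2} \times \{0,1\}^{n/2}$. For each of the $s$ constraints $\sum_i w_i z_i \geq t$, substituting $z = (x,y)$ gives $\sum_i a_i x_i + \sum_j b_j y_j \geq t$, which for a \emph{fixed} $y$ is a linear threshold function of $x$ alone (with the $\sum b_j y_j$ folded into the threshold); symmetrically it is a threshold function of $y$ for fixed $x$. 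I would build, for each constraint, an $\OR$ of two $\THR$ gates: one threshold gate reading the $x$-half, one reading the $y$-half, arranged so that the constraint is satisfied by $(x,y)$ exactly when a certain disjunction over the two halves holds. Taking the $\AND$ over all $s$ constraints (and feeding that into a trivial $\SYM$ output gate that just checks the $\AND$ is $1$) yields a $\SYM \circ \AND \circ \OR \circ \THR$ circuit — which lies in $\SYM \circ \ACC \circ \THR$ — whose satisfying assignments are exactly the feasible points of the integer program.

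The subtlety in the split is that a single threshold constraint over all $n$ variables is not literally a conjunction of a threshold over $x$ and a threshold over $y$. The standard fix, following the approach used for the $\textsc{Subset Sum}$-style meet-in-the-middle, is to guess the value $v = \sum_i a_i x_i$ of the $x$-part and then require $\sum_j b_j y_j \geq t - v$; but we cannot afford to guess $v$ explicitly since it ranges over exponentially many values. Instead I would use the reduction implicit in the $\ACC \circ \THR$ lower-bound / $\textsc{Orthogonal Vectors}$-style framework: sort or bucket the possible partial sums and encode, for each constraint, a small $\OR \circ \THR$ or $\AND \circ \THR$ subcircuit over the $y$-variables whose output depends on which of polynomially many ``thresholds'' the $x$-partial-sum falls into; concatenating across the $n/2$ $x$-variables is handled by treating the evaluation table over all $x$ as the table we are computing. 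Concretely, Theorem~\ref{evalaccthr} produces the output of the circuit on all $2^{n/2}$ settings of, say, the $y$-half when the $x$-half is hard-wired — so I would hard-wire each of the $2^{n/2}$ choices of $x$, obtaining $2^{n/2}$ circuits each on $n/2$ inputs, and batch them; but that is $2^{n/2}$ invocations, which is too slow. The right move is the one from~\cite{Williams11}: build a \emph{single} circuit on $n$ inputs (both halves free) and evaluate it on all $2^n$ inputs in $2^n \poly(n)$ time, then scan the truth table for a satisfying input. Since $s \leq 2^{o(n)}$ and $M \leq 2^{o(n)}$, the circuit has size $\poly(s,n,M) \cdot 2^{n^{o(1)}}$-ish bounded, well within the $2^{n^{o(1)}}$ regime, so Theorem~\ref{evalaccthr} applies.

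For the quantitative bound $2^{n-\Omega(n/((\log M)(\log s)^5))}$, a single application over all $n$ variables only gives $2^n \poly$, which is no savings. So I would instead apply the evaluation algorithm to a \emph{sub-cube}: partition the $n$ variables into a ``brute-forced'' block of $k$ variables and a ``fast-evaluated'' block of $n-k$ variables, iterate over all $2^k$ settings of the first block, and for each one run the $2^{n-k}\poly$ evaluation on the restricted circuit, checking for a feasible completion. This costs $2^k \cdot 2^{n-k} \cdot \poly = 2^n \poly$ — still no gain — unless the restricted circuit can be evaluated \emph{faster than exhaustively}, i.e. in $2^{(n-k)(1-\gamma)}$ time for some $\gamma>0$. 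That speedup is exactly what the finer-grained version of Theorem~\ref{evalaccthr} gives when the circuit size is $2^{(n-k)^{o(1)}}$: balancing $k$ so that $s \leq 2^{(n-k)^{o(1)}}$ forces $n - k = \tilde\Theta(n / ((\log M)(\log s)^{c}))$ for the appropriate constant $c$, giving savings $2^{-\Omega(n-k)}$ over the brute-force $2^n$. Tracking the exponent $c$ through the depth blow-up of the $\OR \circ \THR$ encoding and the $\ACC$-evaluation machinery is what pins down the $(\log s)^5$; the ``with high probability'' qualifier comes from a randomized hashing step used to keep the partial-sum buckets small, and $\log M$ enters because the partial sums have $O(n \log M)$ bits, inflating the threshold weights correspondingly. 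The main obstacle I anticipate is precisely this parameter bookkeeping: arranging the split so that the resulting $\ACC \circ \THR$ circuit is simultaneously \emph{shallow enough}, \emph{small enough} (subexponential in the fast block), and \emph{correct} (its SAT assignments are exactly the feasible points), while the brute-forced block is large enough to make Theorem~\ref{evalaccthr}'s $2^{n^{o(1)}}$-size hypothesis kick in — getting all three to hold at once is the crux, and the polynomial $(\log s)^5$ is the price of doing so through a constant-depth threshold encoding.
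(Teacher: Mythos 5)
There are genuine gaps in this proposal, and the headline bound cannot be reached by the route you describe.

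First, the constraint-splitting issue you spend the first half on is a non-problem, and your proposed fix for it is incorrect. A constraint $\sum_i w_i z_i \geq t$ is simply a single $\THR$ gate over all $n$ variables; the AND of the $s$ constraints is already an $\AND\circ\THR$ circuit, which is inside $\ACC\circ\THR$, and the evaluation machinery (Lemma~\ref{accthr2symsym} and the symmetric-rank decomposition of Lemma~\ref{symrank}) internally handles gates whose wires span both halves of the input. You do not need, and cannot have, an ``$\OR$ of a threshold on $x$ and a threshold on $y$'' per constraint: the set of $x$-partial-sums that need to be distinguished is exponentially large, so bucketing them into polynomially many thresholds loses correctness. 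Second, and more seriously, your mechanism for obtaining savings is impossible as stated: no version of Theorem~\ref{evalaccthr} evaluates an $m$-input circuit on all $2^m$ inputs in $2^{m(1-\gamma)}$ time, since the output alone has $2^m$ bits. The actual device (from Theorem~\ref{COUNTaccthr} and \cite{Williams11}) is to fold the $2^k$ brute-forced assignments \emph{into the circuit}: build a single circuit $C'$ on $n-k$ inputs that is an OR of $2^k$ copies of $C$ with the first $k$ inputs hard-wired, of size $2^k\cdot\poly(s,M)$, and evaluate that one circuit on all $2^{n-k}$ inputs in $2^{n-k}\cdot\poly(n)$ time plus a term depending on its size. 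The savings $2^{-k}$ come from the reduced number of inputs of $C'$, and $k$ is limited by the requirement that the size-dependent overhead stay below $2^{n-k}$.

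Third, the quantitative bound $(\log M)(\log s)^5$ is not obtainable by ``tracking the exponent through Beigel--Tarui,'' which is what a black-box use of Theorem~\ref{evalaccthr} amounts to: Beigel--Tarui incurs a $2^{(\log(\cdot))^{c'}}$ blow-up with a large depth-dependent $c'$, forcing $k = n^{\eps}$ and yielding only the $2^{n-n^{\eps}}$ bound that the paper explicitly dismisses as a ``rather minor'' improvement. The paper instead replaces the $\AC^0[2]$ part of $C'$ by Razborov--Smolensky probabilistic polynomials over $\F_2$ of degree $O(k(\log s)^4)$ with per-input error $1/(10\cdot 2^k)$, converts the top OR to an XOR by random subsums, and collapses the resulting $\AND\circ\SYM$ subcircuits into single $\SYM$ gates, producing a $\SYM\circ\SYM$ circuit of size $\poly(s,M)^{O(k(\log s)^4)}$; requiring this to be at most roughly $2^{n-k}$ is exactly what sets $k = \Theta(n/((\log M)(\log s)^5))$. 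This is also the true source of the ``with high probability'' qualifier (amplified by $n^2$ independent repetitions and a majority vote), not a hashing step for partial-sum buckets, which does not appear in the argument. Finally, you omit the standard but necessary reduction from optimization to feasibility via binary search on the objective value, which contributes the $O(M+\log n)$ factor of feasibility calls.
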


Notice that the theorem allows for enormous coefficients, of size up to $2^{2^{o(n)}}$. The time bound compares favorably with the $\AC^0$ circuit satisfiability bounds of Impagliazzo, Matthews, and Paturi~\cite{ImpagliazzoMP12}: there, the authors use random restriction methods to solve satisfiability of $\AC^0$ circuits with depth $d$ and size $s$ in $2^{n-n/(\log s)^{O(d)}}$ randomized time with zero error. Our algorithm shows that, using probabilistic polynomials and fast rectangular matrix multiplication, one can obtain similar running times for SAT of $\AC^0[2]$ circuits with a layer of symmetric gates at the bottom.

\paragraph{Depth-two linear threshold circuit evaluation.} We take an important step towards depth-two linear threshold circuit (a.k.a. $\THR \circ \THR$) lower bounds for the case of exponential weights, by giving an efficient algorithm for evaluating such circuits on all possible assignments.

\begin{theorem} \label{depth2eval} Let $k > 1$. Given a depth-two $2^{n/24}$-size linear threshold circuit $C$ with integer weights in $[-2^{n^k},-2^{n^k}]$, we can evaluate $C$ on all $2^n$ input assignments in $2^n \cdot \poly(n^k)$ time.
\end{theorem}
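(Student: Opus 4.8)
The plan is to reduce the evaluation of a $\THR \circ \THR$ circuit to the evaluation of a $\SYM \circ \ACC \circ \THR$ circuit (actually something even simpler), so that Theorem~\ref{evalaccthr} can be invoked as a black box. The obstacle is that the top gate of $C$ is a linear threshold gate over the $s = 2^{n/24}$ bottom-layer threshold outputs, with exponential weights, and a priori we cannot replace a linear threshold function by a symmetric function of its inputs. The standard fix, going back to work on linearizing threshold functions, is to expand each bottom-layer gate into many copies according to its weight in the top gate, so that the top linear inequality $\sum_i v_i\, g_i(x) \geq \theta$ becomes a \emph{majority} (in fact a threshold with all weights $1$) over a larger but still manageable collection of threshold gates $g_i$. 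However, the weights $v_i$ are as large as $2^{n^k}$, so naive duplication blows the bottom fan-in up to $2^{n^k}$, which is far too large even for the generous $2^{n^{o(1)}}$ bound in Theorem~\ref{evalaccthr}.

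So the first real step is a weight-reduction argument for the \emph{top} gate only. I would use the classical fact (Muroga; see also \cite{Muroga71}) that any linear threshold function on $N$ Boolean inputs is computable with integer weights bounded by $2^{O(N \log N)}$; here $N = s = 2^{n/24}$, which is still doubly exponential and useless directly. The better route is a \emph{randomized} weight reduction: pick a random prime $p$ of $\poly(n,k)$ bits and work with the weights modulo $p$, or more precisely use the Hoeffding/sampling-based sparsification of the top inequality. Concretely, one shows that with high probability a random subsample of $\poly(n)$ of the bottom gates (with multiplicities reflecting the $v_i$) suffices to decide the sign of $\sum_i v_i g_i(x) - \theta$ for all $2^n$ inputs simultaneously, \emph{provided} the margin is bounded away from zero; and one can force a margin of at least $1$ by the integrality of the weights, at the cost of a union bound over $2^n$ inputs which is absorbed because the failure probability per input is $2^{-\omega(n)}$ once the sample has size $\omega(n)$. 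After this step the top gate is a $\MAJ$ (equivalently a $\THR$ with tiny weights) over $N' = 2^{n^{o(1)}}$ threshold gates, i.e.\ $C$ has been rewritten as a $\MAJ \circ \THR$ circuit of subexponential size computing the same function on every input.

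With a $\MAJ \circ \THR$ (hence $\SYM \circ \ACC \circ \THR$, with a trivial middle $\ACC$ layer) circuit of $2^{n^{o(1)}}$ size in hand, the second step is a direct application of Theorem~\ref{evalaccthr}: evaluate this circuit on all $2^n$ inputs in $2^n \cdot \poly(n)$ time. One must check that the intermediate weights of the bottom $\THR$ gates are still bounded by $2^{n^k}$ (they are — weight reduction touched only the top gate) and that the total size $N' = \poly(n) \cdot s \leq 2^{n/24} \cdot \poly(n) = 2^{n^{o(1)}}$ is within the scope of the theorem; the $2^{O(\log s)^c}$ additive term in the more general form of Theorem~\ref{evalaccthr} is $2^{(n/24)^{O(1)}}$, which is $2^{o(n)}$, so it is swallowed by $2^n \cdot \poly(n^k)$.

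The main obstacle is the weight-reduction / sparsification step: one needs it to be simultaneously (a) randomized with success probability $1 - 2^{-\omega(n)}$ so that a union bound over all $2^n$ assignments still succeeds with high probability, (b) output a sample of size only $2^{n^{o(1)}}$ (ideally $\poly(n)$) so the resulting circuit stays within Theorem~\ref{evalaccthr}'s size budget, and (c) preserve the value of $C$ on \emph{every} input, not just a random one — which is exactly where the integrality-forced unit margin and the Chernoff bound are used. The rest is bookkeeping: confirming the size and weight bounds feed correctly into Theorem~\ref{evalaccthr}, and noting that the $\poly(n^k)$ overhead accounts for arithmetic on $n^k$-bit weights throughout.
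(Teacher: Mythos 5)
Your proposal does not follow the paper's route, and unfortunately both of its main steps have genuine gaps.

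\textbf{The weight-reduction step fails.} You want to sparsify the top inequality $\sum_i v_i g_i(x) \geq \theta$ by sampling bottom gates with multiplicities proportional to the $v_i$, and you invoke the integrality-forced margin of $1$ to make the Chernoff bound work. But the margin that matters for sampling is the \emph{relative} margin: to resolve whether the sum is $\geq \theta$ or $\leq \theta - 1$ when the total weight is $W = \sum_i |v_i| \leq 2^{n/24} \cdot 2^{n^k}$, a sample-based estimator has additive error on the order of $W/\sqrt{T}$ with $T$ samples, so one needs $T = \Omega(W^2) = 2^{\Omega(n^k)}$ samples — not $\poly(n)$, and not $2^{n^{o(1)}}$. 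The alternative you mention, reducing the weights modulo a random prime, preserves equalities but not inequalities, so it cannot be applied to an ordinary threshold gate (the paper's own discarded remarks note that the mod-$p$ trick is available only for \emph{exact} threshold gates). Muroga's bound, as you note, gives top weights of bit-length $O(s \log s)$ with $s = 2^{n/24}$, which is still doubly exponential in effect. There is no known efficient simulation of a $\THR$ top gate over exponentially many inputs by a $\MAJ$ top gate of comparable fan-in; indeed the relationship between $\THR\circ\THR$ and $\MAJ\circ\THR$ is precisely the kind of question this line of work is trying to attack.

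\textbf{The size bookkeeping is also wrong.} Even if step 1 succeeded, $s = 2^{n/24}$ is \emph{not} $2^{n^{o(1)}}$: $n/24 \neq n^{o(1)}$. Theorem~\ref{evalaccthr} applied to a circuit of size $s$ pays an additive $2^{O(\log s)^c}$ with $c \geq 1$ coming from Beigel--Tarui (and $c > 1$ for any nontrivial depth/modulus), so with $\log s = n/24$ this term is $2^{O(n^c)}$, which is not $2^{o(n)}$ and in fact dominates $2^n$. The black-box reduction to Theorem~\ref{evalaccthr} therefore cannot give the claimed $2^n \cdot \poly(n^k)$ bound at size $2^{n/24}$.

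The paper's actual proof (Theorem~\ref{thrthreval}) avoids both issues by never touching the weights of the top gate symbolically: it encodes the evaluation over a combinatorial rectangle $A \times B$ as a ``$w$-weighted threshold matrix product'' $(M \circledast_w N)[i,j] = \sum_k w_k \cdot \text{LEQ}(M[i,k],N[k,j])$, reduces the bottom-layer weights by sorting and rank-replacement (which preserves all the relevant inequalities exactly), and then computes the product by splitting pairs into ``same bucket'' (handled by brute force) and ``different buckets'' (handled by Coppersmith's rectangular matrix multiplication over the integers, where the $w_k$ simply appear as $\poly(n^k)$-bit entries). If you want to salvage a reduction-to-Theorem~\ref{evalaccthr} strategy, you would need both an honest top-gate weight reduction (which is open) and a circuit of genuinely subexponential size, neither of which is available here.
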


Theorem~\ref{depth2eval} follows from a more general result showing that any sufficiently large ``combinatorial rectangle'' of inputs can be evaluated in $\poly(n)$ amortized time per input. Noting that a similar statement for evaluating ACC circuits forms the heart of the proof of $\NEXP \not\subset \ACC$~\cite{Williams11}, Theorem~\ref{depth2eval} suggests that large complexity classes (such as $\NEXP$) cannot have small depth-two linear threshold circuits. However, we do not yet know how to turn Theorem~\ref{depth2eval} into depth-two linear threshold lower bounds.\footnote{The current theorems connecting circuit evaluation algorithms to circuit lower bounds require that, from the OR of a collection of circuits, we can generate an equivalent circuit in the same class. We do not know how to convert a large OR of $\THR \circ \THR$ circuits into an equivalent $\THR \circ \THR$ circuit, even assuming $\NEXP$ has small $\THR \circ \THR$ circuits. (In the case of ACC, this is trivial, because an OR of ACC circuits is still an ACC circuit.)}

\subsection{Prior work}

Considerable effort has been expended in proving lower bounds against circuits with linear threshold gates. Here we will provide some major highlights, in addition to the work already mentioned. 

It will help to introduce a little (standard) notation. Define  $\MAJ$, $\AND$, $\OR$, $\THR$, and $\SYM$ to be the class of one-gate circuits corresponding to MAJORITY, AND, OR, linear threshold, and symmetric functions, respectively, with ``free'' NOT gates that can appear after the output or on the input wires to the gate. (Recall that a symmetric Boolean function's output only depends on the number of true inputs.) For classes of circuits ${\cal C}$ and ${\cal D}$, define ${\cal C} \circ {\cal D}$ to be the class of circuits formed by taking a circuit $C \in {\cal C}$, and feeding the outputs of circuits from ${\cal D}$ as inputs to $C$. That is, ${\cal C} \circ {\cal D}$ is simply the composition of circuits from ${\cal C}$ and ${\cal D}$, with the circuits from ${\cal D}$ receiving the input and the circuit from ${\cal C}$ giving the output. We will equivocate the \emph{size} of a circuit with the number of wires, i.e., the number of directed arcs in the DAG defining the circuit. This is an important measure for circuits with symmetric gates, as the number of wires governs the size of the symmetric function representation.

Much work on depth-two threshold lower bounds has concentrated on lower bounds for inner product modulo $2$, i.e., $\text{IP2}(x_1,\ldots,x_n,y_1,\ldots,y_n) = \sum_i x_i \cdot y_i \bmod 2$. Note that IP2 is easy for $\ACC$ (being a MOD2 of AND gates). In groundbreaking work, Hajnal et al.~\cite{Hajnal93} proved that every $\MAJ \circ \MAJ$ circuit requires $2^{\Omega(n)}$ gates to compute IP2. They also showed $\MAJ \circ \SYM$ circuits can be efficiently simulated by $\MAJ \circ \MAJ$ circuits, so small $\MAJ \circ \SYM$ circuits also cannot compute IP2. Nisan~\cite{Nisan94} extended the lower bound to $\MAJ \circ \THR$ circuits, and Forster et al.~\cite{Forster01} extended the lower bound  to $\THR \circ \MAJ$ circuits. More recently, Sherstov~\cite{Sherstov09} showed that $\AC^0$  requires exponential-size $\MAJ \circ \MAJ$ circuits, Razborov and Sherstov~\cite{Razborov-Sherstov10} proved that depth-three $\AC^0$ requires exponential-size $\MAJ\circ \THR$ circuits, and Beame and Huynh~\cite{Beame-Huynh12} showed that $\AC^0$ requires $n^{\Omega(\log n)}$-size $\MAJ \circ \SYM \circ \AND$ circuits. 

Although superpolynomial-size lower bounds against $\MAJ \circ \AC^0$, $\THR \circ \AC^0$, $\MAJ \circ \MAJ \circ \AND$ and even $\MAJ \circ \MAJ \circ \AC^0$ circuits are known~\cite{Aspnes94,
Goldmann97,Razborov-Wigderson93,Hansen-Miltersen04}, and many lower bounds are known for $\AC^0$ circuits augmented with a small number of threshold gates~\cite{Beigel94,Barrington-Straubing94,Chat-Hansen05,Viola06,Hansen07,Gopalan-Servedio10,Lovett-Srinivasan11,Podolskii12}, lower bounds for $\AC^0 \circ \MAJ$ circuits have remained open. Maciel and Therien~\cite{Maciel-Therien93} conjectured that the majority-of-majority function is not in $\AC^0 \circ \MAJ$. 

Recently, Hansen and Podolskii~\cite{Hansen-Podolskii13} have shown an intriguing reduction: superpolynomial-size $\THR\circ\THR$ lower bounds for a function $f$ would follow from superlogarithmic lower bounds on the 3-party NOF unbounded-error communication complexity of $f$. 

\subsection{Comparison and Intuition}

It is instructive to discuss how this paper's approach relates to prior work on depth-two threshold lower bounds. A certain popular approach~\cite{Forster01,Lokam-book,Sherstov09,Razborov-Sherstov10} applies ingredients from Fourier analysis of Boolean functions, linear algebra,  communication complexity, discrepancy theory, \emph{etc.} In particular, these works follow the general scheme:
\smallskip

\begin{compactenum}
\item Define some notion of ``relaxed rank'' of a $2^{n/2} \times 2^{n/2}$ Boolean matrix $C$. Intuitively, if $C$ has ``relaxed rank'' $r$, then there are $2^{n/2} \times r$ and $r \times 2^{n/2}$ matrices $A$ and $B$ such that the entries of $A \cdot B$ correspond to the entries of $C$ in a direct way.
\item Show that every function $f : (\{0,1\}^{n/2} \times \{0,1\}^{n/2}) \rightarrow \{0,1\}$ computable with a ``small'' ${\cal C}$ circuit has ``small relaxed rank'' when construed as an $2^{n/2} \times 2^{n/2}$ Boolean matrix.
\item Show that some explicit family of functions $g_n : (\{0,1\}^{n/2} \times \{0,1\}^{n/2}) \rightarrow \{0,1\}$, construed as $2^{n/2} \times 2^{n/2}$ Boolean matrices, requires ``high relaxed rank'' asymptotically. 
\end{compactenum}
\smallskip

Together, these steps prove that the family $g := \{g_n\}$ cannot have ``small'' ${\cal C}$ circuits.

To prove $\ACC \circ \THR$ circuit lower bounds, we define a generalized rank notion we call the \emph{symmetric rank}, informally measuring how efficiently a $0$-$1$ matrix $M$ can be decomposed into a sum of rank-one matrices such that, after applying a fixed symmetric function to each entry of the sum, we obtain the matrix $M$. Combining several elements from previous work, we show that for a Boolean matrix representing the truth table of a $\SYM \circ \ACC \circ \THR$ circuit of size $s$, its symmetric rank is $O(2^{\log^c s})$ for some constant $c \geq 1$, depending on the depth $d$ and modulus $m$ of the MOD$m$ gates in the circuit. Moreover, given such a circuit we can efficiently compute a low-rank decomposition. 

However, we do not know how to use existing methods to prove that an explicit function $g$ has high symmetric rank. Instead, we take a more \emph{computational} approach that still exploits the low symmetric rank property. The idea is that, if we can efficiently compute a low-rank decomposition from a given circuit, then the circuit's truth table can be obtained faster than evaluating the circuit on all its inputs one-by-one. This in turn suggests that these circuits possess considerable structure that make them unsuitable for simulating very complex functions, such as those in $\NEXP$. 

Suppose we are given an $\SYM\circ \ACC\circ\THR$ circuit $C$ of size $s$ with $n$ inputs. Let $M$ be a $2^{n/2} \times 2^{n/2}$ matrix defining the function computed by $C$. First we show how given any such $C$ we can compute $2^{n/2} \times 2^{\log^c s}$ and $2^{\log^c s} \times 2^{n/2}$ matrices $A$ and $B$ (and a symmetric function $f$) giving a symmetric rank decomposition of $M$, in $2^{n/2} \cdot 2^{O(\log^c s)}$ time. By multiplying $A$ and $B$ and applying $f$ to each entry of the output matrix,  we can obtain $M$. When $s$ is sufficiently small, a rectangular matrix multiplication of Coppersmith~\cite{Coppersmith82} can be applied to compute the product of $A$ and $B$, and the final matrix $M$ is obtained in $\poly(n)$ time per entry. Hence, given an $\SYM \circ \ACC \circ \THR$ circuit $C$ of size $2^{n^{o(1)}}$, we can evaluate $C$ on all its $2^n$ inputs in only $2^n \cdot \poly(n)$ time. This fast evaluation algorithm is combined with prior work~\cite{Williams10,Williams11} along with some new tricks to exhibit a $g := \{g_n\} \in \NEXP$ which does not have quasipolynomial-size $\ACC \circ \THR$ circuits.

Our evaluation algorithm for depth-two threshold circuits (Theorem~\ref{depth2eval}) also uses Coppersmith's rectangular matrix multiplication as a subroutine, but the rest of the algorithm is rather different from the evaluation algorithm for $\SYM \circ \ACC \circ \THR$. We reduce the problem of efficiently evaluating a depth-two threshold circuit on many inputs to a special type of matrix multiplication. Namely, for two matrices $A$ and $B$ over the integers, we compute a ``weighted'' matrix product \[C[i,j] = \sum_k w_k \cdot \text{LEQ}(A[i,k],B[k,j]),\] where $\text{LEQ}(x,y)$ is a Boolean-valued function equal to $1$ if and only if $x \leq y$, and the $w_k$'s are arbitrary integer weights given as parameters to the problem. We show how Coppersmith's algorithm can be combined with a mild brute force search to efficiently compute a rectangular matrix product of the above form. 

\section{Algorithms and lower bounds for ACC with a layer of threshold gates}

The main theorem of this section is:

\begin{reminder}{Theorem~\ref{evalaccthr}} Given a $\SYM \circ \ACC \circ \THR$ circuit with $n$ inputs and $2^{n^{o(1)}}$ size, we can produce its outputs on all $2^n$ inputs in $2^n \cdot \poly(n)$ time. 

More generally, such a circuit of size $s$ can be evaluated on all inputs in $2^n \cdot \poly(\log s, n) + 2^{O(\log s)^c}$ time, for some $c \geq 1$ depending on the depth of the circuit and the modulus $m$ of its MOD$m$ gates.
\end{reminder}

\paragraph{Depth reduction.} The first stage of the proof is to convert an arbitrary $\SYM \circ \ACC \circ \THR$ circuit $C$ of size $s$ into a depth-two circuit $C''$ of symmetric gates, i.e., a $\SYM \circ \SYM$ circuit. The size of the depth-two circuit will be $O(2^{\log^c s})$ for a constant $c \geq 1$, depending on the (constant) depth and (constant) modulus of circuit $C$. This stage requires several different pieces from prior work.

\begin{lemma} \label{accthr2symsym} There is an algorithm which given an $\SYM \circ \ACC \circ \THR$ circuit $C$ of size $s \geq n$, depth $d$, and MOD$m$ gates, outputs an equivalent $\SYM \circ \SYM$ circuit $C''$ with at most $2^{(\log s)^c}$ wires, and runs in time $O(2^{(\log s)^c})$, for $c \geq 1$ depending only on $d$ and $m$.
\end{lemma}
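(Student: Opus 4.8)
The plan is to eliminate the gates of $C$ from the bottom up, converting each layer into something that a single top symmetric gate can absorb, while keeping the wire count quasi-polynomial at every stage. I would proceed in four stages. First, handle the bottom $\THR$ layer: by a classical result on linear threshold functions with arbitrary integer weights (the ``Muroga''-style weight bound, or equivalently the fact that any threshold function on $n$ variables has an equivalent representation with weights of magnitude $2^{O(n\log n)}$, together with the rounding/perturbation trick of Siu--Bruck or Goldmann--Hastad--Razborov), each threshold gate can be replaced by a small $\MAJ \circ \SYM$-type subcircuit, or directly by an exact $\SYM$ of polynomially many copies of the inputs. The cleanest route is: a $\THR$ gate with $n$ Boolean inputs is computable by a polynomial-size $\MAJ$ of the inputs (with repetition), hence by a single $\SYM$ gate on $\poly(n)$ wires. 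So after this step $C$ becomes a $\SYM \circ \ACC \circ \SYM$ circuit with only a polynomial blowup.

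Second, collapse the $\ACC$ part into a single layer. Here I invoke the Yao--Beigel--Tarui / Allender--Gore structure theorem: every $\ACC$ circuit of size $s$ and depth $d$ over $\MAJ m$ gates is equivalent to a $\SYM \circ \AND$ circuit ($\mathsf{SYM}^+$) with $2^{(\log s)^{O(d)}}$ wires, where the bottom $\AND$s have fan-in $(\log s)^{O(d)}$; moreover this transformation is constructive and runs in time polynomial in its (quasi-polynomial) output size. Since our $\ACC$ now sits on top of $\SYM$ gates rather than raw variables, I would apply this theorem treating the outputs of the bottom $\SYM$ gates as the new ``input variables.'' That yields a $\SYM \circ \ACC$-free circuit, concretely a $\SYM \circ \AND \circ \SYM$ circuit, with $2^{(\log s)^c}$ wires for a constant $c$ depending on $d$ and $m$. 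Then I must also fold the original top $\SYM$ gate of $C$ together with the new $\SYM$ gate produced here; a $\SYM$ of $\SYM$ gates on a common quasi-polynomial-size wire set is, after expanding, still expressible with one top $\SYM$ gate on a quasi-polynomial number of wires (count the contribution of each sub-gate with appropriate multiplicities), so we have a $\SYM \circ \AND \circ \SYM$ circuit overall.

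Third, eliminate the middle $\AND$ layer. Each bottom-level $\AND$ has fan-in $r := (\log s)^{O(d)}$, and its inputs are outputs of $\SYM$ gates. An $\AND$ of $r$ symmetric functions is not itself symmetric, but it depends on only $r$ of the ``meta-bits,'' and more usefully: the product $\prod_{i} g_i$ of $r$ Boolean-valued symmetric functions can be written as a single symmetric function of a new collection of wires obtained by tensoring --- i.e., for $\AND$s we can use the identity that $\AND$ of $r$ bits equals $1$ iff their sum is $r$, and each bit is a $\SYM$ of counts, so one introduces, for each bottom $\AND$ gate, a block of wires whose total weighted count encodes whether all $r$ constituent symmetric gates fired. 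Concretely, replace each $\SYM$ gate feeding an $\AND$ by its $\{0,1\}$ output realized as a threshold of a fresh copy of the input bits, then observe $\AND \circ \THR$ on $r$ gates with small fan-in and polynomial weights is a $\THR$, hence a $\SYM$, on $\poly$ wires; chaining this, each $\AND \circ \SYM$ sub-block becomes a single $\SYM$ on $2^{(\log s)^{O(1)}}$ wires. Now the circuit is $\SYM \circ \SYM$, and absorbing the top $\SYM$-of-$\SYM$ once more (as in stage two) gives the claimed $\SYM \circ \SYM$ circuit $C''$ with at most $2^{(\log s)^c}$ wires. Throughout, every transformation is explicit and its running time is polynomial in the size of the object it outputs, so the total time is $O(2^{(\log s)^c})$ after adjusting $c$.

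The main obstacle I anticipate is stage three: controlling the wire blowup when converting an $\AND$ (or more generally a small-fan-in $\AC^0$ connective) of symmetric gates back into a single symmetric gate, since naive tensoring of the wire blocks of $r$ gates could multiply the wire counts and, iterated over depth, threaten to push $(\log s)^{O(d)}$ up to something super-quasi-polynomial. The fix is to be careful that the $\AND$-fan-in $r$ is only $(\log s)^{O(d)}$ (not $s$), so even a product blowup of the form (wires)$^{r}$ is still $2^{(\log s)^{O(d)}}$; one just has to verify that the exponent $c$ remains a fixed constant once $d$ and $m$ are fixed, which it does. A secondary technical point is ensuring that each intermediate $\SYM$ gate's defining table (of size = number of incoming wires $+1$) is actually computed within the stated time bound; this is routine since all intermediate wire counts are quasi-polynomial and the symmetric functions involved (sums, parities, ANDs of counts) have trivially computable tables. $\Box$
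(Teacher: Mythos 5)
There are genuine gaps in stages one and two of your plan. In stage one, the claim that ``a $\THR$ gate with $n$ Boolean inputs is computable by a polynomial-size $\MAJ$ of the inputs (with repetition), hence by a single $\SYM$ gate on $\poly(n)$ wires'' is false: a symmetric gate sees only the weighted count $\sum_j c_j x_j$ where $c_j$ is the wire multiplicity, so simulating a threshold gate this way needs multiplicities as large as the weights, and there are threshold functions whose minimal integer weights are $2^{\Theta(n\log n)}$. Your fallback (Goldmann--H{\aa}stad--Razborov-style $\MAJ\circ\MAJ$ or $\MAJ\circ\SYM$ simulation) does not rescue the argument either, because it leaves a majority layer sandwiched between the $\ACC$ part and the bottom $\SYM$ layer, and $\MAJ$ is not an $\ACC$ gate, so Beigel--Tarui cannot absorb it and you end up having to collapse a depth-three $\AND\circ\MAJ\circ\SYM$ block, for which no quasi-polynomial collapse is available. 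The paper's essential move here is different: it replaces each threshold gate by an $\AC^0\circ\SYM$ circuit, using Maciel--Th\'erien's polynomial-size $\AND\circ\OR\circ\SYM$ circuit for iterated addition (feeding weight $w_{i_j}$ or $0$ into the adder according to $x_{i_j}$) followed by an $\AC^0$ comparison. The point is that the new gates sitting \emph{above} the bottom symmetric layer are all AND/OR, so they merge into the $\ACC$ part and Beigel--Tarui applies to the whole $\SYM\circ\ACC$ prefix at once.

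Stage two has a related error: a $\SYM$ of $\SYM$ gates is \emph{not} expressible as a single $\SYM$ gate ``after expanding with multiplicities.'' Recovering the individual counts of the sub-gates from one aggregate weighted count requires disjoint base-$B$ blocks, which costs $B^{t}$ wires for $t$ sub-gates; this is only affordable when $t$ is polylogarithmic, not when $t$ is the (quasi-polynomial) fan-in of the output gate. The paper avoids this entirely by stating Beigel--Tarui for $\SYM\circ\ACC$ circuits, so the original top symmetric gate is consumed by that theorem and only one top $\SYM$ ever appears. Your stage three contains the right high-level idea --- encode the counts of the $r=(\log s)^{O(d)}$ symmetric gates under each small-fan-in $\AND$ into disjoint blocks of one integer, at cost $(\text{wires})^{r}=2^{(\log s)^{O(d)}}$, which is exactly Beigel's trick as extended in the paper --- but your ``concrete'' realization via ``$\AND\circ\THR$ of $r$ gates is a $\THR$'' is false (an AND of halfspaces is not a halfspace, and a symmetric gate is not a threshold gate), so you should keep the block-encoding argument and drop that reduction.
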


The following paragraphs give the proof of Lemma~\ref{accthr2symsym}. Let $C$ be a $\SYM \circ \ACC \circ \THR$ circuit with inputs $x_1,\ldots,x_n$, size $s$, depth $d$, and MOD$m$ gates, for constants $d > 2$ and $m > 1$. In the proof, several constants arise; we will denote all of them by the same constant $b$ which is assumed to be the maximum of these quantities.

The first step in Lemma~\ref{accthr2symsym} is to translate the $\THR$ layer of $C$ into a $\SYM$ layer, by absorbing some of its complexity into the $\ACC$ part. Without loss of generality, we can assume that the weights of all threshold gates in $C$ have absolute value at most $2^{b n \log_2 n}$~\cite{Muroga-Toda-Takasu61,Muroga71}. (Every $\THR$ function is equivalent to one with weights of bit-complexity at most $b n \log_2 n$.)\footnote{In fact, this ``small-weight'' representation can be efficiently obtained, by evaluating the large-weight representation at only $n+1$ points, then solving a linear system in $n+1$ variables to determine the weights. See~\cite{Muroga-Toda-Takasu61}, Theorem 16.}

Maciel and Therien~\cite{Maciel-Therien98} provided several fairly tight low-deph circuits for various tasks. We need:

\begin{theorem}[\cite{Maciel-Therien98}, Theorem 3.3]\label{addition} Addition of $n$ distinct $n$-bit numbers can be performed with polynomial-size $\AND \circ \OR \circ \SYM$ circuits. Furthermore, the circuits can be constructed in polynomial time.
\end{theorem}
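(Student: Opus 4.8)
The plan is to show, for each output position $q$, that the bit $S[q]$ of $S:=X_1+\cdots+X_n$ is computed by a polynomial-size depth-three $\OR\circ\AND\circ\SYM$ circuit; since negations at the output gate are free, this is the same as an $\AND\circ\OR\circ\SYM$ circuit (apply De Morgan), and the union over the $N:=n+\lceil\log_2 n\rceil$ output positions is the desired adder, built in polynomial time because each piece is explicit. Throughout, set $\ell:=\lceil\log_2(n+1)\rceil$, so $n<2^\ell$ and $2^{O(\ell)}=\poly(n)$, and group the input columns into consecutive blocks of $\ell$ columns, with block column-sums $D_t:=\sum_j(\text{block }t\text{ of }X_j)\in[0,n2^\ell)$. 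The workhorse gadget is that \emph{any Boolean function of the column-sums of $O(1)$ consecutive blocks is computed by a single $\poly(n)$-size $\SYM$ gate}: feed the gate the $O(\ell)$ columns of those blocks, weighting a column lying in the $t$-th such block by (its within-block power of two) $\cdot\,2^{Bt}$ for a fixed shift $B>\log_2(n2^\ell)$; then the gate's weighted input count is the fixed base-$2^B$ string whose digits are the $D_t$'s, so those sums, hence any function of them, are read off by the gate's symmetric function, and the total wire count is $n\cdot 2^{O(\ell)}=\poly(n)$.

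First I would reduce to comparison predicates. Since $S\equiv A_q\pmod{2^{q+1}}$ where $A_q:=\sum_j(X_j\bmod 2^{q+1})$, we have $S[q]=A_q[q]=[\,\lfloor A_q/2^q\rfloor\text{ is odd}\,]$, and as $0\le\lfloor A_q/2^q\rfloor<2n$ this equals $\bigvee_{v\text{ odd},\,1\le v<2n}\big([A_q\ge v2^q]\wedge[A_q\le(v+1)2^q-1]\big)$. So it suffices to compute, for a \emph{fixed} integer threshold $\theta$, the predicates $[A_q\ge\theta]$ and $[A_q\le\theta]$ by polynomial-size $\OR\circ\AND\circ\SYM$ circuits: an $\AND$ of two such circuits is again $\OR\circ\AND\circ\SYM$ (distribute the two top $\OR$s and merge the $\AND$s), and an $\OR$ of these over the $O(n)$ values of $v$ stays $\OR\circ\AND\circ\SYM$.

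For a comparison predicate I would run a block-by-block carry-lookahead. Peel the top few blocks of $A_q$ off as a window $E_0$ (one $\poly(n)$-size $\SYM$ gate by the gadget), leaving $A_q=2^{q_0}E_0+A_1$ with $A_1<C\cdot 2^{q_0}$ for some $C=O(n)$, where $q_0$ is the bottom position of the window. Then $E_0$ by itself decides $[A_q\ge\theta]$ --- ``$E_0$ large'' forces YES, ``$E_0$ small'' forces NO --- \emph{except} when $E_0$ lies in a ``borderline'' interval of width $O(n)$, in which case the predicate reduces to $[A_1\ge\theta_1]$ for a running threshold $\theta_1$ determined by $E_0$ and $\theta$; iterate. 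The crucial structural point --- which is why the window is taken a couple of blocks wide, so that the borderline width $O(n)$ lies below the window's radix $2^{O(\ell)}$ --- is that the residue of the running threshold modulo the window radix stays \emph{fixed} through the recursion (it is always a suffix of the original $\theta$), so on the borderline the single window $E_k$ already pins down the entire incoming threshold $\theta_k$; hence each $\theta_{k+1}$ is a function of the single preceding window $E_k$ together with fixed data. Unfolding, $[A_q\ge\theta]=\bigvee_{k^\ast\ge 0}\big([\text{level }k^\ast\text{ decides YES given }\theta_{k^\ast}(E_{k^\ast-1})]\wedge\bigwedge_{k<k^\ast}[\text{level }k\text{ is borderline and threads, given }\theta_k(E_{k-1})]\big)$, and every bracketed conjunct depends on at most two adjacent windows $E_{k-1},E_k$ --- so, by the gadget, is a single $\poly(n)$-size $\SYM$ gate. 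There are $O(N/\ell)=O(n/\log n)$ levels, hence that many disjuncts, each an $\AND$ of that many $\SYM$ gates: a polynomial-size $\OR\circ\AND\circ\SYM$ circuit. Running the same recursion with the inequalities reversed handles $[A_q\le\theta]$. Polynomial-time constructibility is then immediate, since the blocks, all integer weights feeding the $\SYM$ gates, and the (finitely described, $\poly(n)$-size) symmetric truth tables are explicitly computable in $\poly(n)$ time.

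The hard part is exactly the ``crucial structural point'': keeping the unfolded recursion of \emph{polynomial} width. A careless unfolding, in which the running threshold passed down a level is allowed to depend on all windows above it, yields clauses whose conjuncts depend on $\Omega(n/\log n)$ windows --- equivalently, it forces an $\OR$ over $n^{\Omega(n/\log n)}=2^{\Omega(n)}$ threshold sequences --- and blows the size up exponentially. The technical content (this is essentially the construction in \cite{Maciel-Therien98}, their Section~3) is the arithmetic showing that, with the window taken wide enough relative to $\ell$, the running threshold's low part is frozen and therefore the threshold handed to the next level is recovered from just the current window; only then is each conjunct a genuinely bounded-window, hence single-$\SYM$-gate, predicate. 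Side issues to dispatch along the way: a possibly partial top block at position $q$, and the bookkeeping making ``large / small / borderline'' really partition the window values and the recursion really terminate at the least-significant blocks.
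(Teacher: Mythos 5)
The paper does not actually prove this statement: Theorem~\ref{addition} is imported verbatim from Maciel--Th\'erien (their Theorem~3.3) and used as a black box in Lemma~\ref{accthr2symsym}, so there is no in-paper proof to compare yours against; the comparison can only be with the cited construction, which your sketch essentially reconstructs. On its own merits your argument looks sound. The reduction of $S[q]$ to the $O(n)$ pairs of comparisons $[A_q\ge v2^q]\wedge[A_q\le (v+1)2^q-1]$ is correct since $\lfloor A_q/2^q\rfloor<2n$, the wire-duplication gadget genuinely turns any function of the column-sums of $O(1)$ consecutive blocks into one $\poly(n)$-size $\SYM$ gate, and the ``crucial structural point'' you flag does hold: writing the low part $A^{(k+1)}<n2^{q_{k+1}}$ and using $\theta_k\equiv\theta \pmod{2^{q_k}}$, the borderline condition confines $\theta_k$ to an interval of length $n2^{q_{k+1}}<2^{q_k}$ (once the window spans at least two blocks), which contains at most one integer with the prescribed residue, so $\theta_k$ is pinned by the adjacent window and each conjunct in the unfolding depends on only two consecutive windows; a short induction along the conjunction then shows the reconstructed thresholds agree with the true ones whenever all earlier conjuncts hold, so the $O(n/\log n)\times O(n/\log n)$ unfolded formula computes the comparison exactly and the size stays polynomial. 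The remaining manipulations (distributing the two top $\OR$s, merging over $v$, and passing between $\OR\circ\AND\circ\SYM$ and $\AND\circ\OR\circ\SYM$ via the free negations in the paper's gate conventions) are routine, and constructibility in $\poly(n)$ time is clear since all weights and symmetric truth tables are explicit. So I would accept this as a correct self-contained proof sketch of the cited theorem; the only caveats are presentational (the partial top block and the degenerate bottom level should be spelled out, as you note), not substantive.
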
 

We can therefore replace every $\THR$ gate of $C$ with an $\AC^0 \circ \MAJ$ circuit, as follows. Fix a threshold gate of $C$, with weights $w_{i_1},\ldots,w_{i_t}$ for $t \leq n$, computing $\sum_{j=1}^{t-1} w_{i_j}x_{i_j} \geq w_{i_t}$ for some $i_j \in \{1,\ldots,n\}$. Note $|w_{i_j}|\leq 2^{b n\log_2 n}$ for $j=1,\ldots,t$. Set $W = b n\log_2 n$. 

Let $D$ be a circuit for the addition of $t-1$ $W$-bit numbers, provided by Theorem~\ref{addition}. For $j=1,\ldots,t-1$, we connect to the $j$th $W$-bit input of $D$ a circuit which, given $x_{i_j}$, feeds $w_{i_j}$ to $D$ if the input bit $x_{i_j}=1$, and the all-zero $W$-bit string if $x_{i_j}=0$. Note this extra circuit actually contains no gates: it simply has a wire from $x_{i_j}$ to all bits of the $j$th $W$-bit input where the corresponding bit of $w_{i_j}$ equals $1$. Letting this new circuit be $D'$, we have $D'(x_1,\ldots,x_n) = \sum_{j=1}^{t-1} w_{i_j}x_{i_j}$. This can be compared to the value $w_{i_t}$ with an $\AC^0$ circuit, using the fact that the ``less-than-or-equal-to'' comparison of two integers can be performed in $\AC^0$~\cite{Chandra-Stockmeyer-Vishkin84}. We now have an $\AC^0 \circ \SYM$ circuit $D''$ of size $\poly(W,t) \leq n^b$ computing the given threshold gate. Applying this construction to each threshold gate in the $\THR$ layer of $C$, we obtain an $\SYM\circ \ACC\circ \SYM$ circuit $C'$ of size at most $s \cdot n^{b}$.

The next step of Lemma~\ref{accthr2symsym} is to convert the $\SYM \circ \ACC$ part into a $\SYM \circ \AND$ circuit, using a reduction of Beigel-Tarui~\cite{Beigel-Tarui} (with important details on constructibility filled in by Allender-Gore~\cite{Allender-Gore91}).

\begin{theorem}[\cite{Beigel-Tarui,Allender-Gore91}]\label{BT} Every $\SYM\circ \ACC$ circuit of size $s$ can be simulated by a $\SYM \circ \AND$ circuit of $2^{(\log s)^{c'}}$ size for some constant $c'$ depending only on the depth $d$ and MOD$m$ gates of the $\ACC$ part. Moreover, the $\AND$ gates of the final circuit have only $(\log s)^{c'}$ fan-in, the final circuit can be constructed from the original in $2^{O((\log s)^{c'})}$ time, and the final symmetric function at the output can be computed in $2^{O((\log s)^{c'})}$ time.
\end{theorem}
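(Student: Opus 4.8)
The plan is to prove Theorem~\ref{BT} by the standard arithmetization (depth-reduction) argument for $\ACC$, carried out bottom-up over the constant number of layers, and then to read off a $\SYM \circ \AND$ circuit from the resulting low-degree integer polynomial; the Allender--Gore component is obtained by checking that every step is deterministic and runs in $2^{(\log s)^{O(1)}}$ time. First I would dispose of the modulus: by the Chinese Remainder Theorem, divisibility by $m=\prod_i p_i^{e_i}$ is the $\AND$ (of constant fan-in) of divisibilities by the $p_i^{e_i}$, and each $\text{MOD}_{p^e}$ gate can be simulated in constant depth by $\text{MOD}_p$ gates. So at the cost of a constant-factor increase in depth and a polynomial increase in size I may assume the $\ACC$ part uses only $\text{MOD}_p$ gates; for readability I would first present the case of a single prime $p$, the general case being handled identically but working over $\Z$ and combining the primes dividing $m$ via CRT.

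Next comes the heart of the argument: replacing the $\ACC$ part by a single polynomial of polylogarithmic degree. Working over $\F_p$, a $\text{MOD}_p$ gate on $z_1,\dots,z_k$ equals $1-\big(\sum_j z_j\big)^{p-1}$ \emph{exactly} (Fermat), so it only multiplies the degree by the constant $p-1$. The real difficulty is that $\AND$ and $\OR$ gates have \emph{unbounded} fan-in and have no exact low-degree $\F_p$-polynomial (already $\text{MOD}_q$ for $q\neq p$ requires degree $\Omega(n)$ over $\F_p$), so I would use Razborov--Smolensky \emph{probabilistic} polynomials: an $\OR$ of $k$ bits is computed with one-sided error $\leq\delta$ by $1-\prod_{j=1}^{t}\ell_j(z)^{p-1}$ for $t=O(\log_p(1/\delta))$ random $\F_p$-linear forms $\ell_j$, of degree $O((p-1)\log(1/\delta))$; drawing the coefficients of the $\ell_j$ from a pairwise-independent family keeps the seed length $O(\log s\cdot\log(1/\delta))$ even for wide gates. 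Substituting these into the arithmetization gate-by-gate and taking a union bound over the $\leq s$ gates, a single seed $r$ of length $\rho=(\log s)^{O(d)}$ gives, for each $\ACC$-subcircuit $C_\ell$ feeding the top symmetric gate, an $\F_p$-polynomial $q_{\ell,r}$ of degree $(\log s)^{O(d)}$ (hence at most $2^{(\log s)^{O(d)}}$ monomials) with $\Pr_r[q_{\ell,r}(x)=C_\ell(x)]\geq 1-\delta$ for each fixed $x\in\{0,1\}^n$, where I would take $\delta=1/(4s^2)$.

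Now I would remove the randomness by a \emph{deterministic} majority vote --- which is exactly what keeps the construction within quasipolynomial time, since the probabilistic-polynomial bound is a proven counting fact and no $2^n$ correctness check is ever needed. Converting each $q_{\ell,r}$ to an integer polynomial $\tilde q_{\ell,r}$ that is $\{0,1\}$-valued over $\Z$ (a standard $\F_p\!\to\!\Z$ manoeuvre that only blows up degree and sparsity by constants), set $M(x)=\sum_{r\in\{0,1\}^{\rho}}\sum_{\ell}\tilde q_{\ell,r}(x)$. At most $\delta 2^{\rho}$ seeds err at any fixed $x$ for each of the $\leq s$ subcircuits, so $M(x)$ agrees with $2^{\rho}\sum_\ell C_\ell(x)$ up to additive error $s\delta 2^{\rho}<2^{\rho}/2$; hence $\sum_\ell C_\ell(x)=\mathrm{round}(M(x)/2^{\rho})$ exactly, and the circuit's output equals $h(M(x))$ where $h$ decodes this round-off and then applies the original top symmetric function, computable in $2^{O((\log s)^{O(d)})}$ time. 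Finally, $M$ is an integer polynomial with at most $2^{(\log s)^{O(d)}}$ monomials, each of degree $(\log s)^{O(d)}$, and with $\sum_m|c_m|\le 2^{(\log s)^{O(d)}}$; after shifting all coefficients to be nonnegative by a multiple of a modulus exceeding $\max_x M(x)$ (absorbed into $h$), I get the desired $\SYM\circ\AND$ circuit by taking one $\AND$ gate per monomial repeated according to its coefficient, with the top $\SYM$ gate computing $h$. Tracking constants shows the number of wires and the construction time are both $2^{(\log s)^{c'}}$ with $c'=O(d)$ (and an extra dependence on the primes dividing $m$), proving the theorem.

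I expect the main obstacle to be precisely the tension that there is no exact low-degree polynomial for $\ACC$, so the whole scheme is forced through probabilistic polynomials, which by themselves are correct only on a $1-\varepsilon$ fraction of inputs --- useless for an \emph{exact} simulation. Resolving this requires two coupled ideas: (i) the top $\SYM$ gate is ``free'' and can be made to perform a majority vote / round-off over \emph{all} $2^{\rho}$ short seeds, turning per-input correctness-in-probability into exact correctness; and (ii) the per-input error must be pushed down to $1/\mathrm{poly}(s)$ (not merely constant) so this vote still decodes when the symmetric gate has large fan-in. Getting the bookkeeping of degree and sparsity to survive $d$ layers of composition --- and ensuring everything, including the enumeration over seeds and the evaluation of the final symmetric function, stays within $2^{(\log s)^{O(d)}}$ time rather than ballooning --- is the part that needs the most care.
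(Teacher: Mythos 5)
First, note that the paper does not actually prove this statement: Theorem~\ref{BT} is imported from Beigel--Tarui (with the constructibility clauses from Allender--Gore) and used as a black box inside Lemma~\ref{accthr2symsym}, so there is no in-paper proof to compare against. Your outline is a recognizable modern route to the theorem --- probabilistic polynomials for the $\AC^0$ structure, Fermat's little theorem for the $\text{MOD}_p$ gates, and derandomization by summing over all short seeds inside the top symmetric gate --- and the overall architecture (per-input error $1/\poly(s)$, union bound over gates, seed enumeration absorbed into $\SYM$, monomials becoming low fan-in $\AND$ gates with multiplicity) is sound and does yield the stated quasipolynomial bounds.

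The genuine gap is the sentence claiming that each $\F_p$-polynomial $q_{\ell,r}$ can be converted to an integer polynomial $\tilde q_{\ell,r}$ that is \emph{exactly} $\{0,1\}$-valued on $\{0,1\}^n$ while ``blowing up degree and sparsity by constants.'' No such conversion exists: a degree-$D$ polynomial over $\F_p$ that is $\{0,1\}$-valued on the cube only determines its integer lift modulo $p$, and extracting that residue as an exact integer via a univariate correction polynomial would require degree comparable to the range of the lift, i.e.\ $2^{(\log s)^{\Theta(d)}}$, destroying the degree (hence $\AND$ fan-in) bound. This is precisely the obstruction that the Toda/Yao/Beigel--Tarui \emph{modulus-amplifying polynomials} $F_k$ (degree $2k-1$, mapping residues $0,1 \bmod p$ to residues $0,1 \bmod p^k$) are designed to overcome: one applies $F_k$ with $p^k$ exceeding the number of terms $s\cdot 2^{\rho}$ being summed, so that $M(x)=\sum_{r,\ell}F_k(\hat q_{\ell,r}(x))$ is correct only \emph{modulo} $p^k$, and the reduction mod $p^k$ (before your round-off and the original symmetric function) must be deferred to the top $\SYM$ gate. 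This costs a $(\log s)^{O(d)}$ --- not constant --- factor in degree, which still fits the quasipolynomial budget, but it is the technical heart of the theorem and cannot be waved through; the same issue forces the multi-prime case ($m$ with several prime divisors) to be run over $\Z$ with a separate amplification per prime rather than by a casual CRT remark. (Minor: your $\OR$ polynomial should read $1-\prod_j\bigl(1-\ell_j(z)^{p-1}\bigr)$; as written it outputs $1$ on the all-zero input.)
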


Applying this reduction to the top $\SYM \circ \ACC$ part of the circuit $C'$ results in an equivalent $\SYM \circ \AND_{(\log (s\cdot n^b))^{c'}} \circ \SYM$ circuit $C''$ of size $s' = 2^{O((\log (s \cdot n^b))^{c'})}$ (where the subscript on the $\AND$ denotes the fan-in of each AND gate). For simplicity of notation, let $t = (\log(s\cdot n^b))^{c'}$ in the following.

Extending a trick of Beigel~\cite{Beigel94} to symmetric gates, we can convert every $\AND_t \circ \SYM$ subcircuit of $C''$ with $n^b$ wires into a single $\SYM$ gate with $O(n^{b \cdot t})$ wires. Let $S_1(x_1,\ldots,x_n) \wedge \cdots \wedge S_t(x_1,\ldots,x_n)$ be one such subcircuit, where $S_i$ denotes the $i$th symmetric gate. In particular, for $i=1,\ldots,t$, let $f_i : \Z \rightarrow \{0,1\}$ be such that $f_i(\sum_{j=1}^n c_{i,j} x_j) = S_i(x_1,\ldots,x_n)$, where $c_{i,j}$ denotes the number of copies of $x_j$ that feed into $S_i$.

Let $B = 1+\max_{i} (\sum_{j=1}^n c_{i,j})$; note that $B \leq n^b$. Consider the linear form \[L(x_1,\ldots,x_n) = \sum_{i=1}^{t} B^{i-1} \cdot \left(\sum_{j=1}^n c_{i,j} x_j\right).\] For any Boolean assignment to the $x_j$'s, the number encoded by the linear form $L(x_1,\ldots,x_n)$ is an integer encoded in $O(t\cdot b \log n)$ bits. By construction, the bit representation of this integer contains, for every $i=1,\ldots,t$, the number of wires input to $S_i$ which are set true, as a string of $(b\log n)$ bits. Therefore, from the linear form $L(x_1,\ldots,x_n)$ we can easily infer whether all $S_i(x_1,\ldots,x_n)$ output $1$ or not, and hence output the value of $S_1 \wedge \cdots \wedge S_t$.

To implement this linear form with a single $\SYM$ gate, for all $j=1,\ldots,n$ we put $\sum_{i=1}^t B^{i-1} c_{i,j}$ wires from the input variable $x_j$ into the new $\SYM$ gate. Hence there are $O(n^{b \cdot t})$ wires from the inputs into this new $\SYM$ gate. By choosing the appropriate symmetric function (which outputs $1$ if and only if $L(x_1,\ldots,x_n)$ encodes a number such that $S_1 \wedge \cdots \wedge S_t$ is true) we can simulate any $\AND_t \circ \SYM$ circuit of $n^b$ wires with a single $\SYM$ gate of $O(n^{b \cdot t})$ wires.

Replacing each $\AND \circ \SYM$ subcircuit in this manner results in a $\SYM \circ \SYM$ circuit of size $O(s' \cdot n^{b \cdot t}) \leq 2^{O(\log s)^{c}}$ for some constant $c \geq 1$. This concludes the proof of Lemma~\ref{accthr2symsym}.

\paragraph{Symmetric rank.} Next, we prove that the truth table of any $\SYM \circ \SYM$ circuit $C''$ of $t$ wires and $n$ inputs represents a $2^{n/2} \times 2^{n/2}$ matrix of \emph{symmetric rank} at most $\poly(t)$, and this rank decomposition can be efficiently computed. For given matrices $A$ and $B$ over the integers, let $A \cdot B$ denote their matrix product over the integers. Let $M \in \{0,1\}^{m \times n}$. We define the \emph{symmetric rank of $M$} to be the minimum $r \in \N$ such that there are matrices $A \in \{0,1\}^{m \times r}$, $B \in \{0,1\}^{r \times n}$ and a function $f : \{0,1,\ldots,r\} \rightarrow \{0,1\}$ satisfying $M[i,j] = f((A \cdot B)[i,j])$ for all $i,j$. We call the triple $(A,B,f)$ a \emph{symmetric rank decomposition} of $M$. The symmetric rank is similar to the typical notion of rank, except for the additional function $f$ providing a ``filter'' from arbitrary integers back to $\{0,1\}$. This filter function could potentially lead to smaller rank decompositions than the typical notion. However, note the symmetric rank of $M$ is not necessarily at most (for instance) the rank of $M$ over $\R$, because $A$ and $B$ are required to have Boolean entries.

For simplicity let $n$ be even, and let $z_1,\ldots,z_{2^{n/2}}$ be the list of all $2^{n/2}$ $n/2$-bit strings in lexicographical order. For a circuit $C$ with $n$ inputs, define the \emph{truth table matrix} $M_C$ to be the $2^{n/2} \times 2^{n/2}$ matrix with $M_C[i,j]$ equal to the output of $C(z_i,z_j)$.

\begin{lemma}\label{symrank} Given a $\SYM\circ\SYM$ circuit $C$ with $t$ wires and $n$ inputs, its truth table matrix $M_C$ has symmetric rank $O(t^3)$, and a symmetric rank decomposition of $M_C$ can be computed from $C$ in $2^{n/2} \cdot \poly(t)$ time.
\end{lemma}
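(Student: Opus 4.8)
The plan is to write the output symmetric function of $C$ as a composition: the top $\SYM$ gate depends only on the number of bottom $\SYM$ gates that fire, so it suffices to produce, as a small-rank object, the integer-valued matrix that records "how many bottom gates output $1$ on input $(z_i,z_j)$", and then apply the top symmetric function $f$ entrywise. So the real task is a rank bound for a single bottom $\SYM$ gate, summed over all (at most $t$) such gates.

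Fix one bottom $\SYM$ gate $S$ with at most $t$ wires from the $n$ inputs; say it has $c_k$ copies of input $x_k$, so $S(x) = g\big(\sum_k c_k x_k\big)$ for some $g:\{0,\dots,t\}\to\{0,1\}$. Split the variables into the two halves $y = (x_1,\dots,x_{n/2})$ and $w=(x_{n/2+1},\dots,x_n)$, and write $\sum_k c_k x_k = a(y) + b(w)$ where $a(y)=\sum_{k\le n/2} c_k x_k$ and $b(w)=\sum_{k>n/2} c_k x_k$; note $0\le a(y),b(w)\le t$. Now the key observation: for each threshold $\tau\in\{0,1,\dots,t\}$, the Boolean predicate $[a(z_i)+b(z_j)\ge\tau]$ has, as a $2^{n/2}\times 2^{n/2}$ matrix, symmetric rank $O(1)$ — indeed it is monotone in $a$ and in $b$, so it can be written as an OR over the $O(t)$ "staircase" rectangles $\{a\ge\alpha\}\times\{b\ge\tau-\alpha\}$, hence as $f(\text{sum of }O(t)\text{ rank-one }0/1\text{ matrices})$; more simply, the integer matrix $(a(z_i)+b(z_j))_{i,j}$ itself equals $A\cdot B$ with $A$ the $2^{n/2}\times 2$ matrix whose rows are $(a(z_i),1)$ and $B$ the $2\times 2^{n/2}$ matrix whose columns are $(1,b(z_j))$ — but since we need $A,B$ to be $0/1$, I instead encode each integer $a(z_i)\in\{0,\dots,t\}$ in unary as a $0/1$ row of length $t$, and likewise $b(z_j)$, so that an appropriate inner product of these unary encodings (padded to make the product recover $a+b$, or directly $[a+b\ge\tau]$) is a $0/1$ matrix product of inner dimension $O(t)$. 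Writing $g$ as a $\pm$-combination (telescoping) of threshold functions $[\,\cdot\ge\tau\,]$ and taking the direct sum over the $\le t$ bottom gates, the integer matrix "number of bottom gates firing" is realized as $A\cdot B$ with $A\in\{0,1\}^{2^{n/2}\times r}$, $B\in\{0,1\}^{r\times 2^{n/2}}$, $r = O(t\cdot t)=O(t^2)$; adding a few more wires to keep all counts below a known bound and to separate the contributions of distinct gates (a positional/base-$B$ trick exactly as in the $\AND_t\circ\SYM$ reduction above, or simply concatenating the $t$ unary blocks) pushes $r$ to $O(t^2)$ or $O(t^3)$, and then $M_C[i,j]=f((A\cdot B)[i,j])$ for a suitable $f:\{0,\dots,r\}\to\{0,1\}$ that first decodes the count and then applies the top symmetric function.

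For constructibility: the coefficients $c_{i,k}$ are read directly off the wire multiplicities of $C$; forming $A$ requires, for each of the $2^{n/2}$ strings $z_i$, evaluating the $O(t)$ partial sums $a_\ell(z_i)$ and writing down their unary encodings, which is $\poly(t)$ work per row, so $2^{n/2}\cdot\poly(t)$ total, and symmetrically for $B$; the function $f$ is described by a table of size $r+1 = O(t^3)$ and is computed in $\poly(t)$ time from the symmetric functions attached to the gates of $C$.

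The main obstacle — really the only point requiring care — is bookkeeping the simultaneous contributions of all $t$ bottom gates inside a single integer $(A\cdot B)[i,j]$ so that $f$ can unambiguously recover the vector of per-gate counts: one must choose the unary block lengths (or the base in a positional encoding) large enough that there is no "carry" between gates, while keeping the total inner dimension $r$ at $O(t^3)$. This is the same mechanism used in the $\AND_t\circ\SYM\to\SYM$ step earlier in the proof of Lemma~\ref{accthr2symsym}, so no new idea is needed, only a careful accounting of the wire counts; everything else (monotone threshold matrices have $O(1)$ symmetric rank; telescoping a symmetric function into thresholds; direct sums add ranks) is routine.
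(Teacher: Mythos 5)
Your high-level plan is the right one and matches the paper's in outline: split each bottom gate's wire count into a first-half contribution $a_j(z_i)$ and a second-half contribution $b_j(z_k)$, realize the number of firing bottom gates as a single inner product of $0/1$ vectors, and let the filter $f$ be the top symmetric function. But the decomposition you actually commit to --- telescoping each bottom symmetric function $g_j$ into a $\pm$-combination of thresholds $[\,\cdot\ge\tau\,]$ and then ``taking the direct sum'' --- has a genuine gap: a product of $\{0,1\}$ matrices can only produce an \emph{unsigned} sum of per-coordinate contributions, so the $-1$ coefficients in the telescoping cannot be realized. Concatenating the blocks computes $\sum_{j,\tau}[\ell_j\ge\tau]$, not $\sum_j g_j(\ell_j)$, and these differ whenever some $g_j$ is non-monotone, which is the generic case for symmetric gates. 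Your proposed repair --- separating the per-gate (or positive/negative) contributions with a positional/base-$B$ encoding so that $f$ can decode them --- does not keep $r$ at $O(t^3)$: in the symmetric-rank setting, multiplying a block's contribution by $B^{j-1}$ means \emph{replicating its coordinates} $B^{j-1}$ times, so separating up to $t$ gates costs rank $B^{\Omega(t)}$. This is precisely why the wire-based $\AND_t\circ\SYM\to\SYM$ collapse in Lemma~\ref{accthr2symsym} is affordable only for polylogarithmic fan-in $t$; it cannot be reused to separate all $t$ bottom gates. (A two-block positive/negative split with a single scaling by $O(t^2)$ is salvageable, but it yields $O(t^5)$, not $O(t^3)$, and you do not spell it out.) Also, the claim that $[a+b\ge\tau]$ has symmetric rank $O(1)$ is unsupported; your own construction for it uses $O(t)$ coordinates, which is what the final count requires anyway.

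The fix --- and what the paper actually does --- is to avoid any decoding by arranging that each gate contributes \emph{exactly its output bit} to the inner product. Decompose $g_j$ into its level sets rather than thresholds: for each gate $g_j$ and each pair $(a,b)$ with $a+b\le t$ such that $g_j$ accepts total count $a+b$, allocate one coordinate, placing the indicator $[a_j(z_i)=a]$ in $A$ and $[b_j(z_k)=b]$ in $B$. For fixed $j$ at most one pair $(a,b)$ can have both indicators equal to $1$, so gate $j$ contributes exactly $g_j(a_j+b_j)\in\{0,1\}$; the inner product is then exactly the number of firing gates, the total number of coordinates is $\sum_{a+b\le t}|S_{a,b}|=O(t^3)$ (where $S_{a,b}$ is the set of gates accepting count $a+b$), and $f$ is simply the top gate's symmetric function with no unary or positional bookkeeping at all. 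You gesture at this one-hot encoding in passing, but then set yourself the harder goal of recovering the whole vector of per-gate counts from a single integer, which is both unnecessary and not achievable within $O(t^3)$ by the mechanism you cite. Committing to the level-set version closes the gap; the constructibility argument you give then goes through as written.
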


\begin{proof} For simplicity we assume $n$ is even; the case of odd $n$ will be apparent. Index the input variables of $C$ by $x_1,\ldots,x_n$. Let $g_1,\ldots,g_s$ be an indexing of the gates of $C$ on the bottom layer (closest to the inputs) and let $g'$ denote the output gate of $C$. (Note that $s \leq t$.) Let $f : \{0,1,\ldots,s\}\rightarrow\{0,1\}$ be the symmetric function of gate $g'$: for all $a \in \{0,1,\ldots,s\}$, $f(a) = b$ if and only if $a$ true inputs make $g'$ output $b$. 

We shall show how to efficiently construct matrices $A$ and $B$ with the appropriate properties. Let $z_1,\ldots,z_{2^{n/2}}$ be the list of all $n/2$-bit strings in lexicographical order, in the following. For every pair $(a,b) \in \{0,1,\ldots,t\}^2$ such that $a+b\leq t$, let $S_{a,b}\subseteq \{g_1,\ldots,g_s\}$ denote the subset of gates $g_j$ such that $a+b$ true inputs makes gate $g_j$ output $1$. 

The matrices $A$ and $B$ to be constructed show that the symmetric rank of $M_C$ is at most \[r = \sum_{a,b \in \{0,1,\ldots,t\} : a + b \leq t} |S_{a,b}| \leq O(t^3).\] In other words, each pair $(a,b)$ will add $|S_{a,b}|$ additional components to the rows of $A$ and the columns of $B$.

For $i=1,\ldots,2^{n/2}$, the $i$th row of $A$ and $i$th column of $B$ are defined as follows. For every pair $(a,b)$, allocate $|S_{a,b}|$ additional components for the rows of $A$ and columns of $B$. 

For $j=1,\ldots,|S_{a,b}|$, put a $1$ in the $j$th additional component of the $i$th row of $A$ if and only if there are $a$ true wires going into the $j$th gate of $S_{a,b}$ when the input variables $x_1,\ldots,x_{n/2}$ are given assignment $z_i$. That is, the $j$th component is $1$ if and only if the contribution (from the first half of variables) to the overall sum for the $j$th gate is $a$. 

Similarly, for $j=1,\ldots,|S_{a,b}|$, put a $1$ in the $j$th additional component of the $i$th column of $B$ if and only if there are $b$ true wires going into the $j$th gate of $S_{a,b}$ when the input variables $x_{n/2+1},\ldots,x_{n}$ are given assignment $z_i$. 

Note that each entry of $A$ and $B$ can be determined in $\poly(t)$ time. 

For every fixed $(a,b)$, the product of two $j$th components for the $i$th row of $A$ and the $k$th column of $B$ is either $0$ or $1$, and the product is $1$ if and only if:
\begin{compactitem}
\item the sum of true inputs into the $j$th gate of $S_{a,b}$ from the inputs $(x_1,\ldots,x_{n/2})$ equals $a$ when the inputs $(x_1,\ldots,x_{n/2})$ are assigned $z_i$,
\item the sum of true inputs into the same gate from $(x_{n/2+1},\ldots,x_n)$ equals $b$ when the inputs $(x_{n/2+1},\ldots,x_n)$ are assigned $z_k$, and 
\item the $j$th gate outputs $1$ when its sum of true inputs equals $a+b$.
\end{compactitem}
It follows that the \emph{inner product} of the $i$th row of $A$ and the $k$th column of $B$ equals the total number $N_{i,k}$ of true wires going into the output gate of $C$ on the variable assignment $(x_1,\ldots,x_n) \mapsto (z_i, z_k)$. By definition, $f(N_{i,k})$ equals the output of $C$ on that variable assignment.
\end{proof}

We need one more lemma to complete the proof of Theorem~\ref{evalaccthr}:

\begin{lemma}\label{coppersmith} For all sufficiently large $N$, and $\alpha \leq .172$, multiplication of an $N \times N^{\alpha}$ matrix with an $N^{\alpha} \times N$ matrix can be done in $N^2 \cdot \poly(\log N)$ arithmetic operations, over any field with $O(2^{\poly(\log N)})$ elements.\footnote{See Appendix~\ref{coppersmith-appendix} for an exposition of this result.}\end{lemma}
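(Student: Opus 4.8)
\medskip
\noindent\textbf{Proof proposal for Lemma~\ref{coppersmith}.} The plan is to reconstruct Coppersmith's algorithm for ``long and thin'' matrix multiplication, whose engine is a single carefully chosen bilinear \emph{gadget} that is amplified by tensor powers. The heart of the argument --- and the step I expect to be the main obstacle --- is to exhibit, for every integer $q$, an explicit \emph{approximate} (in the sense of Bini's border rank) bilinear algorithm for the rectangular matrix product $\langle q, m_q, q\rangle$ (multiplying a $q \times m_q$ matrix by an $m_q \times q$ matrix) with two properties: its border rank is $q^2 + O(q)$, and its order of approximation is at most some absolute constant $e_0$ independent of $q$, where $m_q$ can be taken to be $\lceil q^{\alpha_0}\rceil$ with $\alpha_0 = 0.172$ (this exponent is exactly the $\alpha$ ceiling in the statement). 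Concretely, this is a tensor identity over $\F[\varepsilon]$ of the form
\[ \sum_{\ell=1}^{q^2 + O(q)} L_\ell(\varepsilon)\otimes L'_\ell(\varepsilon)\otimes L''_\ell(\varepsilon) \;=\; \langle q, m_q, q\rangle \;+\; \varepsilon^{e_0+1}\cdot(\text{error tensor}), \]
where each $L_\ell, L'_\ell, L''_\ell$ is a linear form in the corresponding group of tensor variables whose $\varepsilon$-coefficients are integers and whose supports have bounded size. Coppersmith builds such a gadget by a Coppersmith--Winograd-style combinatorial identity tuned to the long-thin regime, and its border-rank bound is verified by a finite substitution/degeneration computation; I would reproduce this directly, since routing through the general laser method would be heavier than necessary here.

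Given the gadget, the rest is routine amplification. Take its $k$-th Kronecker power: border rank is submultiplicative under $\otimes$, so this yields an approximate algorithm for $\langle q^k, m_q^{\,k}, q^k\rangle$ of border rank at most $(q^2+O(q))^k = q^{2k}(1+O(1/q))^k$ and order of approximation at most $e_0 k$. Choosing $k \le q$ forces $(1+O(1/q))^k = O(1)$, so the border rank is $O(q^{2k})$. Bini's conversion then turns an order-$e$, border-rank-$r$ algorithm into an exact bilinear algorithm over $\F$ using $O(e^2)\cdot r$ scalar multiplications, here $O(k^2)\cdot O(q^{2k})$. To cover an arbitrary target $N$, set $q := \lceil \log N\rceil$ and $k := \lceil \log N/\log\log N\rceil$, so that $q^k \in [N, N\log N)$ and $k \le q$ for all large $N$; restrict the outer dimensions of the $\langle q^k, m_q^{\,k}, q^k\rangle$ algorithm down to $N$ by deleting rows and columns, and note $m_q^{\,k} \ge (q^k)^{\alpha_0} \ge N^{\alpha_0} \ge N^\alpha$, so the inner dimension can be padded/restricted to exactly $N^\alpha$. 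This gives an algorithm for $\langle N, N^\alpha, N\rangle$ using $O(k^2)\cdot O((q^k)^2) = O(k^2)\cdot O(N^2\log^2 N) = N^2\cdot\poly(\log N)$ scalar multiplications.

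It remains to discharge the bookkeeping in the statement. For the \emph{additions}: rather than flattening the $k$-fold construction, implement it level by level; at each of the $k$ levels one forms $O(q^2)$ bounded-support linear combinations of sub-blocks, and all arithmetic is carried out in $\F[\varepsilon]/(\varepsilon^{D})$ with $D = \Theta(k)$, so each scalar operation costs $\tilde O(k)$ field operations. A geometric sum across the $k$ levels (dominated by the innermost level, where blocks are smallest but most numerous) shows the total number of field operations --- multiplications and additions alike --- is still $N^2\cdot\poly(\log N)$. For \emph{field generality}: all coefficients in the gadget are integers, and the $\varepsilon$-bookkeeping is exact truncated-polynomial arithmetic followed by reading off a single coefficient (equivalently, shifts by powers of $\varepsilon$), so no division by a field element is ever performed. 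Hence the algorithm runs over an arbitrary field, in particular over a finite field with $O(2^{\poly(\log N)})$ elements, each field operation counting as one arithmetic operation, which is exactly what the lemma asserts. The one place where real work is hidden is the first paragraph: producing a rectangular gadget whose border rank meets the $q^2$ lower bound up to an additive $O(q)$ term at aspect ratio $q^{0.172}$; everything downstream is amplification and accounting.
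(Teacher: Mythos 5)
You have correctly identified where the difficulty lives, but the object you place there is not one you construct, and it is not the object Coppersmith actually builds. Your entire amplification rests on a family of gadgets, one for each $q$, computing the \emph{full} product $\langle q, q^{0.172}, q\rangle$ with border rank $q^2+O(q)$ and order of approximation bounded by an absolute constant independent of $q$. No such family is exhibited in your proposal (you defer it to a ``finite substitution/degeneration computation''), and it does not match the 1982 construction: Coppersmith's only finite gadget is a single fixed $5$-product approximate identity for a \emph{partial} $2\times 3$ by $3\times 2$ product (with prescribed zero entries), whose $M$-th tensor power multiplies $2^M\times 3^M$ and $3^M\times 2^M$ matrices with structured sparsity patterns, using $O(5^M)$ products and approximation order $\Theta(M)$ --- not constant. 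The additive-$O(q)$ error term is load-bearing in your argument (you need $(1+O(1/q))^k=O(1)$ for $k\le q$), so you cannot substitute a weaker ``rank $q^2\,\poly(\log q)$'' gadget without the tensor powers blowing up. In short, the hardest step is assumed rather than proved, and the sketch of how it would be proved points at the wrong construction.

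The actual route (Appendix~\ref{coppersmith-appendix}) fills this hole with Sch\"onhage's decomposition paradigm: dense rectangular inputs $A''$, $B''$ are pre- and post-multiplied by explicit rectangular Vandermonde matrices so that the product reduces to a product of \emph{sparse, structured} matrices of the exact shape the tensor-powered $5$-product gadget can handle; trace symmetry then yields algorithms for the other rectangular orientations, and tensoring two of these gives the final $N\times N^{\alpha}\times N$ shape with $\alpha\approx 0.172$ emerging from the count ${M\choose 4M/5}2^{4M/5}\approx 5^M/\poly(M)$ versus $2^{2M/5}$. This substitution also changes where the real work in the lemma sits: once partial products and Vandermonde conditioning enter, the nontrivial claim is no longer the rank bound but that the pre/post-processing (multiplying by, and inverting minors of, Vandermonde matrices) costs only $\poly(\log N)$ amortized operations per entry, which is handled by fast multipoint evaluation. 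Your closing bookkeeping --- Bini-style conversion with overhead polynomial in the approximation order, level-by-level implementation of the additions, and integrality of coefficients for field generality --- is sound in spirit and corresponds to real steps in the exposition, but it cannot rescue a proof whose first paragraph is a placeholder for a gadget that is stronger than anything known.
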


\begin{proofof}{Theorem~\ref{evalaccthr}} Given a $\SYM \circ \ACC \circ \THR$ circuit $C$ and size $s$, convert $C$ into a $\SYM \circ \SYM$ circuit $C''$ of $2^{(\log s)^c}$ size using Lemma~\ref{accthr2symsym}. Compute a symmetric rank decomposition of $C$ into $2^{n/2} \times 2^{3(\log s)^c}$ and $2^{3(\log s)^c} \times 2^{n/2}$ $0$-$1$ matrices $A$ and $B$ respectively, along with a function $f:[2^{3(\log s)^c}] \rightarrow \{0,1\}$. Compute the product of $A$ and $B$ in $2^n \cdot \poly(\log s, n)$ time, using Lemma~\ref{coppersmith}. Finally, evaluate function $f$ on all entries of the matrix product. This can be done by numerically sorting the entries, replacing each entry $v$ by $f(v)$, then inverting the sorted order, in time $2^n\cdot \poly(\log s, n) + 2^{O(\log s)^c}$. For $s \leq 2^{n^{o(1)}}$, the runtime is $2^n \cdot \poly(n)$. 
\end{proofof}

\subsection{Counting satisfying assignments to ACC of linear thresholds}

The evaluation algorithm of Theorem~\ref{evalaccthr} is quite powerful, substantially extending the class of circuits for which we can perform non-trivial circuit analysis. 

\begin{reminder}{Theorem~\ref{COUNTaccthr}} For every $m > 1$ and $d > 0$, there is an $\eps > 0$ such that counting satisfying assignments to $\ACC\circ\THR$ circuits of size $2^{n^{\eps}}$, depth $d$, and MOD$m$ gates can be done in $2^{n-n^{\eps}}$ time.
\end{reminder}

\begin{proof}  
For all $k \in \N$ and for $i=1,\ldots,2k$, define a $\text{Bit}^k_i$ function with $2^{2k}$ inputs as follows: for all $i=1,\ldots,2k$, $\text{Bit}^k_i$ outputs the $i$th bit of the sum of its input bits. Clearly, a $\text{Bit}^k_i$ function is symmetric.

Suppose we are given an $\ACC\circ\THR$ circuit $C$ of size $s$ and $n$ inputs, and we wish to count its satisfying assignments. Let $\ell < n/2$ be a parameter to set later. For every assignment $A_j \in \{0,1\}^{2\ell}$ to the last $2\ell$ inputs of $C$, make a copy of $C$ with the assignment $A_j$ plugged into those $2\ell$ inputs, calling this copy $C_{A_j}$. Note that each $C_{A_j}$ has (the same) $n-2\ell$ inputs $x_1,\ldots,x_{n-2\ell}$.

For every $i=1,\ldots,2\ell$, define $B_i(x_1,\ldots,x_{n-2\ell}) := \text{Bit}^{\ell}_i(C_{A_1}(x_1,\ldots,x_{n-2\ell}),\ldots,C_{A_{2^{2\ell}}}(x_1,\ldots,x_{n-2\ell}))$. Each function $B_i$ can be implemented in $s' = 2^{2\ell} \cdot s$ size, as a $\SYM \circ \ACC \circ \THR$ circuit. Applying Theorem~\ref{evalaccthr}, $B_i$ can be evaluated on all of its $2^{n-2\ell}$ possible assignments in time \[2^{n-2\ell}\cdot \poly(n) + 2^{\poly(\log s')} \leq 2^{n-2\ell}\cdot \poly(n) + 2^{\poly(\ell + \log s)}.\]

The above for-loop over all $i$ produces $2\ell \cdot 2^{n-2\ell}$ bits: for each of the $2^{n-2\ell}$ partial assignments to $n-2\ell$ variables, we learn the number (in $2\ell$ bits) of partial assignments on the other $2\ell$ variables which result in satisfaction. The number of all satisfying assignments is obtained by simply summing all $2\ell$-bit numbers obtained from the $2^{n-2\ell}$ assignments, in $2^{n-2\ell} \cdot \poly(\ell)$ time.

Letting $\ell = n^{\eps}/2$ for sufficiently small $\eps > 0$, we have a $2^{n-n^{\eps}}$ time algorithm.
\end{proof}

\subsection{Faster 0-1 linear programming}

$\ACC\circ\THR$ circuits are definitely powerful enough to simulate $0$-$1$ integer linear programming; a straightforward application of Theorem~\ref{COUNTaccthr} would yield a faster algorithm for the problem. However, the improvement over exhaustive search would be rather minor, and tedious to calculate.  
By modifying the proof of Theorem~\ref{evalaccthr} in appropriate places, we can derive a better algorithm in this case: 

\begin{reminder}{Theorem~\ref{IP}} Every 0-1 integer linear program with $n$ variables and $s$ constraints can be solved in time $2^{n-\Omega(n/((\log M)(\log s)^{5}))}\cdot \poly(s,n,M)$ with high probability, where $M \leq 2^{o(n)}$ upper bounds the bit complexity of the coefficients in the program.
\end{reminder}

\begin{proof} Consider a 0-1 linear program of the form $Ax \leq b$, along with a cost function $\langle c,x\rangle$ we wish to maximize, where $A \in \Z^{s \times n}$, $b \in \Z^s$, and $c \in ([-2^M,2^M] \cap \Z)^n$ by assumption on $M$. First, reduce the optimization problem to one of feasibility, in a standard way: include $\langle c, x \rangle \geq v$ as an additional constraint for various $v \in \Z$, and by binary searching on $v$, we maximize the value of $v$ such that the $s+1$ constraint system remains feasible. Since the $x_i$ are Boolean valued, the binary search uses at most $O(M + \log n)$ calls to feasibility questions.

Next, observe the feasibility questions can be viewed as a satisfiability question for a depth-two circuit $D$ with an AND at the top gate, and linear threshold gates on the bottom layer, by directly translating each constraint in the program into a linear threshold gate. By Theorem~\ref{addition} and the argument in Lemma~\ref{accthr2symsym}, each threshold gate in the circuit $D$ can be replaced with a polynomial-sized ${\sf LEQ} \circ \AND \circ \OR \circ \SYM$ circuit, where ${\sf LEQ}$ computes on $n$-bit integers $a$ and $b$ whether $a \leq b$. As ${\sf LEQ}$ has an $\OR\circ\AND\circ\sf XOR$ circuit of $O(n^2)$ size for $n$-bit inputs (see \cite{Chandra-Stockmeyer-Vishkin84} for a reference), the satisfiability question for the circuit $D$ reduces to the SAT question for an $\AC^0[2] \circ \SYM$ circuit $C$ where the $\AC^0[2]$ part has depth 5. Following the strategy of Theorem~\ref{COUNTaccthr} (and the author's ACC SAT algorithm~\cite{Williams11}), the satisfiability question for $C$ with $n$ inputs and size $\poly(s)$ can be efficiently converted into the problem of evaluating a larger $\AC^0[2] \circ \SYM$ circuit $C'$, where $C'$ has $n' = n-k$ inputs, $2^k \cdot \poly(s,M)$ size, $k < n/2$ is a parameter, and the $\AC^0[2]$ part has depth 6. More precisely, $C'$ is an OR of $2^k$ copies of the depth-5 circuit $C$, and each copy has its first $k$ inputs assigned to a distinct string from $\{0,1\}^k$. Clearly, this circuit $C'$ is satisfiable if and only if $C$ is satisfiable.

Now we wish to evaluate $C'$ on all $2^{n-k}$ inputs, efficiently. Rather than applying Beigel-Tarui at this point, as in Lemma~\ref{accthr2symsym}, we instead apply the probabilistic polynomials of Smolensky~\cite{Smolensky87} to convert $C'$ into a $\SYM \circ \SYM$ circuit $C''$. In particular, we use a slight modification of Smolensky's construction, as described by Kopparty and Srinivasan~\cite{KS12}.

\begin{theorem}[\cite{Smolensky87,KS12}] \label{probpoly} For every $\AC^0$ circuit $C$ of depth $d$, size $s$, and $n$ inputs, and $\eps > 0$, there is a distribution of $n$-variate polynomials ${\cal D}_C$ over $\F_2$ with the following properties. Each $p$ with nonzero support in ${\cal D}_C$ has degree at most $(4\log s)^{d-1}\cdot (\log 1/\eps)$, a polynomial $p$ can be sampled from ${\cal D}_C$ in $n^{O(\log s)^{d-1}(\log 1/\eps)}$ time, and for every $x \in \{0,1\}^n$, $\Pr_{p \thicksim {\cal D}_C}[p(x) = C(x)] \geq 1-\eps$. 
\end{theorem}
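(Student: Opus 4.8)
The plan is to follow the classical Razborov--Smolensky construction of probabilistic polynomials over $\F_2$, building the polynomial gate-by-gate from the inputs up, bounding the total error by a union bound over the at most $s$ gates, and distributing the error budget across the depth levels in the Kopparty--Srinivasan way so that only $d-1$ of the degree factors carry a $\log s$ term while a single factor carries $\log(1/\eps)$. Concretely, we will assign to each gate $g$ of $C$ a random polynomial $p_g \in \F_2[x_1,\dots,x_n]$ — depending on the circuit structure and on fresh independent randomness at $g$, but not on the eventual input — and show that for every fixed $x \in \{0,1\}^n$, with probability at least $1-\eps$, $p_g(x)$ equals the true value of $g$ on $x$ for \emph{all} gates $g$ simultaneously; the output polynomial $p := p_{g'}$ is then the sample from ${\cal D}_C$.

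For the gadgets: a NOT gate is computed exactly by $1 + y$ over $\F_2$, which does not raise degree. For an $\OR$ gate of fan-in $m$ and target error $\delta$, use the random-parity trick: sample $r = \lceil \log(1/\delta)\rceil$ independent uniformly random subsets $T_1,\dots,T_r \subseteq [m]$, form the $\F_2$-linear forms $\ell_i = \sum_{j\in T_i} y_j$ in the gate's inputs $y_1,\dots,y_m$, and take $P = 1 + \prod_{i=1}^r (1+\ell_i)$. If all $y_j=0$ then $P=0$; if some $y_j=1$ then each $\ell_i$ equals $1$ independently with probability $\tfrac12$, so $P=1$ except with probability $2^{-r}\le\delta$. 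The key point is that $\deg P$ in its inputs is at most $r = O(\log(1/\delta))$, \emph{independent of the fan-in}. An $\AND$ gate is handled dually via De Morgan. Substituting into $P$ the polynomials already constructed for the inputs of $g$ multiplies the ``local'' degree $r$ of $P$ by the maximum degree among those input polynomials.

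Composing bottom-up, a gate at distance $j$ from the inputs receives a polynomial of degree at most the product of the $j$ local degree factors along a path to it, so $\deg p_{g'}$ is at most a product of $d$ factors of the form $\log(1/\delta_g)$. We assign the output gate budget $\delta_{g'}=\eps/2$, contributing a factor $O(\log(1/\eps))$, and give each of the remaining at most $s$ gates a budget $1/\poly(s)$ chosen so each contributes a factor at most $4\log s$; the union bound then yields total error at most $\eps$, while $\deg p \le (4\log s)^{d-1}\cdot\log(1/\eps)$, as claimed. Correctness is the routine observation that whenever every gate's local polynomial computes its gate correctly on the Boolean values it is actually fed, the composed output polynomial computes $C(x)$, so the bad event is contained in the union over gates of ``this gate's local polynomial errs on correct inputs'', whose probabilities sum to at most $\eps$. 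For the sampling bound: $p$ is determined by the $O(s\log(1/\eps))$ random bits specifying the sets $T_i$, and producing it in explicit monomial form amounts to expanding the composition, which has at most $\binom{n}{\le D}= n^{O(D)}$ monomials for $D$ the degree bound, i.e.\ $n^{O((\log s)^{d-1}\log(1/\eps))}$ time.

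The only genuinely delicate step, and the reason for invoking the Kopparty--Srinivasan variant rather than plain Smolensky, is the degree bookkeeping: a uniform $\eps/s$ budget on every gate would give $(\log(s/\eps))^d$, with all $d$ factors carrying a $\log(1/\eps)$ term. The refinement is simply to notice that only the output gate's error need be tied to $\eps$ while every internal gate can take a $1/\poly(s)$ budget, producing exactly $d-1$ factors of $\Theta(\log s)$ and one factor of $\Theta(\log(1/\eps))$. Everything else — exactness of NOT, fan-in independence of the parity gadget, the substitution/degree-multiplication estimate, independence of the per-gate randomness, and the union bound — is standard.
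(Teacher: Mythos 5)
Your gadgets (the $\F_2$ parity trick for $\OR$/$\AND$, exact NOT, degree multiplication under substitution) and your correctness bookkeeping are all standard and fine. The problem is the step you yourself single out as the ``only genuinely delicate'' one, the error budgeting: it does not work, and it fails precisely in the regime the paper needs. If each of the up to $s$ internal gates is given error budget $\delta = 1/\poly(s)$ --- which is forced if its degree factor $\log(1/\delta)$ is to be $4\log s$ --- then your union bound gives total internal error $s\cdot\delta = 1/\poly(s)$, so the overall error is only bounded by $\eps/2 + 1/\poly(s)$. That is at most $\eps$ only when $\eps \geq 1/\poly(s)$. Theorem~\ref{probpoly} is claimed for every $\eps>0$, and in the application (Theorem~\ref{IP}) it is invoked with $\eps = 1/(10\cdot 2^k)$ where $k$ can be nearly linear in $n$ while the circuit size is polynomial, i.e.\ $\eps$ is exponentially smaller than any $1/\poly(s)$. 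In that regime the union-bound analysis forces every internal gate's budget down to about $\eps/(2s)$, reintroducing a $\log(1/\eps)$ factor at every level and yielding the classical bound $(O(\log(s/\eps)))^{d}$ rather than $(4\log s)^{d-1}\log(1/\eps)$. The distinction matters: the final degree must be $O(k(\log s)^4)$ with $k$ to the first power, or the $2^{n-\Omega(n/((\log M)(\log s)^5))}$ running time collapses.

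The reason the fix is not ``simply to notice'' a better budget split is that the bottom-up analysis inherently charges the output error the \emph{sum} of all per-gate error probabilities, so no reassignment of budgets can make $s$ internal gates, each of error $\geq 1/\poly(s)$, compatible with a sub-$1/\poly(s)$ total error. Extracting a single $\log(1/\eps)$ factor is the actual content of the Kopparty--Srinivasan refinement cited in the theorem (and of the later, fully general treatment by Harsha and Srinivasan): it requires changing the amplification itself, reducing the error at the output without pushing a $\log(1/\eps)$-sized parameter into the gadgets at the lower $d-1$ levels. That composition argument is the missing piece of your proof; note that the paper does not reprove the statement either, but imports it from the cited works.
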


We apply Theorem~\ref{probpoly} as follows. Recall that $C'$ is an OR of some $\AC^0[2] \circ \SYM$ circuits $C_1,\ldots,C_{2^k}$, each with (the same) $n-k$ inputs. Moreover, the top $\AC^0[2]$ part of each $C_i$ has depth 5, and each $C_i$ takes $\poly(s,M)$ inputs (coming from the outputs of $\SYM$ gates). For every $i$, we take the top $\AC^0$ part of $C_i$, and invoke Theorem~\ref{probpoly} with $\eps = 1/(10 \cdot 2^k)$ to sample $p_i \thicksim {\cal D}_{C_i}$ of degree at most $O(k(\log s)^{4})$ and  at most $\poly(s,M)^{O(k \cdot (\log s)^{4})}$ monomials. We replace the $\AC^0$ part of $C_i$ with the XOR of ANDs circuit $p_i$. Now the circuit $C'$ is an OR of $2^k$ XOR of AND of SYM circuits; call them $C''_1,\ldots,C''_{2^k}$. For every input $x \in \{0,1\}^{n-k}$, the $\SYM$ gates of $C'$ produce a single $\poly(s,M)$-bit length input $y$. Taking the union bound over all $2^k$ subcircuits, every $C''_1,\ldots,C''_{2^k}$ outputs the same values as $C_1,\ldots,C_{2^k}$ on $x$, with probability at least $1-1/10$. 

Now we randomly convert the topmost OR in $C'$ to an XOR, with the usual Razborov-Smolensky subsum trick: we pick $r_{1,1},r_{2,1},r_{1,2},r_{2,2},\ldots,r_{1,2^k},r_{2,2^k} \in \{0,1\}$ uniformly at random, and replace $C = \OR(C''_1,\ldots,C''_{2^k})$ with \begin{eqnarray*}
C''(x_1,\ldots,x_{n-k}) &:=& \left(\sum_{i=1}^{2^k} r_{1,i}\cdot C''_i(x_1,\ldots,x_{n-k})  \bmod 2\right)\vee \left(\sum_{i=1}^{2^k} r_{2,i}\cdot C''_i(x_1,\ldots,x_{n-k}) \bmod 2\right)\\
& = & \sum_{i=1}^{2^k} r_{1,i}\cdot C''_i(x_1,\ldots,x_{n-k})  + \sum_{i=1}^{2^k} r_{2,i}\cdot C''_i(x_1,\ldots,x_{n-k})\\
& &  + \left(\sum_{i=1}^{2^k} r_{1,i}\cdot C''_i(x_1,\ldots,x_{n-k})\right)\cdot \left(\sum_{i=1}^{2^k} r_{2,i}\cdot C''_i(x_1,\ldots,x_{n-k})\right) \bmod 2,
\end{eqnarray*}
which means that $C''$ equals
\[\sum_{i=1}^{2^k} r_{1,i}\cdot C''_i(x_1,\ldots,x_{n-k})  + \sum_{i=1}^{2^k} r_{2,i}\cdot C''_i(x_1,\ldots,x_{n-k}) + \sum_{i,j=1}^{2^k} r_{1,i}\cdot r_{2,j} \cdot C''_i(x_1,\ldots,x_{n-k})\cdot C''_i(x_1,\ldots,x_{n-k}) \bmod 2.\]
Now for every $x \in \{0,1\}^{n-k}$, 
\begin{eqnarray*}
\lefteqn{\Pr_{p_i \thicksim {\cal D}, r_{i,j} \in \{0,1\}}[C''(x) \neq C'(x)]}\\
& \leq & \Pr_{p_1,\ldots,p_{2^k} \thicksim {\cal D}_{C_i}}[\text{$\exists$ $i$}, C''_i(x) \neq C_i(x)] + \Pr_{r_{i,j} \in \{0,1\}}[\OR(C''_1(x),\ldots,C''_{2^k}(x)) = C'(x) ~|~ \text{$\forall$ $i$}, C''_i(x) = C_i(x)]\\
& \leq & 1/10 + 1/4 \leq 1/3.\end{eqnarray*}

That is, for every input $x \in \{0,1\}^{n-k}$, the probability that $C'(x)=C''(x)$ will be greater than $2/3$. 

Since each polynomial $p_i$ has degree at most $O(k \cdot (\log s)^4)$, the AND gates representing the monomials of $p_i$ have $t \leq O(k \cdot (\log s)^4)$ fan-in. Applying another part of Lemma~\ref{accthr2symsym}, the $\AND_t \circ \SYM$ subcircuits of $C''$ with $\poly(s,M)$ wires can be replaced by a single $\SYM$ gate with $\poly(s,M)^{O(t)}$ input wires. This results in an $\XOR \circ \SYM$ circuit $C''$ of $\poly(s,M)^{O(k \cdot (\log s)^4)}$ total wires; this is also a $\SYM\circ\SYM$ circuit. 

Let $\eps > 0$ be a parameter, and set $k := \max\{1,\frac{\eps n}{(\log M)(\log s)^{5}}\}$. (Note that if $k=1$, the statement of Theorem~\ref{IP} is trivially true.) Following the proof of Theorem~\ref{evalaccthr}, we can apply fast rectangular matrix multiplication to evaluate $C''$ on all $2^{n-k}$ inputs. For sufficiently small $\eps > 0$, the matrix multiplication runs in time 
\[2^{n-k}\cdot \poly(O(k \cdot (\log s)^{4}),\log M, n-k) + \poly(s,M)^{O(k \cdot (\log s)^{4})} \leq 2^{n-\Omega\left(\frac{n}{(\log M)(\log s)^{5}}\right)}\cdot \poly(s,M,n).\] The output of this procedure is a $2^{n-k}$-bit string which, for every $x \in \{0,1\}^{n-k}$, contains the correct output $C'(x)$ with probability at least $2/3$. 

Suppose we repeat the above randomized procedure for $n^2$ times: that is, for $n^2$ times, we independently sample $2^k$ polynomials $p_i$ for each $C_i$ and sample $r_{i,j} \in \{0,1\}$, constructing $n^2$ different circuits $C''_1,\ldots,C''_{n^2}$ from $C'$. Then, standard tail bound arguments show that the majority value output by $C''_1(x),\ldots,C''_{n^2}(x)$ equals $C'(x)$ for every $x \in \{0,1\}^{n-k}$, with high probability. If some assignment $x^{\star}$ has majority value $1$, we conclude that the integer program is \emph{feasible}; otherwise, we output \emph{infeasible}.
\end{proof}

\subsection{Non-uniform $\ACC\circ \THR$ lower bounds}

We now turn to the main application of the evaluation algorithm:

\begin{reminder}{Thm~\ref{nexpaccthr}} $\NEXP$ does not have non-uniform $\ACC \circ \THR$ circuits of quasi-polynomial size.
\end{reminder}

To set the context, let us discuss the prior connection between known circuit satisfiability algorithms and circuit lower bounds. 

\begin{definition} Let ${\cal C}$ be a circuit class. ${\cal C}$ is said to be \emph{typical} if, given any circuit $D$ from one of the classes ${\cal C} \circ {\cal C}$, $\AND \circ {\cal C}$, $\OR \circ {\cal C}$, ${\sf NOT} \circ {\cal C}$, an equivalent $D' \in {\cal C}$ can be produced in $\poly(\text{size}(D))$ time.
\end{definition}

That is, ${\cal C}$ is typical if it is \emph{efficiently closed under composition, unbounded fan-in AND, OR, and negations}. Most well-studied circuit classes have this property. 

From prior work, we know there are connections between the existence of good SAT algorithms for typical circuit classes, and lower bounds against those classes:

\begin{theorem}[\cite{Williams11}]\label{typicalold} Let ${\cal C}$ be typical. Suppose for every $c \geq 1$, there is an $\eps > 0$ and an an algorithm for satisfiability of ${\cal C}$ circuits running in time $O(2^{n-n^{\eps}})$ on circuits with $n$ inputs and $n^{\log^c n}$ size. Then $\NEXP$ does not have quasi-polynomial size ${\cal C}$ circuits.
\end{theorem}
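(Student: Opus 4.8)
\medskip
\noindent\textbf{Proof proposal.} I would prove the contrapositive: assuming $\NEXP$ \emph{does} have ${\cal C}$ circuits of quasi-polynomial size, I combine this with the hypothesized satisfiability algorithm to show ${\sf NTIME}[2^n]\subseteq{\sf NTIME}[2^n/n^{\omega(1)}]$, contradicting the nondeterministic time hierarchy theorem. The starting point is a standard ``succinct tableau'' fact: for every $A\in{\sf NTIME}[2^n]$ there is a polynomial $q$ such that, for each input $x$ with $|x|=n$, $x\in A$ iff there is a string $y$ of length $2^n\cdot q(n)$ with the property that for \emph{every} index $i\in\{0,1\}^{n+O(\log n)}$, a fixed $q(n)$-size local check $\mathrm{ok}_x\big(i,\,y|_{S(i)}\big)$ accepts, where $S(i)$ is an $O(1)$-size set of positions of $y$ computable from $i$ by a $q(n)$-size circuit. (This comes from a RAM simulation of $A$'s verifier together with the usual memory-access-transcript trick that turns a $2^n$-time computation into a locally checkable string of length $\tilde O(2^n)$.)

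First I would run the easy-witness machinery. The relation ``$y$ is a valid succinct-tableau string for $x$'' is decidable in $2^n\cdot\poly(n)$ time, so by the Impagliazzo--Kabanets--Wigderson easy-witness lemma~\cite{IKW}, in the form for a circuit class as used in~\cite{Williams10,Williams11}, and using that $\NEXP$ has quasi-polynomial size ${\cal C}$ circuits with ${\cal C}$ typical, we get: for every $x\in A$ of length $n$ there is a valid string $y_x$ that is the truth table of a ${\cal C}$ circuit $W_x$ on $n+O(\log n)$ inputs of size $n^{\log^k n}$, where $k$ depends only on ${\cal C}$ and $A$.

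Now the simulation. On input $x$, $|x|=n$: nondeterministically guess a ${\cal C}$ circuit $W$ on $n+O(\log n)$ inputs of size $n^{\log^k n}$; this costs only $2^{\poly(\log n)}$ time. Construct the circuit $D_{x,W}$ on $n+O(\log n)$ inputs by $D_{x,W}(i)=1$ iff $\mathrm{ok}_x\big(i,\,W(S(i))\big)$ \emph{rejects}: that is, $D_{x,W}$ uses the fixed $q(n)$-size wiring to compute $S(i)$ and the local predicate, and feeds the $O(1)$ indices of $S(i)$ into $W$. Since ${\cal C}$ is typical it contains $\AC^0$ and is efficiently closed under a bounded number of compositions with $\AC^0$ wiring and with ${\cal C}$ circuits, so $D_{x,W}$ is equivalent to a ${\cal C}$ circuit of quasi-polynomial size --- say size $n^{\log^{c}n}$ for a constant $c$ slightly larger than $k$ --- constructible in quasi-polynomial time. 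Run the hypothesized ${\cal C}$-satisfiability algorithm on $D_{x,W}$, invoking the hypothesis with this $c$ to obtain an $\eps>0$; it runs in time $2^{(n+O(\log n))-(n+O(\log n))^{\eps}}\le 2^n\cdot\poly(n)/2^{n^{\Omega(1)}}$. Accept iff $D_{x,W}$ is unsatisfiable. For correctness: if $x\in A$, then guessing $W=W_x$ makes $\mathrm{ok}_x(i,W(S(i)))$ accept for all $i$, so $D_{x,W}$ is unsatisfiable and this branch accepts; if $x\notin A$, then for \emph{every} circuit $W$ the encoded string fails some local check, so $D_{x,W}$ is satisfiable on every guess and all branches reject. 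Hence $A\in{\sf NTIME}[2^n/n^{\omega(1)}]$; as $A\in{\sf NTIME}[2^n]$ was arbitrary, ${\sf NTIME}[2^n]\subseteq{\sf NTIME}[2^n/n^{\omega(1)}]$, contradicting the nondeterministic time hierarchy theorem. Therefore $\NEXP$ has no quasi-polynomial size ${\cal C}$ circuits.

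The step I expect to be the real obstacle --- and exactly where the hypothesis uses that ${\cal C}$ is \emph{typical} --- is ensuring that $D_{x,W}$ stays inside ${\cal C}$: we are gluing the (arbitrary, $\poly(n)$-size) tableau wiring onto a guessed ${\cal C}$ witness circuit $W$, feeding $\AC^0$-computed values into $W$ and processing $W$'s outputs by more $\AC^0$ circuitry; without efficient closure under composition, and under unbounded-fan-in $\AND$, $\OR$, and negation (which together force $\AC^0\subseteq{\cal C}$ and the requisite stability under composing with it on both sides), this object leaves the class and the ${\cal C}$-SAT algorithm no longer applies. A secondary point requiring care is the easy-witness step: getting the witness circuit $W_x$ to lie in ${\cal C}$ itself rather than merely in $\poly(\log)$-scaled general size, and pinning the parameters ($\tilde O(2^n)$-length witness string, verification circuit with only $n+O(\log n)$ inputs) so that the final running time is genuinely $2^n/n^{\omega(1)}$ and not $2^{cn}$ for some $c>1$. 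It is precisely the failure of typicality for $\ACC\circ\THR$ --- a composition of two $\ACC\circ\THR$ circuits need not be an $\ACC\circ\THR$ circuit, since threshold gates would no longer sit only at the bottom --- that blocks this route for that class; this is why Theorem~\ref{nexpaccthr} is obtained instead from the counting algorithm of Theorem~\ref{COUNTaccthr}, via an argument that exploits only the weak closure properties ($\AND$, $\OR$, negation) that $\ACC\circ\THR$ does possess.
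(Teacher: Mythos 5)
There is a genuine gap, and it sits exactly at the step you flagged as "the real obstacle" but then dismissed too quickly. You assert that, because ${\cal C}$ is typical, the circuit $D_{x,W}(i)=\neg\,\mathrm{ok}_x(i,W(S(i)))$ is equivalent to a ${\cal C}$ circuit of quasi-polynomial size \emph{constructible in quasi-polynomial time}. Typicality only gives efficient closure under ${\cal C}\circ{\cal C}$, $\AND\circ{\cal C}$, $\OR\circ{\cal C}$ and ${\sf NOT}\circ{\cal C}$; the wiring pieces $\mathrm{ok}_x$ and $S$ coming from the RAM/tableau (or Succinct 3SAT) reduction are \emph{arbitrary} polynomial-size fan-in-two circuits, not $\AC^0$ circuits and not ${\cal C}$ circuits, so no finite application of these closure rules converts $D_{x,W}$ into a ${\cal C}$ circuit. (If general poly-size circuits could be placed into ${\cal C}$ efficiently, the lower bound would be pointless; for ${\cal C}=\ACC$ this is precisely what is unknown.) Under the assumption $\P\subseteq\NEXP\subseteq{\cal C}[\text{qpoly}]$ an equivalent ${\cal C}$ circuit \emph{exists}, but existence is not constructibility, and your simulation needs an actual ${\cal C}$ circuit to hand to the SAT algorithm.

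The proof this paper relies on (Theorem~\ref{typicalold} from~\cite{Williams11}, whose structure is recalled inside the proof of Theorem~\ref{weakAND}) closes this hole with a guess-and-verify step that is absent from your proposal: one nondeterministically guesses a ${\cal C}$ circuit $E''(x,g)$ purporting to compute the value of every gate $g$ of the composed circuit $E$, and then \emph{verifies} the guess gate by gate, expressing each local consistency condition (input gates, AND, OR, NOT) as satisfiability questions about small combinations of ${\cal C}$ circuits. This is where typicality (closure under AND/OR/NOT of ${\cal C}$ circuits) is actually consumed, and it is why the hypothesized SAT algorithm is invoked many times --- once per gate --- rather than once at the end as in your sketch; the nondeterministic algorithm $A$ of the paper outputs a correct ${\cal C}$ circuit $E'$ on some branch, and only then is the final unsatisfiability test run. (Your easy-witness step also needs the class-specific form "if $\NEXP\subseteq{\cal C}[\text{qpoly}]$ then $\NEXP$ witnesses have quasi-polynomial ${\cal C}$ circuits," not just~\cite{IKW}, but that is a citable refinement; the missing guess-and-verify conversion is the substantive defect.) One could alternatively try to make the tableau wiring genuinely $\AC^0$/${\sf NC}^0$ via more recent succinct-reduction machinery (cf.\ the paper's footnote citing~\cite{JMV13}), but that is a different argument from the one in~\cite{Williams11} and is not what your appeal to typicality provides.
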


For example, the proof that $\NEXP \not\subset \ACC$ follows from giving a faster-than-exhaustive-search ACC satisfiability algorithm, noting that $\ACC$ is typical, and applying Theorem~\ref{typicalold}. 

This theorem cannot be directly applied to a class such as $\ACC\circ\THR$, because it is not known whether $\ACC\circ\THR\circ\ACC\circ\THR$ can be efficiently simulated with $\ACC\circ\THR$. However, by modifying the argument of Theorem~\ref{typicalold} and using an algorithm for \emph{counting} SAT assignments, we can extend the theorem to circuits with a very weak closure property.\footnote{See also ~\cite{JMV13,Oliveira13} which consider other (stronger) closure properties.}

\begin{definition} Let ${\cal C}$ be a circuit class. We say ${\cal C}$ is \emph{weakly closed under AND} if, given the AND of two circuits of ${\cal C}$, an equivalent circuit in ${\cal C}$ can be produced in polynomial time. 
\end{definition}

Weak closure under AND is satisfied by strictly more circuit classes than the property of being typical. To give an example, any class of the form $\SYM \circ \cdots$ is weakly closed under AND, because an AND of $t$ $\SYM$ gates with $s$ wires can be collapsed into a single symmetric gate with $O(s^t)$ wires (as seen in the proof of Lemma~\ref{accthr2symsym}). However, classes like $\SYM \circ \SYM$ are \emph{not} known to be efficiently closed under composition or unbounded-fan in AND/OR, hence Theorem~\ref{typicalold} does not apply to such classes. We prove:

\begin{theorem}\label{weakAND} Let ${\cal C}$ be weakly closed under AND. Suppose for every $c \geq 1$, there is an $\eps > 0$ and an algorithm for counting the satisfying assignments of ${\cal C}$ circuits in time $O(2^{n-n^{\eps}})$ on circuits with $n$ inputs and $n^{\log^c n}$ size. Then $\NEXP$ does not have quasi-polynomial size ${\cal C}$ circuits.
\end{theorem}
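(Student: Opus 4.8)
The plan is to mimic the structure of the proof of Theorem~\ref{typicalold} (the $\NEXP \not\subset \ACC$ machinery), paying careful attention to where the full ``typical'' hypothesis was used and replacing each such use by the weaker counting hypothesis together with weak closure under AND. Recall that the standard proof proceeds by contradiction: assume $\NEXP$ has quasi-polynomial size ${\cal C}$ circuits, and derive (via the easy-witness / Impagliazzo--Kabanets--Wigderson--style collapse) a contradiction with the nondeterministic time hierarchy. Concretely, if $\NEXP \subseteq {\cal C}\text{-SIZE}(\text{quasipoly})$, then in particular $\P \subseteq {\cal C}\text{-SIZE}(\text{quasipoly})$, and moreover every language in $\NTIME[2^n]$ has succinct ${\cal C}$-circuit witnesses; one then shows that $\NTIME[2^n] \subseteq \NTIME[o(2^n)]$, contradicting the hierarchy theorem. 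The engine that drives the speedup is the following: to verify that a guessed ${\cal C}$-circuit $W$ of quasipolynomial size encodes a satisfying assignment to the exponentially long $3$-$\mathrm{SAT}$ (or $\mathrm{SUCCINCT}$-$3\mathrm{SAT}$) instance, one needs to check, for a quasipolynomial-size verifier circuit $V$, that the composed circuit $V \circ W$ (feeding outputs of copies of $W$ into $V$) is unsatisfiable; a faster-than-$2^n$ (co)nondeterministic algorithm for analyzing that composed circuit yields the contradiction.

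The first key step is therefore to re-examine the composition $V \circ W$. In the original argument $V$ is an $\AC^0$ (indeed $\NC$-type) verifier circuit, $W \in {\cal C}$, and typicality is used to fold $V \circ W$ back into a single ${\cal C}$ circuit whose satisfiability we then test. Here I cannot do that. Instead I will arrange, exactly as in the proof of Theorem~\ref{COUNTaccthr} and in the author's ACC-SAT argument~\cite{Williams11}, that the object we must analyze is not ``satisfiability of $V\circ W$'' but rather a \emph{counting} question about a circuit that is already (or is easily massaged into) a single ${\cal C}$ circuit. The mechanism: the $\NEXP$-hardness reduction can be taken so that $V$ is a conjunction of $\poly$ many local constraints, each of which, after plugging in the small witness circuit $W \in {\cal C}$, becomes an AND of a bounded number of ${\cal C}$ circuits; weak closure under AND collapses each such conjunction into a single ${\cal C}$ circuit. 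One then wants to count the number of (clause-index) inputs on which the resulting ${\cal C}$ circuit evaluates to $1$ — or, more precisely, to detect whether that count hits the all-satisfied threshold — which is exactly a \#SAT-type query answerable by the hypothesized $O(2^{n-n^\eps})$ counting algorithm for ${\cal C}$. The counting (rather than mere SAT) formulation is what lets us get away with only weak closure under AND: we never need to take an OR of ${\cal C}$ circuits, never need to negate, and never need to nest ${\cal C}$ inside ${\cal C}$ through a non-AND gate.

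The second step is bookkeeping on sizes and time bounds. The witness circuit $W$ has quasipolynomial size $m = n^{\log^c n}$ for some fixed $c$; the verifier contributes another quasipolynomial factor; the AND-collapse via weak closure is polynomial in the combined size, so the final ${\cal C}$ circuit we feed to the counting algorithm still has $n^{\log^{c'} n}$ size for a new constant $c'$. Invoking the hypothesis at that value of $c'$ gives an $\eps > 0$ and a $2^{N - N^\eps}$-time counting algorithm on $N$-input circuits of that size, where $N$ is the number of ``clause'' variables (roughly $n$, the input length of the original $\NEXP$ instance up to polynomial factors). Running this inside the nondeterministic simulation — guess $W$, deterministically collapse, then count — yields a nondeterministic algorithm for the original $\NTIME[2^n]$ language running in $2^{n - n^{\Omega(1)}} \cdot \poly$ time, contradicting the nondeterministic time hierarchy theorem. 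Since the hypothesis ``$\P \subseteq {\cal C}\text{-SIZE}(\text{quasipoly})$'' (needed to get succinct circuits for the $\P$-computations appearing inside the verifier) follows from the assumed $\NEXP$ upper bound, the contradiction is complete and we conclude $\NEXP \not\subseteq {\cal C}\text{-SIZE}(\text{quasipoly})$.

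The main obstacle I expect is \textbf{step one}: carefully routing the entire $\NEXP \not\subset \ACC$ argument so that the \emph{only} structural operation ever applied to ${\cal C}$ circuits is AND-of-a-constant-number (or AND-of-polynomially-many, which weak closure still handles by iterated pairing), and so that every satisfiability query is replaced by a counting query. In particular one must verify that the verifier circuit for $\mathrm{SUCCINCT}$-$3\mathrm{SAT}$ composed with the witness can genuinely be expressed as an AND over clause-indices of a single ${\cal C}$ circuit in the clause-index variables — this requires that the ``locality'' of the verification be pushed entirely into the symmetric/counting layer and into the ${\cal C}$-witness, which is precisely the trick used in Theorem~\ref{COUNTaccthr} (turning a SAT question into an evaluation/count of a $\SYM\circ{\cal C}$-type object). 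The rest — size arithmetic, the hierarchy-theorem contradiction, the easy-witness collapse — is standard and follows the template of~\cite{Williams10,Williams11} essentially verbatim.
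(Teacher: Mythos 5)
Your overall skeleton is the right one (assume $\NEXP$ has quasipolynomial-size ${\cal C}$ circuits, guess succinct witnesses, speed up ${\sf NTIME}[2^n]$, contradict the nondeterministic time hierarchy), and you correctly sense that the counting hypothesis is what must compensate for the missing closure properties. But the step you flag as ``the main obstacle'' is exactly where the proposal breaks, and your suggested resolution does not work. The composed object one must analyze is not an AND of ${\cal C}$ circuits over clause indices: between the clause index and the witness circuit $W$ sits the {\sc Succinct 3SAT} decoder (an arbitrary fan-in-two AND/OR/NOT circuit), and composing an arbitrary circuit with a ${\cal C}$ circuit does not produce a ${\cal C}$ circuit. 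The trick of Theorem~\ref{COUNTaccthr} (folding a brute-force over some variables into a symmetric output gate) is a device inside the $\#$SAT \emph{algorithm} for $\ACC\circ\THR$; it gives no way to re-express an arbitrary verifier circuit in the class ${\cal C}$, so ``pushing locality into the counting layer'' does not close the gap.

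The paper's proof closes it with two ideas absent from your proposal. First, equivalence of two ${\cal C}$ circuits $G,H$ can be decided using only weak AND-closure and counting: $G \equiv H$ iff $\#SAT(G)=\#SAT(H)=\#SAT(G\wedge H)$, and $G \equiv \neg H$ iff $\#SAT(G)+\#SAT(H)=2^n$ and $\#SAT(G\wedge H)=0$. Second, to convert the arbitrary circuit $E$ (verifier composed with witness) into an equivalent ${\cal C}$ circuit, the nondeterministic algorithm guesses a gate-indexed ${\cal C}$ circuit $E''(x,g)$ claiming to compute the value of every gate $g$ of $E$ on input $x$, \emph{together with} a second guessed circuit $E''_{not}$ claiming to be its pointwise negation, and verifies both gate-by-gate with the counting-based equivalence checks above --- AND gates directly via weak closure, NOT gates via $E''_{not}$, and OR gates via De Morgan ($\neg E''(x,g) \equiv E''_{not}(x,g_1)\wedge E''_{not}(x,g_2)$). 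Guessing the negation circuit is what substitutes for closure under negation, and the three-$\#$SAT-call equivalence test is what substitutes for closure under XOR/composition in the usual SAT-based verification. Without these, your reduction to ``a single counting query on a ${\cal C}$ circuit'' is not established.
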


Note that Theorem~\ref{nexpaccthr} (the $\ACC\circ\THR$ lower bound) follows immediately from Theorem~\ref{weakAND} and the counting algorithm of Theorem~\ref{COUNTaccthr}. It is our hope that Theorem~\ref{weakAND} may be applicable in the future to depth-two classes, such as $\SYM\circ\SYM$ and depth-two \emph{exact} threshold circuits~\cite{HP10}: an nontrivial counting SAT algorithm for one of these classes would entail new lower bounds.

\begin{proofof}{Theorem~\ref{weakAND}} (Sketch) Let us start with ${\cal C}$ as typical. We survey what is needed to conclude ${\cal C}$ lower bounds in the proof of Theorem~\ref{typicalold}, and show that the new hypothesis supplies these needs.

The idea is to show that $\NEXP \subset {\cal C}$ and the hypothesis implies every $L \in {\sf NTIME}[2^n]$ can be simulated in nondeterministic $2^n/n$ time, contradicting the  nondeterminstic time hierarchy~\cite{Zak83}. In particular, the assumptions imply that the $\NEXP$-complete problem {\sc Succinct 3SAT} on circuits of AND/OR/NOT with fan-in two, $n$ inputs, and $\poly(n)$ size can be nondeterministically solved in $O(2^{n-n^{\eps}})$ time, which is also provably false~\cite{Williams11survey}. Recall that 	{\sc Succinct 3SAT} is the problem: \emph{given an AND/OR/NOT circuit $C$ of fan-in two, does the truth table of $C$ encode a satisfiable 3-CNF formula?} That is, {\sc Succinct 3SAT} is a ``compressed'' version of the 3SAT problem. 

Suppose we are given an (arbitrary) circuit $C$ of size $s$ and wish to determine if it is a yes-instance of {\sc Succinct 3SAT}. Assuming $\NEXP$ has quasipolynomial-size circuits, it is proved that for every $C$ encoding a satisfiable 3-CNF $F$, there is a quasipolynomial-size circuit $D$ which succinctly encodes a satisfying assignment for $F$: for all $i$, $D(i)$ outputs the value of variable $x_i$ in the satisfying assignment. Our ``fast'' nondeterministic algorithm for {\sc Succinct 3SAT} guesses this circuit $D$, and uses it to construct a circuit $E$ with $n$ inputs and $n^{\log^c n}$ size for some $c$, which is unsatisfiable if and only if $D$ encodes a satisfying assignment to the formula $F$ encoded by $C$. 

Assuming $\NEXP$ has quasipolynomial-size ${\cal C}$ circuits and that there is an $O(2^{n-n^{\eps}})$ time algorithm for ${\cal C}$ satisfiability, it is proved that there is a nondeterministic algorithm $A$ running in $2^{n-\Omega(n^{\eps})}$ time which, given an AND/OR/NOT of fan-in two circuit $E$ of size $s$ and $n$ inputs, outputs an equivalent $E'$ of $s^{\log^c s}$ size from the class ${\cal C}$ on at least one nondeterministic branch (and prints \emph{no} on other branches). Running this algorithm $A$, obtaining $E'$, then running the ${\cal C}$ satisfiability algorithm on $E'$, we nondeterministically determine that $C$ is a yes-instance of {\sc Succinct-3SAT} in $2^{n-\Omega(n^{\eps})}$ time.

Now assume ${\cal C}$ is weakly closed under AND. The point where closure properties are relevant is precisely in the argument that the nondeterministic algorithm $A$ exists. In fact, if our hypothesis and the assumption that $\NEXP$ has quasipolynomial-size ${\cal C}$ circuits implies such an algorithm, it can be observed that the rest of the proof carries over without modification. We now construct such an algorithm $A$.

The algorithm $A$ starts by guessing a ${\cal C}$ circuit $E''$ of $n^{\log^c n}$ size which takes as input a pair $(x,g) \in \{0,1\}^n\times \{0,1\}^{\log(\text{size}(E))}$, and outputs $1$ if and only if the gate $g$ in $E$ outputs $1$ when $E$ is given the input $x$. (Such an $E''$ exists, assuming $\P$ has quasi-polynomial size ${\cal C}$ circuits.) 

Now we need to verify that for every gate $g$ indexed by $1,2,\ldots,\text{size}(E)$, $E''(x,g)$ outputs what gate $g$ of $E(x)$ outputs, on all $x$. Each gate $g$ is either an input, an AND of two previous gates $g_1$ and $g_2$, an OR of two previous gates $g_1$ and $g_2$, or a NOT of a previous gate $g_1$. 

To aid this verification, we show how to efficiently check for arbitrary ${\cal C}$ circuits $G$ and $H$ whether $G(x)=H(x)$ for all inputs $x$, using an algorithm for counting SAT assignments. Let $\#SAT(C)$ be the number of satisfying assignments to a circuit $C$. 
Observe that $G(x)=H(x)$ for all $x$ if and only if $\#SAT(G)=\#SAT(H)=\#SAT(G \wedge H)$. (Note the third quantity can be efficiently computed, assuming ${\cal C}$ is weakly closed under AND.) Moreover,  $G(x) \neq H(x)$ for all $x$ if and only if $\#SAT(G) + \#SAT(H)=2^n$ and $\#SAT(G \wedge H)=0$. Therefore, by counting SAT assignments, we have algorithms checking whether $G$ is equivalent to $H$, and whether $G$ is equivalent to the negation of $H$, both running in time $O(2^{n-n^{\eps}})$.

We claim that the verification problem for $E''$ can be reduced to a number of calls to the above kinds of checks. First, nondeterministically guess a circuit $E''_{not}$, intended to satisfy $E''_{not}(x,g) = \neg E''(x,g)$ for all $x$ and $g$. Verifying this condition can be done by counting SAT assignments, as described above. 

Checking $E''$ is correct on the input gates of $E$ means that for all $i=1,\ldots,n$, $E''(x_1,\ldots,x_n,i) = x_i$. Both $E''(x_1,\ldots,x_n,i)$ and $I(x_1,\ldots,x_n) = x_i$ are ${\cal C}$ circuits, hence their equivalence can be verified by $\#$SAT calls. Checking a NOT gate $g$ of $E$ with input gate $g_1$ is equivalent to checking that $E''_{not}(x,g_1) = E''(x,g)$ on all $x$. Checking an AND gate $g$ of two previous gates $g_1$ and $g_2$ amounts to checking that $E''(x,g) = E''(x,g_1)\wedge E''(x,g_2)$ on all $x$. To do this, compute $G_{and}(x) := E''(x,g_1)\wedge E''(x,g_2)$ (assuming ${\cal C}$ is weakly closed under AND), then check $G_{and}(x) = E''(x,g)$ for all $x$. Finally, for an OR gate $g$ with inputs $g_1$ and $g_2$, we want to check that $E''(x,g) = E''(x,g_1)\vee E''(x,g_2)$ on all $x$. This is equivalent to $\neg E''(x,g) = ((\neg E''(x,g_1))\wedge (\neg E''(x,g_2)))$ for all $x$. This can be checked by forming $G_{or}(x) := E''_{not}(x,g_1)\wedge E''_{not}(x,g_2)$, then checking that $G_{or}(x) = E''_{not}(x,g)$ for all $x$.

On a circuit $E$ with $s \leq n^{\log^c n}$ gates, the above procedure runs in $O(2^{n-n^{\eps}}\cdot s) \leq 2^{n-\Omega(n^{\eps})}$ time. When it concludes, we know that for all gates $g$ and all $x$ that $E''(x,g)$ outputs the correct value. The circuit $E'(x)$ output by $A$ simply evaluates $E''(x,g^{\star})$, where $g^{\star}$ is the output gate of $E$.
\end{proofof}

\section{Fast evaluation of depth-two threshold circuits}

Finally, we show a strong sense in which depth-two threshold circuits are \emph{weak}, by giving a fast algorithm for evaluating such circuit on many assignments in batch. The general theorem is:

\begin{theorem}\label{thrthreval} Given a depth-two linear threshold circuit $C$ with $2k$ inputs and at most $n^{1/12}$ gates with weights on the bottom layer of absolute value at most $W_b$, weights on the output gate of absolute value at most $W_o$, and given two sets $A,B \subseteq \{0,1\}^k$ where $|A|=|B|=n$, we can evaluate $C$ on all $n^2$ points in $A \times B$ using $n^2 \cdot \poly(\log W_o,\log n) + n^{1+1/12}\cdot \poly(\log n, \log W_b)$ time.
\end{theorem}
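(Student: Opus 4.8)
The plan is to reduce evaluation of the depth-two threshold circuit to a single ``weighted comparison'' matrix product of the form $C[i,j] = \sum_k w_k \cdot \mathrm{LEQ}(A'[i,k],B'[k,j])$, and then to show that such a product can be computed fast by combining Coppersmith's rectangular matrix multiplication (Lemma~\ref{coppersmith}) with a mild brute-force over ``high-order bits.'' First I would set up the reduction: each bottom gate $G_k$ computes $\sum_{\ell} u_{k,\ell} z_\ell \geq t_k$ where $z = (x,y)$ ranges over $A \times B$. Splitting the weighted sum into the part depending on the first $k$ coordinates (indexed by $a \in A$) and the part depending on the last $k$ coordinates (indexed by $b \in B$), the gate condition becomes $\mathrm{LEQ}\big(t_k - \langle u_k^{(2)}, b\rangle,\ \langle u_k^{(1)}, a\rangle\big)$. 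So if I define $A'[a,k] = \langle u_k^{(1)}, a \rangle$ and $B'[k,b] = t_k - \langle u_k^{(2)}, b\rangle$ (each an integer bounded by $k \cdot W_b$), then gate $k$ fires on input $(a,b)$ exactly when $\mathrm{LEQ}(A'[a,k], B'[k,b])$. The output gate then computes $\mathrm{sign}\big(\sum_k w_k \cdot \mathrm{LEQ}(A'[a,k],B'[k,b]) - t_o\big)$ with $|w_k| \le W_o$; hence it suffices to compute the integer matrix $C[a,b] = \sum_k w_k \cdot \mathrm{LEQ}(A'[a,k],B'[k,b])$ for all $(a,b) \in A \times B$, after which a single threshold per entry (costing $\poly(\log W_o, \log n)$) finishes the job.

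Next I would compute this weighted-LEQ product. The number of bottom gates is $r := n^{1/12}$, which is the ``inner dimension'' $N^{\alpha}$ in Lemma~\ref{coppersmith} with $N = n$ and $\alpha = 1/12 < .172$, so one ordinary $n \times r$ by $r \times n$ integer matrix product costs $n^2 \cdot \poly(\log n)$ over a field of $2^{\poly(\log n)}$ elements. The obstacle is that $\mathrm{LEQ}$ is not a bilinear function of $A'[a,k]$ and $B'[k,b]$, so it cannot be read off directly from a single matrix product. The standard fix is a hashing/bucketing trick on the bits of the entries: write each value $A'[a,k]$ and $B'[k,b]$ as an integer with $O(\log(k W_b))$ bits. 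Partition each such integer into a ``high'' part $h$ (the top $O(\log n)$ bits, say the top $\log r + O(1)$ bits plus a bit more) and a ``low'' part $\ell$. For a fixed choice of the high part $h_a$ of $A'[\cdot,k]$ and $h_b$ of $B'[k,\cdot]$ — actually one arranges this per column $k$ — the comparison $A'[a,k] \le B'[k,b]$ decomposes: if the high parts determine the comparison outright then the answer is a constant (contributing $w_k$ or $0$ to every relevant entry, handled by a rank-one update), and only when the high parts are equal does one need to compare low parts, which is again recursively a LEQ on shorter integers. Iterating $O(\log(k W_b)/\log n) = O(\log W_b / \log n)$ levels (since each level strips $\Omega(\log n)$ bits), and at the bottom level where entries fit in $O(\log n)$ bits one can afford to encode $\mathrm{LEQ}$ explicitly — e.g., for $b$-bit values with $b = O(\log n)$ one expands into $2^{O(b)} = \poly(n)$ indicator coordinates so that the comparison becomes a genuine inner product, increasing the inner dimension to $r \cdot \poly(n)$ but we keep it below $n^{.172}$ by choosing the bit-threshold suitably — gives the claimed bound. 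Accounting: the matrix-multiplication calls contribute $n^2 \cdot \poly(\log W_o, \log n)$, and the brute-force preprocessing of $A'$ and $B'$ (computing all inner products $\langle u_k^{(1)}, a\rangle$, $\langle u_k^{(2)}, b \rangle$, sorting bits, building the indicator encodings) contributes $n \cdot r \cdot \poly(\log n, \log W_b) = n^{1+1/12} \cdot \poly(\log n, \log W_b)$.

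The main obstacle, and the step I would spend the most care on, is making the bit-decomposition of $\mathrm{LEQ}$ compatible with a \emph{single} pass of Coppersmith's algorithm rather than $\poly(\log W_b)$ separate passes each costing $n^2 \cdot \poly(\log n)$ — that would still give $n^2 \cdot \poly(\log n, \log W_b)$, which is fine for Theorem~\ref{thrthreval} as stated, so in fact I can afford the simpler route: run $O(\log(k W_b))$ levels, at each level performing one rectangular product of inner dimension $\le n^{1/12} \cdot \poly(n) \le n^{.172}$ to handle the ``high parts equal, compare low parts'' contributions plus cheap rank-one updates for the ``high parts decide'' contributions, each product costing $n^2 \cdot \poly(\log n)$. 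Summing over levels gives the $n^2 \cdot \poly(\log W_o, \log n)$ term once we note $\log W_b$ is absorbed (or, to be safe, the statement's $\poly$ factors already allow a $\log W_b$ dependence in the $n^2$ term — inspecting the theorem, the $n^2$ term only has $\poly(\log W_o, \log n)$, so I do need the single-pass version, and the key lemma to prove is that all the low-part indicator encodings across all levels can be concatenated into one inner-product structure of total inner dimension still below $n^{.172}$, which works because the number of levels is $O(\log W_b) = O(\log n)$ by the hypothesis $W_b \le 2^{\poly(n)}$... actually $W_b$ is only bounded by being a weight, so here I would instead invoke that $\log W_b$ levels each of inner dimension $n^{1/12}$ sum to inner dimension $n^{1/12}\log W_b$, which exceeds $n^{.172}$ only if $\log W_b$ is huge; the clean resolution is to keep the per-entry threshold evaluations and low-part comparisons at the \emph{output} stage, absorbing them into the additive $n^{1+1/12}\poly(\log n,\log W_b)$ preprocessing term). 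Finally, Theorem~\ref{depth2eval} follows by partitioning the $2^n$ inputs of an $n$-variable circuit into $2^{n/12}$ blocks $A\times B$ with $|A|=|B|=2^{11n/12}$... adjusting constants so that $2^{n/24}$ gates and weights $2^{n^k}$ fit the hypotheses of Theorem~\ref{thrthreval}, giving amortized $\poly(n^k)$ time per input and $2^n \cdot \poly(n^k)$ total.
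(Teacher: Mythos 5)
Your first half — the reduction of circuit evaluation on $A\times B$ to a single weighted comparison product $C[a,b]=\sum_k w_k\cdot\mathrm{LEQ}(A'[a,k],B'[k,b])$ with $A'[a,k]=\langle u_k^{(1)},a\rangle$ (the paper uses $t_k-\langle \cdot,\cdot\rangle$ on one side, but this is the same reduction) followed by one threshold per entry — is exactly the paper's approach and is correct. The gap is in how you compute the product, and you half-noticed it yourself. Your bucketing is by bit-value (high-order bits), which gives you no control over how many entries land in the same bucket: for a fixed column $k$, all $2n$ values could share the same high bits, so the ``high parts equal, compare low parts'' case can involve $\Theta(n^2)$ pairs at every level. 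Those comparisons cannot be ``absorbed into the additive $n^{1+1/12}$ preprocessing term,'' since they are indexed by pairs $(a,b)$, and recursing over $O(\log W_b/\log n)$ levels both fails to shrink this case and reintroduces a $\log W_b$ factor into the $n^2$ term that the theorem statement does not allow. Separately, your ``rank-one updates'' for the high-parts-decide case are not cheap: for each column $k$ the matrix $\mathbb{1}[h(A'[a,k])<h(B'[k,b])]$ is a staircase matrix whose explicit addition into the $n\times n$ accumulator costs $n^2$, giving $n^{2+1/12}$ over all columns unless all columns are batched into one fast matrix product — which is precisely the role Coppersmith plays in the paper, and you have assigned it to the wrong case.

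The paper's fix is a rank-reduction step you are missing: for each $k$, sort the $2n$ values consisting of column $k$ of $A'$ and row $k$ of $B'$ together and replace each value by its rank. This preserves every comparison, reduces all entries to $\{1,\ldots,2n\}$ (so $W_b$ disappears from everything except the $n^{1+1/12}\cdot\poly(\log n,\log W_b)$ preprocessing), and — crucially — lets you partition by rank into $t=\lceil n/s\rceil$ buckets each containing at most $s$ entries. Then same-bucket pairs are brute-forced in $\tilde O(n\cdot n^{1/12}\cdot s)$ total time, and cross-bucket pairs are handled by one rectangular product of an $n\times(n^{1/12}t)$ matrix against an $(n^{1/12}t)\times n$ matrix, where each entry is replaced by a bucket-indicator vector (scaled by $w_k$ on one side) so that the inner product contributes $w_k$ exactly when the buckets differ in the right order. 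Choosing $n^{1/12}t=n^{0.172}$ and balancing gives $s\approx n^{0.914}$ and both terms come out to $n^2\cdot\poly(\log n)$. Your instinct to combine Coppersmith with a mild brute force is right, but without sorting to ranks and bucketing by rank, neither the same-bucket nor the cross-bucket case is under control, so the proof as written does not go through.
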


The following is immediate from Theorem~\ref{thrthreval}:

\begin{reminder}{Theorem~\ref{depth2eval}} Let $k > 1$. Given a depth-two $2^{n/24}$-size linear threshold circuit $C$ with integer weights in $[-2^{n^k},-2^{n^k}]$, we can evaluate $C$ on all $2^n$ input assignments in $2^n \cdot \poly(n^k)$ time.
\end{reminder}

While the proof of Theorem~\ref{thrthreval} also ultimately depends on Coppersmith's rectangular matrix multiplication, the rest of the algorithm is rather different from the evaluation algorithm of Theorem~\ref{evalaccthr}. 

\begin{proofof}{Theorem~\ref{thrthreval}} We reduce the evaluation task to a special kind of matrix multiplication, then combine Coppersmith's matrix multiplication with a mild brute force to expedite the matrix multiply. 

Define $\text{LEQ} : \Z \times \Z \rightarrow \{0,1\}$ to output $1$ on $(a,b)$ if and only if $a \leq b$. Given a vector $w = (w_1,\ldots,w_d) \in \Z^d$, and given two matrices $M$ and $N$ which are $n \times d$ and $d \times n$, define their \emph{$w$-weighted threshold product} to be $(M \circledast_w N)[i,j] := \sum_{k=1}^d w_k \cdot \text{LEQ}(M[i,k],N[k,j])$.

We shall show that the $w$-weighted threshold product of an $n \times n^{1/12}$ matrix and an $n^{1/12}\times n$ matrix can be computed in essentially $n^2 \cdot \poly(\log n)$ time (with some additional but negligible overhead in terms of the weights). Let us postpone this algorithm for the moment, and first show how to embed the evaluation problem into the weighted threshold product.

Let $C$ be a depth-two circuit of size $s$, with the $2k$ input variables $x_1,\ldots,x_k,y_1,\ldots,y_k$. Let $w_1,\ldots,w_s$ be the weights of the top threshold gate of $C$, and let $\ell_1,t_1, \ldots,\ell_s, t_s$ be the corresponding linear forms and threshold values from the bottom layer of threshold gates: that is, the output of $\text{LEQ}(t_i, \ell_i)$ is multipled by $w_i$ in the output gate. Without loss of generality, we may assume that all weights $w_i$ are multiplied by the output of some threshold gate at the bottom layer (there are at most $n$ wires from the input directly to the output gate, and they can be replaced by $O(n)$ dummy gates at the bottom layer with wires to the output gate). Let $A = \{A_1,\ldots,A_n\} \subseteq \{0,1\}^k$ and $B = \{B_1,\ldots,B_n\}\subseteq\{0,1\}^k$. 

We partition each linear form $\ell_j$ on the bottom layer into two sums $\ell_j^{(x)}$ and $\ell_j^{(y)}$, such that $\ell_j^{(x)}$ involves only input variables $x_1,\ldots,x_k$, $\ell_j^{(y)}$ involves only $y_1,\ldots,y_k$, and $\ell_j^{(x)} + \ell_j^{(y)} = \ell_j$. Let $A_i(\ell_j^{(x)})$ and $B_j(\ell_j^{(y)})$ denote the value of the linear form $\ell_j^{(x)}$ (respectively, $\ell_j^{(y)}$) evaluated on assignment $A_i$ (respectively, $B_j$).

Define the matrix $M$ with rows indexed by elements of $A$, and columns indexed by the bottom layer gates $1,\ldots,s$. Set $M[i,k]$ to the value $t_k - A_i(\ell_k^{(x)})$. The matrix $N$ has rows indexed by the bottom layer gates $1,\ldots,s$, and columns indexed by elements of $B$. Set $N[k,j]$ to the value $B_j(\ell_k^{(y)})$.

Now consider the $w$-weighted threshold product $M \circledast_w N$, where $w$ is the same as above. The $i,j$ entry of this product equals \[\sum_{k=1}^s w_k \cdot \text{LEQ}\left(t_k - A(\ell_k^{(x)}), B_j(\ell_k^{(y)})\right) = \sum_{k=1}^s w_k \cdot \text{LEQ}\left(t_k, A_i(\ell_k^{(x)})+B_j(\ell_k^{(y)})\right).\] 
This is precisely the value of the linear form in the output gate of $C$, when $x_1,\ldots,x_k$ are given the assignment $A_i$ and $y_1,\ldots,y_k$ are assigned $B_j$. The truth table of $C$ on $A \times B$ can be recovered by simply checking which entries in $(M \circledast_w N)$ exceed the output gate's threshold.

Next, we shall show how to compute a weighted threshold matrix product efficiently. Let $\delta$ be a parameter, and let $M$ and $N$ be $n \times n^{\delta}$ and $n^{\delta} \times n$ matrices, respectively. The first step is to reduce the weights significantly. For all $k=1,\ldots,n^{\delta}$, let $S_k$ be a list of all entries in the $k$th column of $M$, plus the $k$th row of $N$. Sort $S_k$, obtaining a ranking of $2n$ items, and replace each entry in the $k$th column of $M$ and the $k$th row of $N$ by their rank in the sorted list $S_k$. This step reduces the domains of $M$ and $N$ to $\{1,\ldots,2n\}$, and the $w$-weighted threshold matrix product remains the same: all inequalities $M[i,k] \leq N[k,j]$ are preserved. Note this step takes $n^{1+\delta}\cdot\poly(\log n,\log W_b)$ time.

In order to reduce to matrix multiplication, we perform two strategies with different advantages. (The reduction is inspired by work of Matousek~\cite{Matousek91} on computing dominances in high dimensions.) Let $s \in \{1,\ldots,n\}$ be a parameter. Partition each sorted list $S_k$ into $t =\lceil n/s \rceil$ contiguous buckets $T_1,\ldots,T_t$, where each bucket $T_i$ contains at most $s$ entries. (For all $i < j$, the largest entry in $T_i$ is at most the smallest entry in $T_j$.)

Start with an $n \times n$ output matrix $P$ that is all zeroes. For every $(i,k) \in [n] \times [n^{\delta}]$, look up the bucket $T_{\ell}$ containing $M[i,k]$ in the sorted list $S_k$. For all $N[k,j]$ contained in $T_{\ell}$ such that $M[i,k] \leq N[k,j]$, add the weight $w_k$ to the entry $P[i,j]$. This loop adds to $P$ all terms $w_k \cdot \text{LEQ}(M[i,k],N[k,j])$ such that $M[i,k]$ and $N[k,j]$ appear in the same bucket of $S_k$. Observe that this step takes $\tilde{O}(n \cdot n^{\delta} \cdot s)$ time. 

To handle the $(M[i,k],N[k,j])$ pairs that do not appear in the same bucket, we use matrix multiplication. For each $(i,k) \in [n] \times [n^{\delta}]$, replace the entry $M[i,k]$ with a row vector $v_{i,k} \in \{0,w_k\}^t$, such that $v_{i,k}[\ell] := w_k$ if and only if $M[i,k]$ is in bucket $T_{\ell}$ of $S_k$. That is, $v_{i,k}$ has $w_k$ in exactly one entry, and zeroes elsewhere. This forms a matrix $M'$ of dimensions $n \times (n^{\delta} \cdot t)$. For $(k,j) \in [n^{\delta}]\times [n]$, replace each entry $N[k,j]$ with a column vector $u_{k,j} \in \{0,1\}^t$, such that $v_{i,k}[\ell'] := 1$ if and only if $N[k,j]$ is in bucket $T_{\ell}$ of $S_k$ and $\ell > \ell'$. This forms a matrix $N'$ of dimensions $(n^{\delta} \cdot t) \times n$. The matrix product $M' \cdot N'$ over the integers computes a sum of inner products \[(M' \cdot N')[i,j] = \sum_{n^{\delta}} \langle v_{i,k},u_{k,j}\rangle.\] If $M[i,k]>N[k,j]$, or $M[i,k]$ and $N[k,j]$ are in the same bucket of $S_k$, then $\langle v_{i,k},u_{k,j}\rangle = 0$. If $M[i,k]\leq N[k,j]$ but $N[k,j]$ and $M[i,k]$ are in different buckets of $S_k$ then $\langle v_{i,k},u_{k,j}\rangle = w_k$. 

Letting $P := P + (M'\cdot N')$, this procedure adds to $P$ all terms $w_k \cdot \text{LEQ}(M[i,k],N[k,j])$ such that $M[i,k]$ and $N[k,j]$ appear in different buckets of $S_k$. Therefore $P[i,j]$ contains the value of the linear form for the output gate of $C$, under variable assignment $(A_i,B_j)$, for all $i,j$.

The above algorithm runs in time $O(n\cdot n^{\delta}\cdot s\log W_o + MM(n,n^{1+\delta}/s,n)\cdot \poly(\log W_o))$, where $MM(a,b,c)$ is the running time for multiplying $a \times b$ and $b \times c$ matrices. If we set $n^{1+\delta}/s = n^{0.172}$, then Coppersmith's algorithm (Lemma~\ref{coppersmith}) can be applied to the second term of the running time, implementing it in $n^2 \cdot \poly(\log n)$ time. Under this setting, $s = n^{\delta} \cdot n^{0.828}$ and the first term of the running time is $n^{1+2\delta+0.828}$. Setting $\delta = 0.086 > 1/12$, the first term becomes $n^2$ (note that $s = n^{.914}$). 
\end{proofof}

It is easy to see that, since the above algorithm actually evalutes the linear form at the output gate of a depth-two threshold circuit, we can also efficiently evaluate large $\SYM \circ \THR$ circuits as well. 

\paragraph{Acknowledgements.} I thank Igor Carboni Olivera for sending a preliminary version of his survey, which helped the ideas in the proof of Theorem~\ref{weakAND} to congeal. I also thank Rahul Santhanam for helpful comments on an earlier draft.

\bibliographystyle{alpha}
\bibliography{papers,apsp}

\newcommand{\etalchar}[1]{$^{#1}$}
\begin{thebibliography}{HMP{\etalchar{+}}93}

\bibitem[ABFR94]{Aspnes94}
James Aspnes, Richard Beigel, Merrick Furst, and Steven Rudich.
\newblock The expressive power of voting polynomials.
\newblock {\em Combinatorica}, 14(2):135--148, 1994.

\bibitem[ACPS09]{ApplebaumCPS09}
Benny Applebaum, David Cash, Chris Peikert, and Amit Sahai.
\newblock Fast cryptographic primitives and circular-secure encryption based on
  hard learning problems.
\newblock In {\em CRYPTO}, pages 595--618, 2009.

\bibitem[AG91]{Allender-Gore91}
Eric Allender and Vivek Gore.
\newblock On strong separations from ${AC}^0$.
\newblock {\em Fundamentals of Computation Theory}, 8, 1991.

\bibitem[Bei94]{Beigel94}
Richard Beigel.
\newblock When do extra majority gates help? polylog(n) majority gates are
  equivalent to one.
\newblock {\em Computational Complexity}, 4:314--324, 1994.

\bibitem[BH12]{Beame-Huynh12}
Paul Beame and Trinh Huynh.
\newblock Multiparty communication complexity and threshold circuit size of
  {AC0}.
\newblock 41(3):484--518, 2012.

\bibitem[BP94]{Bini-Pan94}
Dario Bini and Victor Pan.
\newblock {\em Polynomial and matrix computations}.
\newblock Birkhauser, 1994.

\bibitem[BS94]{Barrington-Straubing94}
David A.~Mix Barrington and Howard Straubing.
\newblock Complex polynomials and circuit lower bounds for modular counting.
\newblock {\em Computational Complexity}, 4(4):325--338, 1994.

\bibitem[BT94]{Beigel-Tarui}
Richard Beigel and Jun Tarui.
\newblock On {ACC}.
\newblock {\em Computational Complexity}, pages 350--366, 1994.

\bibitem[CH05]{Chat-Hansen05}
Arkadev Chattopadhyay and Kristoffer~Arnsfelt Hansen.
\newblock Lower bounds for circuits with few modular and symmetric gates.
\newblock In {\em ICALP}, pages 994--1005, 2005.

\bibitem[CKY89]{CKY}
John~F. Canny, Erich Kaltofen, and Lakshman Yagati.
\newblock Solving systems of non-linear equations faster.
\newblock In {\em Proc. ACM-SIGSAM International Symposium on Symbolic and
  Algebraic Computation}, pages 121--128, 1989.

\bibitem[Coh13]{Cohen13}
Gil Cohen.
\newblock A taste of circuit complexity pivoted at {NEXP} $\not\subset$ {ACC}
  (and more).
\newblock {\em Lecture Notes, Electronic Colloquium on Computational Complexity
  (ECCC)},
  \href{http://eccc.hpi-web.de/resources/pdf/cohen.pdf}{http://eccc.hpi-web.de/resources/pdf/cohen.pdf},
  2013.

\bibitem[Cop82]{Coppersmith82}
Don Coppersmith.
\newblock Rapid multiplication of rectangular matrices.
\newblock {\em SIAM J. Comput.}, 11(3):467--471, 1982.

\bibitem[Cop97]{Coppersmith97}
D.~Coppersmith.
\newblock Rectangular matrix multiplication revisited.
\newblock {\em Journal of Complexity}, 13:42--49, 1997.

\bibitem[CSV84]{Chandra-Stockmeyer-Vishkin84}
Ashok~K. Chandra, Larry Stockmeyer, and Uzi Vishkin.
\newblock Constant depth reducibility.
\newblock {\em SIAM Journal on Computing}, 13(2):423--439, 1984.

\bibitem[FKL{\etalchar{+}}01]{Forster01}
J{\"u}rgen Forster, Matthias Krause, Satyanarayana~V. Lokam, Rustam
  Mubarakzjanov, Niels Schmitt, and Hans~Ulrich Simon.
\newblock Relations between communication complexity, linear arrangements, and
  computational complexity.
\newblock In {\em FSTTCS 2001: Foundations of Software Technology and
  Theoretical Computer Science}, pages 171--182. Springer, 2001.

\bibitem[Gal12]{LeGall12}
Fran{\c{c}}ois~Le Gall.
\newblock Faster algorithms for rectangular matrix multiplication.
\newblock In {\em FOCS}, pages 514--523, 2012.

\bibitem[Gol97]{Goldmann97}
Mikael Goldmann.
\newblock On the power of a threshold gate at the top.
\newblock {\em Information Processing Letters}, 63(6):287--293, 1997.

\bibitem[GS10]{Gopalan-Servedio10}
Parikshit Gopalan and Rocco~A. Servedio.
\newblock Learning and lower bounds for {$AC^0$} with threshold gates.
\newblock In {\em APPROX/RANDOM}, pages 588--601. Springer, 2010.

\bibitem[Han07]{Hansen07}
Kristoffer~Arnsfelt Hansen.
\newblock Computing symmetric boolean functions by circuits with few exact
  threshold gates.
\newblock In {\em COCOON}, pages 448--458, 2007.

\bibitem[HM04]{Hansen-Miltersen04}
Kristoffer~Arnsfelt Hansen and Peter~Bro Miltersen.
\newblock Some meet-in-the-middle circuit lower bounds.
\newblock In {\em MFCS}, pages 334--345, 2004.

\bibitem[HMP{\etalchar{+}}93]{Hajnal93}
Andr{\'a}s Hajnal, Wolfgang Maass, Pavel Pudl{\'a}k, Mario Szegedy, and
  Gy{\"o}rgy Tur{\'a}n.
\newblock Threshold circuits of bounded depth.
\newblock {\em J. Comput. Syst. Sci.}, 46(2):129--154, 1993.

\bibitem[HP98]{HP98}
X.~Huang and V.~Y. Pan.
\newblock Fast rectangular matrix multiplication and applications.
\newblock {\em J. of Complexity}, 14(2):257--299, 1998.

\bibitem[HP10]{HP10}
Kristoffer~Arnsfelt Hansen and Vladimir~V Podolskii.
\newblock Exact threshold circuits.
\newblock In {\em IEEE Conf. Computational Complexity}, pages 270--279, 2010.

\bibitem[HP13]{Hansen-Podolskii13}
Kristoffer~Arnsfelt Hansen and Vladimir~V. Podolskii.
\newblock Polynomial threshold functions and boolean threshold circuits.
\newblock In {\em MFCS}, pages 516--527, 2013.

\bibitem[IKW02]{IKW}
Russell Impagliazzo, Valentine Kabanets, and Avi Wigderson.
\newblock In search of an easy witness: Exponential time vs. probabilistic
  polynomial time.
\newblock {\em JCSS}, 65(4):672--694, 2002.

\bibitem[IMP12]{ImpagliazzoMP12}
Russell Impagliazzo, William Matthews, and Ramamohan Paturi.
\newblock A satisfiability algorithm for {AC$^{\mbox{0}}$}.
\newblock In {\em SODA}, pages 961--972, 2012.

\bibitem[IPS13]{IPS13}
Russell Impagliazzo, Ramamohan Paturi, and Stefan Schneider.
\newblock A satisfiability algorithm for sparse depth two threshold circuits.
\newblock In {\em FOCS}, pages 479--488, 2013.

\bibitem[JMV13]{JMV13}
Local reductions.
\newblock Technical Report TR13-099, Electronic Colloquium on Computational
  Complexity, July 2013.

\bibitem[KS12]{KS12}
Swastik Kopparty and Srikanth Srinivasan.
\newblock Certifying polynomials for {$AC^0$}(parity) circuits, with
  applications.
\newblock In {\em FSTTCS}, pages 36--47, 2012.

\bibitem[KZHP08]{Ke08}
ShanXue Ke, BenSheng Zeng, WenBao Han, and Victor~Y. Pan.
\newblock Fast rectangular matrix multiplication and some applications.
\newblock {\em Science in China Series A: Mathematics}, 51(3):389--406, 2008.

\bibitem[Lok08]{Lokam-book}
Satyanarayana~V. Lokam.
\newblock Complexity lower bounds using linear algebra.
\newblock {\em Foundations and Trends in Theoretical Computer Science},
  4(1-2):1--155, 2008.

\bibitem[LS11]{Lovett-Srinivasan11}
Shachar Lovett and Srikanth Srinivasan.
\newblock Correlation bounds for poly-size {$AC^0$} circuits with $n^{1 -
  o(1)}$ symmetric gates.
\newblock In {\em APPROX/RANDOM}, pages 640--651. Springer, 2011.

\bibitem[Mat91]{Matousek91}
Jiri Matousek.
\newblock Computing dominances in {$E^n$}.
\newblock {\em Inf. Process. Lett.}, 38(5):277--278, 1991.

\bibitem[MP69]{Minsky-Papert69}
Marvin Minsky and Seymour Papert.
\newblock {\em Perceptrons: An Introduction to Computational Geometry}.
\newblock The MIT Press, 1969.

\bibitem[MT93]{Maciel-Therien93}
Alexis Maciel and Denis Th{\'e}rien.
\newblock Threshold circuits for iterated multiplication: Using ac0 for free.
\newblock In {\em STACS}, pages 545--565, 1993.

\bibitem[MT98]{Maciel-Therien98}
Alexis Maciel and Denis Thérien.
\newblock Threshold circuits of small majority-depth.
\newblock {\em Information and Computation}, 146(1):55--83, 1998.

\bibitem[MT99]{Maciel-Therien99}
Alexis Maciel and Denis Th{\'e}rien.
\newblock Efficient threshold circuits for power series.
\newblock {\em Inf. Comput.}, 152(1):62--73, 1999.

\bibitem[MTT61]{Muroga-Toda-Takasu61}
S.~Muroga, I.~Toda, and S.~Takasu.
\newblock Theory of majority decision elements.
\newblock {\em Journal of the Franklin Institute}, 271:376--418, 1961.

\bibitem[Mur71]{Muroga71}
S.~Muroga.
\newblock {\em Threshold Logic and its Applications}.
\newblock John Wiley \& Sons, Inc., 1971.

\bibitem[Nis94]{Nisan94}
Noam Nisan.
\newblock The communication complexity of threshold gates.
\newblock In {\em Proceedings of ``Combinatorics, Paul Erdos is Eighty''},
  pages 301--315, 1994.

\bibitem[NR04]{Naor-Reingold04}
Moni Naor and Omer Reingold.
\newblock Number-theoretic constructions of efficient pseudo-random functions.
\newblock {\em JACM}, 51(2):231--262, 2004.

\bibitem[Oli13]{Oliveira13}
Igor Oliveira.
\newblock Algorithms versus circuit lower bounds.
\newblock Technical Report TR13-117, Electronic Colloquium on Computational
  Complexity (ECCC), September 2013.

\bibitem[Pan84]{Pan84}
Victor~Y. Pan.
\newblock {\em How to multiply matrices faster}.
\newblock Springer-Verlag Lecture Notes in Computer Science 179, 1984.

\bibitem[Pla02]{Plaku02}
Erion Plaku.
\newblock Multiplicity automata, polynomials and the complexity of small-depth
  boolean circuits.
\newblock Master's thesis, Clarkson University, Potsdam, NY, 2002.

\bibitem[Pod12]{Podolskii12}
Vladimir~V. Podolskii.
\newblock Exponential lower bound for bounded depth circuits with few threshold
  gates.
\newblock {\em Information Processing Letters}, 112:267--271, 2012.

\bibitem[Raz92]{Razborov92}
Alexander~A. Razborov.
\newblock On small depth threshold circuits.
\newblock In {\em SWAT}, pages 42--52, 1992.

\bibitem[Reg97]{Regan97}
Kenneth~W. Regan.
\newblock Polynomials and combinatorial definitions of languages.
\newblock pages 261--293. Springer LNCS, 1997.

\bibitem[RR97]{RazborovRudich97}
Alexander Razborov and Steven Rudich.
\newblock Natural proofs.
\newblock {\em JCSS}, 55(1):24--35, 1997.

\bibitem[RS10]{Razborov-Sherstov10}
Alexander~A. Razborov and Alexander~A. Sherstov.
\newblock The sign-rank of {$AC^0$}.
\newblock {\em SIAM Journal on Computing}, 39(5):1833--1855, 2010.

\bibitem[RT92]{Reif-Tate92}
John~H. Reif and Stephen~R. Tate.
\newblock On threshold circuits and polynomial computation.
\newblock {\em SIAM J. Comput.}, 21:118--123, 1992.

\bibitem[RW93]{Razborov-Wigderson93}
Alexander Razborov and Avi Wigderson.
\newblock {$n^{\Omega(\log n)}$} lower bounds on the size of depth-$3$
  threshold circuits with {AND} gates at the bottom.
\newblock {\em Information Processing Letters}, 45(6):303--307, 1993.

\bibitem[San12]{Santhanam12}
Rahul Santhanam.
\newblock Ironic complicity: Satisfiability algorithms and circuit lower
  bounds.
\newblock {\em Bulletin of the EATCS}, 106:31--52, 2012.

\bibitem[SBKH93]{Siu-Bruck93}
K-Y Siu, Jehoshua Bruck, Thomas Kailath, and Thomas Hofmeister.
\newblock Depth efficient neural networks for division and related problems.
\newblock {\em Information Theory, IEEE Transactions on}, 39(3):946--956, 1993.

\bibitem[Sch81]{Schoenhage81}
Arnold Sch\"{o}nhage.
\newblock Partial and total matrix multiplication.
\newblock {\em SIAM J. Comput.}, 10(3):434--455, 1981.

\bibitem[She09]{Sherstov09}
Alexander~A. Sherstov.
\newblock Separating {$AC^0$} from depth-2 majority circuits.
\newblock {\em SIAM Journal on Computing}, 38(6):2113--2129, 2009.

\bibitem[SM83]{Seroussi}
Gadiel Seroussi and Fai Ma.
\newblock On the arithmetic complexity of matrix kronecker powers.
\newblock {\em Information Processing Letters}, 17(3):145--148, 1983.

\bibitem[Smo87]{Smolensky87}
Roman Smolensky.
\newblock Algebraic methods in the theory of lower bounds for {Boolean} circuit
  complexity.
\newblock In {\em STOC}, pages 77--82, 1987.

\bibitem[SP94]{Siu-R94}
Kai-Yeung Siu and Vwani P.Roychowdhury.
\newblock On optimal depth threshold circuits for multiplication and related
  problems.
\newblock {\em SIAM Journal on Discrete Mathematics}, 7(2):284--292, 1994.

\bibitem[Vio06]{Viola06}
Emmanuele Viola.
\newblock Pseudorandom bits for constant-depth circuits with few arbitrary
  symmetric gates.
\newblock {\em SIAM J. Comput.}, 36:1387--1403, 2006.

\bibitem[Wil11a]{Williams11survey}
Ryan Williams.
\newblock Guest column: a casual tour around a circuit complexity bound.
\newblock {\em ACM SIGACT News}, 42(3):54--76, 2011.

\bibitem[Wil11b]{Williams11}
Ryan Williams.
\newblock Non-uniform {ACC} circuit lower bounds.
\newblock In {\em IEEE Conf. Computational Complexity}, pages 115--125, 2011.

\bibitem[Wil13a]{Williams13APSP}
Ryan Williams.
\newblock Faster all-pairs shortest paths via circuit complexity.
\newblock {\em Submitted}, 2013.

\bibitem[Wil13b]{WilliamsSTOC13}
Ryan Williams.
\newblock Natural proofs versus derandomization.
\newblock In {\em STOC}, pages 21--30, 2013.

\bibitem[Wil10]{Williams10}
Ryan Williams.
\newblock Improving exhaustive search implies superpolynomial lower bounds.
\newblock {\em SIAM Journal on Computing}, 42(3):1218--1244, 2013. See also
  STOC'10.

\bibitem[{\v{Z}}\'83]{Zak83}
Stanislav {\v{Z}}\'{a}k.
\newblock A {Turing} machine time hierarchy.
\newblock {\em Theoretical Computer Science}, 26(3):327--333, October 1983.

\end{thebibliography}

\appendix

\section{Appendix: An exposition of Coppersmith's algorithm}
\label{coppersmith-appendix}

In 1982, Don Coppersmith proved that the rank (that is, the number of essential multiplications) of $N \times N^{0.172}$ and $N^{0.172} \times N$ matrix multiplication is at most $O(N \log^2 N)$. Prior work has observed that his algorithm can also be used to show that the total number of arithmetic operations for the same matrix multiply is $N \cdot \poly(\log N)$. However, the implication is not immediate, and uses specific properties of Coppersmith's algorithm. Because this result is so essential to this work and a recent algorithm for all-pairs shortest paths~\cite{Williams13APSP}, we give here a self-contained exposition.

\begin{theorem}[Coppersmith~\cite{Coppersmith82}]\label{rank} For all sufficiently large $N$, the rank of $N \times N^{.172} \times  \times N$ matrix multiplication is at most $O(N^2 \log^2 N)$. \end{theorem}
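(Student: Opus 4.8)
The plan is to reconstruct Coppersmith's bootstrapping argument, which produces a bilinear algorithm for the large rectangular product $\langle N, N^{0.172}, N\rangle$ out of a single finite ``seed'' algorithm, and to track ranks carefully enough to see that the overhead is only polylogarithmic (so that the final bound $O(N^2\log^2 N)$ sits within a $\operatorname{polylog}$ factor of the trivial lower bound $N^2$). First I would write down the seed: a bilinear algorithm $\tau$ over the ground field that uses $r_0$ essential multiplications but computes a \emph{direct sum} of many disjoint small rectangular matrix products, all of the same shape $\langle p, q, p\rangle$, with $r_0$ strictly below the trivial count (number of summands times $p^2q$). Exhibiting such a $\tau$ with a good ratio of $r_0$ to the trivial count is the algebraic heart of Coppersmith's construction; I would state the identity explicitly and verify by direct expansion that it realizes the claimed disjoint sum. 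The parameters $p,q$ of the seed are what ultimately pin the exponent to $0.172$.

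Next I would iterate: the $\ell$-fold tensor (Kronecker) power $\tau^{\otimes \ell}$ is a bilinear algorithm of rank $r_0^\ell$ computing the $\ell$-fold tensor power of the direct sum, which is again a direct sum of equal-shaped small products $\langle p^\ell, q^\ell, p^\ell\rangle$, now with exponentially many summands. The crux is a \emph{combining} step that, exploiting the regularity of this sum (all summands identical), extracts from it --- as a restriction, hence as something at least as hard --- a \emph{single} copy of a large rectangular product $\langle P, Q, P\rangle$ with $P$ and $Q$ only a polylogarithmic factor away from $p^\ell$ and $q^\ell$. This is where a hashing/packing argument of Salem--Spencer (arithmetic-progression-free set) type enters, used to route the summand indices without collisions while paying only $\operatorname{polylog}$ in the dimensions. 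Taking $\ell = \Theta(\log N)$ so that $P = \Theta(N)$, and tuning the seed so that $Q = \Theta(N^{0.172})$, one obtains a bilinear algorithm for a restriction of $\langle N, N^{0.172}, N\rangle$ of rank $r_0^\ell$ times the (polylog) overhead of the combining.

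The last step is bookkeeping: show $r_0^\ell = N^2\cdot\operatorname{polylog}(N)$ --- which forces $r_0$ to equal $p^2$ times exactly the right power of the aspect ratio, up to a factor that compounds only to $\operatorname{polylog}$ over the $\Theta(\log N)$ levels --- and show the combining contributes at most $O(\log^2 N)$, giving total rank $O(N^2\log^2 N)$. I expect the \textbf{combining step} to be the main obstacle, in two respects: getting the packing argument to yield a genuine single large product (rather than an average-case statement or merely a border-rank bound), and, more delicately, controlling its cost so that it contributes only $\log^2 N$ and not the $N^{o(1)}$ that a naive Sch\"onhage-style asymptotic-sum argument would incur --- it is precisely to stay in the polylog regime that one accepts the suboptimal exponent $0.172$ rather than pushing it larger. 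A further remark, needed only for the arithmetic-operations strengthening in Lemma~\ref{coppersmith} and not for Theorem~\ref{rank} itself, is that the linear maps occurring in $\tau$ and in the combining are sparse or of Vandermonde/FFT type, so that the $O(N^2\log^2 N)$ multiplications come with only $O(N^2\operatorname{polylog} N)$ auxiliary additions; I would flag this dependence but not carry it out here.
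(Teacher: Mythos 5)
Your proposal reconstructs the wrong Coppersmith. The route you describe --- a seed computing a \emph{direct sum} of disjoint small products, tensor powers, and a Salem--Spencer packing step to extract one large product --- is the Sch\"onhage asymptotic-sum-inequality / Coppersmith--Winograd laser-method line, which is not the 1982 construction this theorem cites. That route cannot deliver the stated bound: Salem--Spencer sets in $\{1,\ldots,n\}$ have density only $n/e^{\Theta(\sqrt{\log n})}$, and the asymptotic sum inequality extracts a single product only through a limiting argument; both losses compound to $N^{o(1)}$ overhead rather than $\poly(\log N)$. This is exactly the obstacle you flag at the end of your writeup, but you do not resolve it, and within your framework it is not resolvable --- the paper itself remarks that the later improvements to the exponent run only in $N^{2+o(1)}$ time for precisely this reason.

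What Coppersmith actually does (and what the appendix expounds) avoids direct sums entirely. The seed is a single \emph{partial} matrix product: the border-rank-$5$ trilinear identity \eqref{2332} for a $2\times 3$ times $3\times 2$ product in which prescribed entries vanish ($a_{21}=0$, $b_{22}=b_{32}=0$). Its $M$-th tensor power multiplies $2^M\times 3^M$ by $3^M\times 2^M$ matrices with $O(5^M)$ and $O(4^M)$ structured nonzeroes in $5^M\cdot\poly(M)$ operations, the $\poly(M)$ factor absorbing the degree-$O(M)$ arithmetic in the border variable $x$. The combining step is Sch\"onhage's conversion of a partial product into a full one by pre- and post-multiplying with explicit rectangular Vandermonde matrices whose square minors are invertible; since Vandermonde matrices and their inverse minors admit $n\cdot\poly(\log n)$-operation algorithms, this conversion costs only a polylogarithmic factor --- this is the mechanism that replaces your packing argument and is the sole reason the overhead stays at $\log^2 N$. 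Finally, the exponent $0.172$ does not come from tuning the seed's aspect ratio directly but from symmetrizing via the trace identity $tr(ABC)=tr(BCA)$ to obtain two differently shaped rectangular algorithms and tensoring them, which yields a $5^M\times 2^{2M/5}\times 5^M$ product of rank $5^{2M}\cdot\poly(M)$, i.e.\ $\alpha=(2/5)\log 2/\log 5\approx 0.172$. To prove this theorem you need the partial-product/Vandermonde mechanism; the direct-sum/Salem--Spencer mechanism proves a different and, for this paper's purposes, insufficient statement.
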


We wish to derive the following consequence of Coppersmith's construction, which has been mentioned in the literature before~\cite{Seroussi,ApplebaumCPS09,Williams11}:

\begin{reminder}{Lemma~\ref{coppersmith}} For all sufficiently large $N$, and $\alpha \leq .172$, multiplication of an $N \times N^{\alpha}$ matrix with an $N^{\alpha} \times N$ matrix can be done in $N^2 \cdot \poly(\log N)$ arithmetic operations, over any field with $O(2^{\poly(\log N)})$ elements.\end{reminder}

For brevity, we will use the notation ``$\ell \times m \times n$ matrix multiply'' to refer to the multiplication of $\ell \times m$ and $m \times n$ matrices (hence the above gives an algorithm for $N \times N^{\alpha} \times N$ matrix multiply).

Note Lemma~\ref{coppersmith} has been ``improved'' in the sense that the upper bound on $\alpha$ has been increased mildly over the years~\cite{Coppersmith97,HP98,Ke08,LeGall12}. However, these later developments only run in $N^{2+o(1)}$ time, not $N^2 \cdot \poly(\log N)$ time (which we require). Our exposition will expand on the informal description given in recent work~\cite{Williams11}.

First, observe that the implication from Theorem~\ref{rank} to Lemma~\ref{coppersmith} is not immediate. For example, it could be that Coppersmith's algorithm is non-uniform, making it difficult to apply. As far as we know, one cannot simply take ``constant size'' arithmetic circuits implementing the algorithm of Theorem~\ref{rank} and recursively apply them. In that case, the $\poly(\log N)$ factor in the running time would then become $N^{\eps}$ for some constant $\eps > 0$ (depending on the size of the constant-size circuit). To keep the overhead polylogarithmic, we have to unpack the algorithm and analyze it directly.

\subsection{A short preliminary}

Coppersmith's algorithm builds on many other tools from prior matrix multiplication algorithms, many of which can be found in the highly readable book of Pan~\cite{Pan84}. Here we will give a very brief tutorial of some of the aspects.

\paragraph{Bilinear algorithms and trilinear forms.} Essentially all methods for matrix multiplication are bilinear (and if not, they can be converted into such algorithms), meaning that they can be expressed in the so-called trilinear form
\begin{equation}\label{trilinear} \sum_{ijk} A_{ik}B_{kj}C_{ji} + p(x) = \sum_{\ell=1}^5 (\sum_{ij} \alpha_{ij} A_{ij})\cdot (\sum_{ij} \beta_{ij} B_{ij})\cdot(\sum_{ij} \gamma_{ij} C_{ij})\end{equation}
where $\alpha_{ij}$, $\beta_{ij}$, and $\gamma_{ij}$ are constant-degree polynomials in $x$ over the field, and $p(x)$ is a polynomial with constant coefficient $0$. Such an algorithm can be converted into one with no polynomials and minimal extra overhead (as described in Coppersmith's paper). Typically one thinks of $A_{ik}$ and $B_{kj}$ as entries in the input matrices, and $C_{ji}$ as indeterminates, so the LHS of~\eqref{trilinear} corresponds to a polynomial whose $C_{ji}$ coefficient is the $ij$ entry of the matrix product. Note the {\em transpose} of the third matrix $C$ corresponds to the final matrix product.

To give an explicit example, we assume the reader is familiar with Strassen's famous method for $2 \times 2 \times 2$ matrix multiply. Strassen's algorithm can be expressed in the form of~\eqref{trilinear} as follows:
\begin{eqnarray}\label{strassen}
\sum_{i,j,k=0,1} A_{ik}B_{kj}C_{ji} &=& (A_{00} + A_{11})(B_{00} + B_{11})(C_{00} + C_{11})\\ \nonumber
& & + (A_{10} + A_{11})B_{00}(C_{01}-C_{11}) + A_{00}(B_{01} - B_{11})(C_{10} + C_{11})\\ \nonumber
& & + (A_{10} - A_{00})(B_{00} + B_{01})C_{11} + (A_{00} + A_{01})B_{11}(C_{10} - C_{00}) \\ \nonumber
& & + A_{11}(B_{10} - B_{00})(C_{00} + C_{01}) + (A_{01} - A_{11})(B_{10}+B_{11})C_{00}. \nonumber\end{eqnarray}
The LHS of~\eqref{trilinear} and~\eqref{strassen} represents the trace of the product of three matrices $A$, $B$, and $C$ (where the $ij$ entry of matrix $X$ is $X_{ij}$). It is well known that every bilinear algorithm naturally expresses multiple algorithms through this trace representation. Since \[tr(ABC) = tr(BCA) = tr(CAB)=tr((ABC)^T)=tr((BCA)^T)=tr((CAB)^T),\] if we think of $A$ as a symbolic matrix and consider~\eqref{trilinear}, we obtain a new algorithm for computing a matrix $A$ when given $B$ and $C$. Similarly, we get an algorithm for computing a $B$ when given $A$ and $C$, and analogous statements hold for computing $A^T$, $B^T$, and $C^T$. So the aforementioned algorithm for multiplying a sparse $2 \times 3$ and sparse $3 \times 2$ yields several other algorithms.

\paragraph{Sch\"{o}nhage's decomposition paradigm.} Coppersmith's algorithm follows a specific paradigm introduced by Sch\"{o}nhage~\cite{Schoenhage81} which reduces arbitrary matrix products to slightly larger matrix products with ``structured nonzeroes.'' The general paradigm has the following form. Suppose we wish to multiply two matrices $A''$ and $B''$.
\begin{compactenum}
\item  First we {\em preprocess} $A''$ and $B''$ in some efficient way, decomposing $A''$ and $B''$ into structured matrices $A,A',B,B'$ so that $A'' \cdot B'' = A' \cdot A \cdot B \cdot B'$. (Note, the dimensions of $A' \cdot A$ may differ from $A''$, and similarly for $B' \cdot B$ and $B''$.) The matrices $A$ and $B$ are sparse ``partial'' matrices directly based on $A''$ and $B''$, but they have larger dimensions, and only contain nonzeroes in certain structured parts. The matrices $A'$ and $B'$ are very simple and explicit matrices of scalar constants, chosen independently of $A''$ and $B''$. (In particular, $A'$ and $B'$ are Vandermonde-style matrices.)

\item Next, we apply a specialized constant-sized matrix multiplication algorithm in a recursive manner, to multiply the structured $A$ and $B$ essentially optimally. Recall that Strassen's famous matrix multiplication algorithm has an analogous form: it starts with a seven-multiplication product for $2 \times 2 \times 2$ matrix multiplication, and recursively applies this to obtain a general algorithm for $2^M \times 2^M \times 2^M$ matrix multiplication. Here, we will use an \emph{optimal} algorithm for multiplying constant-sized matrices with zeroes in some of the entries; when this algorithm is recursively applied, it can multiply sparse $A$ and $B$ with nonzeroes in certain structured locations.

\item Finally, we {\em postprocess} the resulting product $C$ to obtain our desired product $A'' \cdot B''$, by computing $A' \cdot C \cdot B'$. Using the simple structure of $A'$ and $B'$, the matrix products $D := A' \cdot C$ and $D \cdot B'$ can be performed very efficiently. Our aim is to verify that each step of this process can be efficiently computed, for Coppersmith's full matrix multiplication algorithm.

\end{compactenum}

\subsection{The algorithm}

The construction of Coppersmith begins by taking input matrices $A''$ of dimensions $2^{4M/5} \times {M \choose 4M/5}2^{4M/5}$ and $B''$ of dimensions ${M \choose 4M/5}2^{4M/5} \times 2^{M/5}$ where $M \approx \log N$, and obtains an $O(5^M \poly(M))$ algorithm for their multiplication. Later, he symmetrizes the construction to get an $N \times N \times N^{\alpha}$ matrix multiply. We will give this starting construction and show how standard techniques can be used to obtain an $N \times N^{\alpha} \times N$ matrix multiply from his basic construction.

The multiplication of $A''$ and $B''$ will be derived from an algorithm which computes the product of $2 \times 3$ and $3 \times 2$ matrices with zeroes in some entries. In particular the matrices have the form:
\[\left( \begin{array}{ccc} a_{11} & a_{12} & a_{13} \\ 0 & a_{22} & a_{23} \end{array} \right), \left( \begin{array}{cc} b_{11} & b_{12}\\ b_{21} & 0 \\ b_{31} & 0 \end{array} \right), \] and the algorithm is given by the trilinear form \begin{eqnarray}\label{2332}(a_{11} + x^2 a_{12})(b_{21} + x^2 b_{11})(c_{11}) + (a_{11} +x^2 a_{13}(b_{31})(c_{11} - x c_{21}) + (a_{11} + x^2 a_{22})(b_{21} - x b_{21})(c_{22})\\ \nonumber + (a_{11}+x^2 a_{23})(b_{31} + x b_{12}) (c_{12} + x c_{21}) - (a_{11})(b_{21}+b_{31})(c_{11}+c_{12})\\ \nonumber = x^2(a_{11}b_{11}c_{11} + a_{11}b_{12}c_{21} + a_{12}b_{21}c_{11} + a_{13}b_{31}c_{11} + a_{22}b_{21}c_{12} + a_{23}b_{31}c_{12}) + x^3 \cdot P(a,b,c,x).\end{eqnarray} That is, by performing the five products of the linear forms of $a_{ij}$ and $b_{k\ell}$ on the LHS, and using the $c_{ij}$ to determine how to add and subtract these products to obtain the output $2 \times 2$ matrix, we obtain a polynomial in each matrix entry whose $x^2$ coefficients yield the final matrix product $c_{ij}$.

When the algorithm given by \eqref{2332} is applied recursively to $2^M \times 3^M$ and $3^M \times 2^M$ matrices (analogously to how Strassen's algorithm is applied to do $2^M \times 2^M \times 2^M$ matrix multiply), we obtain an algorithm that can multiply matrices $A$ and $B$ with dimensions $2^M \times 3^M$ and $3^M \times 2^M$, respectively, where $A$ has $O(5^M)$ nonzeroes, $B$ has $O(4^M)$ nonzeroes, and these nonzeroes appear in a highly regular pattern (which can be easily deduced). This recursive application of \eqref{2332} will result in polynomials in $x$ of degree $O(M)$, and additions and multiplications on such polynomials increase the overall time by an $M \cdot \poly(\log M)$ factor. Therefore we can multiply these $A$ and $B$ with structured nonzeroes in $O(5^M \cdot \poly(M))$ field operations.

The decomposition of $A''$ and $B''$ is performed as follows. We choose $A'$ and $B'$ to have dimensions $2^{4M/5} \times 2^{M}$ and $2^M \times 2^{M/5}$, respectively, and such that  all $2^{4M/5} \times 2^{4M/5}$ submatrices of $A'$ and $2^{M/5} \times 2^{M/5}$ submatrices of $B'$ are non-singular. Following Sch\"{o}nhage, we pick $A'$ and $B'$ to be rectangular Vandermonde matrices: the $i,j$ entry of $A'$ is $(\alpha_j)^{i-1}$, where $\alpha_1,\alpha_2,\ldots$ are distinct elements of the field; $B'$ is defined analogously. Such matrices have three major advantages: (1) they can be succinctly described (with $O(2^M)$ field elements), (2) multiplying these matrices with arbitrary vectors can be done extremely efficiently, and (3) inverting an arbitrary square submatrix can be done extremely efficiently. More precisely, $n\times n$ Vandermonde matrices can be multiplied with arbitrary $n$-vectors in $O(n \cdot \poly(\log n))$ operations, and computing the inverse of an $n \times n$ Vandermonde matrix can be done in $O(n \cdot \poly(\log n))$ operations (for references, see~\cite{CKY,Bini-Pan94}). In general, operations on Vandermonde matrices, their transposes, their inverses, and the transposes of inverses can be reduced to fast multipoint computations on univariate polynomials. For example, multiplying an $n\times n$ Vandermonde matrix with a vector is equivalent to evaluating a polynomial (with coefficients given by the vector) on the $n$ elements that comprise the Vandermonde matrix, which takes $O(n \log n)$ operations. This translates to $O(n \cdot \poly(\log n))$ arithmetic operations.

The matrices $A$ and $B$ have dimensions $2^M \times 3^M$ and $3^M \times 2^M$, respectively, where $A$ has  only $O(5^M)$ nonzeroes, $B$ has only $O(4^M)$ nonzeroes, and there is an optimal algorithm for multiplying $2 \times 3$ (with 5 nonzeroes) and $3 \times 2$ matrices (with 4 nonzeroes) that can be recursively applied to multiply $A$ and $B$ optimally, in $O(5^M \cdot \poly(M))$ operations. Matrices $A$ and $B$ are constructed as follows: take any one-to-one mapping between the ${M \choose 4M/5}2^{M/5}$ columns of the input $A''$ and columns of the sparse $A$ with exactly $2^{4M/5}$ nonzeroes. For these columns $q$ of $A$ with $2^{4M/5}$ nonzeroes, we compute the inverse $A_q^{-1}$ of the $2^{4M/5} \times 2^{4M/5}$ minor $A_q$ of $A'$ with rows corresponding to the nonzeroes in the column, and multiply $A_q^{-1}$ with column $q$ (in $2^{4M/5} \cdot \poly(M)$ time). After these columns are processed, the rest of $A$ is zeroed out. Then, there is a one-to-one correspondence between columns of $A''$ and nonzero columns of $A' \cdot A$. Performing a symmetric procedure for $B''$ (with the same mapping on rows instead of columns), we can decompose it into $B$ and $B'$ such that there is a one-to-one correspondence between rows of $B''$ and nonzero rows of $B\cdot B'$. It follows that this decomposition takes only $O({M \choose 4M/5}2^{4M/5} \cdot 2^{4M/5} \cdot \poly(M))$ time. Since $5^M \approx {M \choose 4M/5}4^{4M/5}$ (within $\poly(M)$ factors), this quantity is upper bounded by $5^M \cdot \poly(M)$.

After $A$ and $B$ are constructed, the constant-sized algorithm for $2 \times 3$ and $3 \times 2$ mentioned above can be applied in the usual recursive way to multiply the sparse $A$ and $B$ in $O(5^M \cdot \poly(M))$ operations; call this matrix $Z$. Because $A'$ and $B'$ are Vandermonde, the product $A' \cdot Z \cdot B'$ can be computed in $O(5^M \cdot \poly(M))$ operations. Hence we have an algorithm for multiplying matrices of dimensions $2^{4M/5} \times {M \choose 4M/5}2^{4M/5}$ and ${M \choose 4M/5}2^{4M/5} \times 2^{M/5}$ that is explicit and takes $5^M \cdot \poly(M)$ operations.

Call the above algorithm {\sc Algorithm 1}. Observe {\sc Algorithm 1} also works when the entries of $A''$ and $B''$ are themselves matrices over the field. (The running time will surely increase in proportion to the sizes of the underlying matrices, but the bound on the number of {\em operations on the entries} remains the same.)

Up to this point, we have simulated Coppersmith's construction completely, and have simply highlighted its efficiency. By exploiting the symmetries of matrix multiplication algorithms in a standard way, we can extract more algorithms from the construction. The trace identity tells us that \[tr(ABC) = tr(BCA),\] implying that the expression \eqref{2332} can also be used to partially multiply a $3^M \times 2^M$ matrix $B$ with at most $4^M$ structured nonzeroes and ``full'' $2^M \times 2^M$ matrix $C$ in $5^M \cdot \poly(M)$ operations, obtaining a $3^M \times 2^M$ matrix $A^T$ with at most $5^M$ nonzeroes. In our {\sc Algorithm 1}, we have a decomposition of $A$ and $B$; in terms of the trace, we can derive: \[tr(A'' B'' \cdot C'') = tr(A'A \cdot B B' \cdot C'') = tr(B \cdot B' C'' A' \cdot A).\]

This can be applied to obtain an algorithm for ${M \choose 4M/5}2^{4M/5} \times 2^{M/5} \times 2^{4M/5}$ matrix multiplication, as follows. Given input matrices $B''$ and $C''$ of the respective dimensions, decompose $B''$ into a $3^M \times 2^M$ $B$ with $O(4^M)$ nonzeroes and $2^{M} \times 2^{M/5}$ Vandermonde $B'$, as described above. Letting $A'$ be a Vandermonde $2^{4M/5} \times 2^{M}$ matrix, compute the matrix $C := B' \cdot C'' \cdot A'$ in at most $4^M \cdot \poly(M)$ operations. Noting that $C$ is $2^M \times 2^M$, we can then multiply $B$ and $C$ in $5^M \cdot \poly(M)$ operations. This results in a $3^M \times 2^M$ matrix $A^T$ with at most $5^M$ nonzeroes. The final output $A''$ is obtained by using the one-to-one mapping to extract the appropriate ${M \choose 4M/5}2^{4M/5}$ rows from $A^T$, and multiplying each such row by the appropriate inverse minor of $A'$ (corresponding to the nonzeroes of that row). This takes at most ${M \choose 4M/5}2^{4M/5} \cdot 2^M \cdot \poly(M) \leq 5^M \cdot \poly(M)$ operations. Call this {\sc Algorithm 2}.

From {\sc Algorithm 2} we immediately obtain an algorithm for $2^{4M/5} \times 2^{M/5} \times {M \choose 4M/5}2^{4M/5}$ matrix multiplication as well: given input matrices $(C'')^T$ and $(B'')^T$ of the respective dimensions, simply compute $B'' \cdot C''$ using {\sc Algorithm 2}, and output the transpose of the answer. Call this {\sc Algorithm 3}.

Finally, by ``tensoring'' {\sc Algorithm 2} with {\sc Algorithm 3}, we derive an algorithm for matrix multiplication with dimensions \[{M \choose 4M/5}2^{4M/5}\cdot 2^{4M/5} \times 2^{2M/5} \times {M \choose 4M/5}2^{4M/5}\cdot 2^{4M/5} \geq 5^M/M \times 4^{M/5} \times 5^M/M.\]

That is, we divide the two input matrices of large dimensions into blocks of $2^{4M/5} \times 2^{M/5}$ and $2^{M/5} \times {M \choose 4M/5}2^{4M/5}$ dimenisons, respectively. We execute {\sc Algorithm 2} on the blocks, and call {\sc Algorithm 3} when the product of two blocks is needed.

As both {\sc Algorithm 2} and {\sc Algorithm 3} are explicit and efficient, their ``tensorization'' inherits these properties. {\sc Algorithm 2} uses $5^M \cdot \poly(M)$ operations, and each operation can take up to $5^M \cdot \poly(M)$ time (due to calls to {\sc Algorithm 3}). Therefore, we can perform a $5^M \times 4^{2M/5} \times 5^M$ matrix multiply over fields with $2^{\poly(M)}$ elements, in $5^{2M} \cdot \poly(M)$ time. Setting $n = \log(M)/\log(5)$, the algorithm runs in $n^2 \cdot \poly(\log n)$ time for fields with $2^{\poly(\log n)}$ elements.

\end{document}